\newtheorem{theorem}{Theorem}[section]
\newtheorem{proposition}[theorem]{Proposition}
\newtheorem{remark}[theorem]{Remark}
\newenvironment{proof}[1][Proof]{\noindent\textbf{#1.} }{\ \rule{0.5em}{0.5em}}
\newcommand{\comment}[1]{}
\newcommand{\EEA}{\end{eqnarray}}
\newcommand{\BEA}{\begin{eqnarray}}
\providecommand{\keywords}[1]{\textbf{{Keywords.}} #1}
\begin{document}
\title{Parameter Estimation with Data-Driven Nonparametric Likelihood Functions}
\author[a]{Shixiao W. Jiang}
\author[a,b,c]{John Harlim \thanks{Corresponding author: jharlim@psu.edu}}
\affil[a]{Department of Mathematics, the Pennsylvania State University, 109 McAllister Building, University Park, PA 16802-6400, USA}
\affil[b]{Department of Meteorology and Atmospheric Science, the Pennsylvania State University, 503 Walker
Building, University Park, PA 16802-5013, USA}
\affil[c]{Institute for CyberScience, the Pennsylvania State University, 224B Computer Building, University Park, PA 16802, USA}
\date{\today}
\maketitle

\begin{abstract}
In this paper, we consider a surrogate modeling approach using a data-driven nonparametric likelihood function constructed on a manifold on which the data lie (or to which they are close). The proposed method represents the likelihood function using a spectral expansion formulation known as the kernel embedding of the conditional distribution. To respect the geometry of the data, we employ this spectral expansion using a set of data-driven basis functions obtained from the diffusion maps algorithm. The theoretical error estimate suggests that the error bound of the approximate data-driven likelihood function is independent of the variance of the basis functions, which allows us to determine the amount of training data for accurate likelihood function estimations. Supporting numerical results to demonstrate the robustness of the data-driven likelihood functions for parameter estimation are given on instructive examples involving stochastic and deterministic differential equations. When the dimension of the data manifold is strictly less than the dimension of the ambient space, we found that the proposed approach (which does not require the knowledge of the data manifold) is superior compared to likelihood functions constructed using standard parametric basis functions defined on the ambient coordinates. In an example where the data manifold is not smooth and unknown, the proposed method is more robust compared to an existing polynomial chaos surrogate model which assumes a parametric likelihood, the non-intrusive spectral projection. In fact, the estimation accuracy is comparable to direct MCMC estimates with only eight likelihood function evaluations that can be done offline as opposed to 4000 sequential function evaluations, whenever direct MCMC can be performed. A robust accurate estimation is also found using a likelihood function trained on statistical averages of the chaotic 40-dimensional Lorenz-96 model on a wide parameter domain.
\end{abstract}

\

\keywords{Bayesian inference, MCMC, diffusion maps, nonparametric likelihood function, surrogate modeling, reproducing kernel Hilbert space, kernel embedding of the conditional distribution}




\section{\label{sec:level1}Introduction}

Bayesian inference is a popular approach for solving inverse problems with
far-reaching applications, such as parameter estimation and uncertainty
quantification (see for example \cite{ks:05,sullivan2015,DS15}). In this
article, we will focus on a classical Bayesian inference problem of
estimating the conditional distribution of hidden parameters of dynamical
systems from a given set of noisy observations. In particular, let $\mathbf{x%
}(t;\bm{\theta })$ be a time-dependent state variable, which implicitly
depends on the parameter $\bm{\theta }$ through the following initial value
problem,
\begin{equation}
\mathbf{\dot{x}}=f(\mathbf{x},\bm{\theta }),\text{ \ \ }\mathbf{x}(0)=%
\mathbf{x}_{0}. \label{Eqn:ode}
\end{equation}%
Here, for any fixed $\bm{\theta }$, $f$ can be either deterministic or
stochastic. Our goal is to estimate the conditional distribution of $%
\bm{\theta }$, given discrete-time noisy observations $\mathbf{y}^{\dag }=\{%
\mathbf{y}_{1}^\dag,\ldots ,\mathbf{y}_{T}^\dag\}$, where:%
\begin{equation}
\mathbf{y}_{i}^\dag=g(\mathbf{x}_{i}^{\dag },\bm{\xi }_{i}),\text{ \ \ }%
i=1,\ldots ,T. \label{Eqn:yi_xi}
\end{equation}%
Here, $\mathbf{x}_{i}^{\dag }\equiv \mathbf{x}(t_{i};\bm{\theta }^{\dag })$
are the solutions of Equation (\ref{Eqn:ode}) for a specific hidden parameter $%
\bm{\theta }^{\dag }$, $g$ is the observation function, and $\bm{\xi }_{i}$
are unbiased noises representing the measurement or model error.
Although the proposed approach can also estimate the conditional density of the initial
condition $\mathbf{x}_{0}$, we will not explore this inference problem in
this article.

Given a prior density, $p_{0}(\bm{\theta })$, Bayes' theorem states that the
conditional distribution of the parameter $\bm{\theta}$ can be estimated as,%
\begin{equation}
p(\bm{\theta }|\mathbf{y}^{\dag })\propto p(\mathbf{y}^{\dag }|\bm{\theta }%
)p_{0}(\bm{\theta }), \label{Eqn:bayes}
\end{equation}%
where $p(\mathbf{y}^{\dag }|\bm{\theta })$ denotes the likelihood function
of $\bm{\theta }$ given the measurements $\mathbf{y}^\dag$ that {depend} on a hidden
parameter value $\bm{\theta }^{\dag }$ through (\ref{Eqn:yi_xi}). In most applications, the
statistics of the conditional distribution $p(\bm{\theta }|\mathbf{y}^{\dag
})$ are the quantity of interest. For example, one can use the mean
statistic as a point estimator of $\bm{\theta }^{\dag }$ and the higher order moments
for uncertainty quantification. To realize this goal, one draws samples of $%
p(\bm{\theta }|\mathbf{y}^{\dag })$ and estimates these statistics via
Monte Carlo averages over these samples. In this application, Markov Chain
Monte Carlo (MCMC) is a natural sampling method that plays a central role in
the computational statistics behind most Bayesian inference techniques \cite%
{brooks2011handbook}.

In our setup, we assume that for any $\bm{\theta}$, one can simulate:
\BEA
\mathbf{y}_{i}(\bm{\theta})=g(\mathbf{x}_{i}(\bm{\theta}),\bm{\xi }_{i}), i=1,\ldots ,T.\label{generalobsmodel}
\EEA
where $\mathbf{x}_{i}(\bm{\theta})\equiv \mathbf{x}(t_{i};\bm{\theta })$ denote
solutions to the initial value problem in Equation (\ref{Eqn:ode}). If the observation function has the following form,
\BEA
g(\mathbf{x}_{i}(\bm{\theta}),\bm{\xi }_{i}) = h(\mathbf{x}_{i}(\bm{\theta})) + \bm{\xi }_{i},\label{additiveobsmodel}
\EEA
where $\bm{\xi }_{i}$ are i.i.d.~noises,
then one can define the likelihood function {of} $\bm{\theta }$, $p(\mathbf{y}%
^{\dag }|\bm{\theta })$, as a product of the density functions of the noises $\bm{\xi }_{i}$,%
\begin{equation}
p(\mathbf{y}^{\dag }|\bm{\theta })\equiv \prod_{i=1}^{T}p(\bm{\xi }%
_{i})=\prod_{i=1}^{T}p(\mathbf{y}^{\dag }_{i}-h(\mathbf{x}_{i}(%
\bm{\theta }))). \label{Eqn:trad_like}
\end{equation}
When the observations are noise-less, $\bm{\xi }_{i}=0$, and the underlying system is an It\^o diffusion process with additive or multiplicative noises, one can use the Bayesian imputation to approximate the likelihood function \cite{golightly2010markov}. In both parametric approaches, it is worth noting that the dependence of the likelihood function on the parameter is implicit through the solutions $\mathbf{x}_{i}(\bm{\theta })$. Practically, this implicit dependence is the source of the computational burden in evaluating the likelihood function since it requires solving the dynamical model in \eqref{Eqn:ode} or every proposal in the MCMC chain. In the case when simulating $\mathbf{y}_i(\bm{\theta})$ is computationally feasible, but the likelihood function is intractable, then one can use, e.g., the Approximate Bayesian Computation (ABC) rejection algorithm \cite{tavare1997inferring,turner2012tutorial} for Bayesian inference. Basically, the ABC rejection scheme generates the samples of $p(\bm{\theta}|\mathbf{y}^\dagger)$ by comparing the simulated $\mathbf{y}_i(\bm{\theta})$ to the observed data, $\mathbf{y}_i^\dagger$, with an appropriate choice of metric comparison for each proposal $\bm{\theta} \sim p_0(\bm{\theta})$. In general, however, repetitive evaluation of \eqref{generalobsmodel} can be expensive when the dynamics in \eqref{Eqn:ode} is high-dimensional and/or stiff, or when $T$ is large, or when the function $g$ is an average of a long time series. Our goal is to address this situation in addition to not knowing the approximate likelihood function.

Broadly speaking, the existing approaches to overcome repetitive evaluation of \eqref{generalobsmodel} require knowledge of an approximate likelihood function such as in \eqref{Eqn:trad_like}. They can be grouped into two classes. 
The first class consists of methods that improve/accelerate the sampling strategy; for example, the Hamiltonian
Monte Carlo \cite{neal2011mcmc}, adaptive MCMC \cite{beck2002Bayesian}, and
delay rejection adaptive Metropolis \cite{haario2006dram}, just to name a
few. The second class consists of methods that avoid solving the dynamical
model in \eqref{Eqn:ode} when running the MCMC chain by replacing it with a
computationally more efficient model on a known parameter domain. This class of approach, also known as
\textit{surrogate modeling}, includes Gaussian process models \cite%
{higdon2004combining}, polynomial chaos \cite{mnr:07,mx:09}, and enhanced
model error \cite{huttunen2007approximation}; for example, the non-intrusive spectral projection \cite{mx:09} approximate $\mathbf{x}_{i}(\bm{\theta})$ in \eqref{Eqn:trad_like} with a polynomial chaos expansion.
Another related approach, which also avoids MCMC on top of integrating \eqref{Eqn:ode}, is to employ a
polynomial expansion on the likelihood function \cite{nagel2016spectral}.
This method represents the parametric likelihood function in %
\eqref{Eqn:trad_like} with orthonormal basis functions of a Hilbert space
weighted by the prior measure. This choice of basis functions makes the
computation for the statistics of the posterior density straightforward, and
thus, MCMC is not needed.

In this paper, we consider a surrogate modeling approach where a nonparametric likelihood function is
constructed using a data-driven spectral expansion. By nonparametric, we mean that our approach does not require any parametric form or assume any distribution as in \eqref{Eqn:trad_like}. Instead, we approximate the likelihood function using the kernel embedding of conditional distribution formulation introduced in \cite{Song2009hilbert,Song2013IEEE}. In {our application}, we will extend their formulation onto a Hilbert space weighted by the sampling measure of the training dataset as in \cite{Berry2017MWR}. We will rigorously {demonstrate} that using orthonormal basis functions of this data-driven weighted Hilbert space, the error bound is independent of the variance of the basis functions, which allows us to determine the amount of training data for accurate likelihood function estimations.

Computationally, assuming that the observations lie on (or close to) a Riemannian manifold $\mathcal{N}$ embedded in $\mathbb{R}^n$ with sampling density $q(\mathbf{y})$, we apply the diffusion maps algorithm \cite{Coifman2006ACHA,Berry2016ACHA} to approximate orthonormal basis functions $\varphi_k\in L^2(\mathcal{N},q)$ using the training dataset. Subsequently, a nonparametric likelihood function is represented as a weighted sum of these data-driven basis functions, where the coefficients are precomputed using the kernel embedding formulation. In this fashion, our approach respects the geometry of the data manifold. Using this nonparametric likelihood function, we then generate the MCMC chain for estimating the conditional distribution of hidden parameters. For the present work, our aim is to demonstrate that one can obtain accurate and robust parameter estimation by implementing a simple Bayesian inference algorithm, the Metropolis scheme, with the data-driven nonparametric likelihood function.
We should also point out that the present method is computationally feasible on low-dimensional parameter space, like any other surrogate modeling approach. Possible ways to overcome this dimensionality issue will be discussed.

This paper is organized as follows: In Section~\ref{sec:rkhs3}, we review the formulation of the reproducing kernel Hilbert space to estimate conditional density functions. In Section~\ref{subsec:err_est}, we
discuss the error estimate of the likelihood function approximation. In Section~%
\ref{sec:basis_func}, we discuss the construction of the analytic basis
functions for the Euclidean data manifold, as well as the data-driven basis
functions with the diffusion maps algorithm for data that lie on embedded Riemannian
geometry. In Section~\ref{sec:numerics}, we provide numerical results with
parameter estimation application on instructive examples. In one of the examples where the dynamical model is low-dimensional and the observation is in the form of \eqref{additiveobsmodel}, we compare the proposed approach with the direct MCMC and non-intrusive spectral projection method (both schemes use likelihood of the form \eqref{Eqn:trad_like}). In addition, we will also demonstrate the robustness of the proposed approach on an example where $g$ is a statistical average of a long-time trajectory (in which the likelihood is intractable) and the dynamical model has relatively high-dimensional chaotic dynamics such that repetitive evaluation of \eqref{generalobsmodel} is numerically expensive. In Section~\ref{sec:conclus}, we conclude this paper with a short summary. We accompany this paper with Appendices for treating large amount of data and more numerical results.

\section{\label{sec:rkhs3}Conditional Density Estimation via Reproducing Kernel Weighted Hilbert Spaces}

Let $\mathbf{y}\in \mathcal{N}\subseteq \mathbb{R}^{n}$,
where $\mathcal{N}$ is a smooth manifold with intrinsic dimension $d\leq n$. In practice, we measure the observations in the ambient coordinates and denote their components as $\mathbf{y}=\{y^{1},\ldots,y^{n}\}$.
For the parameter $\bm{\theta}$ space, $\mathcal{M}$ has a Euclidean structure with components, $\bm{\theta}=\{\theta ^{1},\ldots ,\theta ^{m}\}$, so $\mathcal{M}$ is assumed to be either an $m$-dimensional hyperrectangle or $\mathbb{R}^m$. For training, we are given $M$ number of training parameters $\left\{ %
\bm{\theta }_{j}\right\} _{j=1,\ldots ,M}=\{\theta _{j}^{1},\ldots ,\theta
_{j}^{m}\}_{j=1,\ldots ,M}$. For each training parameter $\bm{\theta }_{j}$,
we generate a discrete time series of length $N$ for noisy observation data $%
\left\{ \mathbf{y}_{i,j}\right\} =\{y_{i,j}^{1},\ldots ,y_{i,j}^{n}\}\in
\mathbb{R}^{n}$ for $i=1,\ldots ,N,$ and $j=1,\ldots ,M$. Here, the
sub-index $i$ and the sub-index $j$\ of $\mathbf{y}_{i,j}$ correspond to the
$i^{\text{th}}$ observation data for the $j^{\text{th}}$ training parameter $\bm{\theta }_{j}$.
Our goal {for training} is to learn the conditional density $p(\mathbf{y}|\bm{\theta }%
) $ from the training dataset
$\left\{ \bm{\theta }_{j}\right\} _{j=1,\ldots ,M}$ and $\left\{ \mathbf{y}%
_{i,j}\right\} _{j=1,\ldots ,M}^{i=1,\ldots ,N}$ for arbitrary $\mathbf{y}$ and $\bm{\theta}$ within the range of
$\left\{ \bm{\theta }_{j}\right\} _{j=1,\ldots ,M}$.

The construction of the conditional density $p(\mathbf{y}|\bm{\theta })$ is
based on a machine learning tool known as the kernel embedding of the
conditional distribution formulation introduced in \cite%
{Song2009hilbert,Song2013IEEE}. In {their} formulation, the representation of
conditional distributions is an element of a Reproducing Kernel Hilbert
Space (RKHS).

Recently, the representation using a Reproducing Kernel Weighted Hilbert Space (RKWHS) was introduced in \cite{Berry2017MWR}. That is, let $\Psi_k:=\psi_k q$ be the orthonormal basis of $L^2(\mathcal{N},q^{-1})$, where they are eigenbasis of an integral operator,
\BEA
\mathcal{K} f(\mathbf{y}) = \int_{\mathcal{N}} K(\mathbf{y},\mathbf{y}') f(\mathbf{y}') q^{-1}(\mathbf{y}')dV, \quad f\in L^2(\mathcal{N},q^{-1}),\label{HSintegral}
\EEA
that is, $\mathcal{K} \Psi_k = \lambda_k\Psi_k$.

In the case where $\mathcal{N}$ is compact and $\mathcal{K}$ is Hilbert--Schmidt, the kernel can be written as,
\BEA
K(\mathbf{y},\mathbf{y}') = \sum_{k=1}^\infty \lambda_k \Psi_k(\mathbf{y}) \Psi_k(\mathbf{y}'),\label{RKHSkernel}
\EEA
which converges in $L^2(\mathcal{N},q^{-1})$. Define the feature map $\Phi: \mathcal{M}\to \ell_2$ as,
\BEA
\Phi(\mathbf{y}) := \big\{\Phi_k(\mathbf{y}) = \sqrt{\lambda}_k \Psi_k(\mathbf{y}) : k\in\mathbb{Z}^+, \mathbf{y}\in\mathcal{N} \big\}.\label{feature}
\EEA
Therefore, any $f\in L^2(\mathcal{N},q^{-1})$ can be represented as $f = \sum_{k=1}^\infty \hat f_k \Psi_k = \sum_{k=1}^\infty \frac{\hat{f_k}}{\sqrt{\lambda_k}} \Phi_k$, where $\hat f_k = \langle f, \Psi_k\rangle_{{q}^{-1}} = \langle f, \psi_k\rangle :=\int_\mathcal{N} f(\mathbf{y}) \psi_k(\mathbf{y}) dV$ and provided that $\sum_k |\hat f_k|^2/\lambda_k <\infty$. If we define $\langle f,g \rangle_{\mathcal{H}_{{q}^{-1}}} := \sum_{k=1}^\infty \frac{\hat{f_k}\hat{g_k}}{\lambda_k}$, we can write the kernel in \eqref{RKHSkernel} as $K(\mathbf{y},\mathbf{y}') = \langle \Phi(\mathbf{y}),\Phi(\mathbf{y'})\rangle_{\mathcal{H}_{{q}^{-1}}}$. Throughout this manuscript, we denote the RKHS $\mathcal{H}_{q^{-1}}(\mathcal{N})$ generating the feature map $\Phi$ in \eqref{feature} as the space of square integrable functions with a reproducing property,
\BEA
f(\mathbf{y}) &=& \langle f, K(\cdot,\mathbf{y})\rangle_{\mathcal{H}_{{q}^{-1}}} := \sum_{k=1}^\infty \frac{\hat{f}_k \langle K(\cdot,\mathbf{y}),\Psi_k \rangle_{q^{-1}} }{\lambda_k}= \sum_{k=1}^\infty \frac{\hat{f}_k}{\sqrt{\lambda_k}} \Phi_k (\mathbf{y}) = \langle f, \Phi(\mathbf{y})\rangle_{\mathcal{H}_{{q}^{-1}}}, \quad\forall \mathbf{y}\in\mathcal{N},\nonumber
\EEA
induced by the basis of $\Psi_k\in L^2(\mathcal{N},q^{-1})$. While this definition deceptively suggests that $\mathcal{H}_{q^{-1}}(\mathcal{N})$ is similar to $L^2(\mathcal{N},q^{-1})$, we should also point out that the RKHS requires that the Dirac functional $\delta_x:\mathcal{H}_{q^{-1}}(\mathcal{N}) \to \mathbb{R}$ defined as $\delta_x f = f(x)$ be continuous. Since $L^2$ contains a class of functions, it is not an RKHS and $\mathcal{H}_{q^{-1}}(\mathcal{N}) \subset L^2(\mathcal{N},q^{-1})$. See, e.g., Chapter~4 of \cite{christmann2008support} for more details. Using the same definition, we denote $\mathcal{H}_{\tilde{q}^{-1}}(\mathcal{M})$ as the RKHS induced by orthonormal basis of $L^2(\mathcal{M},\tilde{q}^{-1})$ of functions of the parameter $\bm{\theta}$.

In this work, we will represent conditional density functions using the RKWHS induced by the data, where the bases will be constructed using the diffusion maps algorithm. The outcome of the training is an estimate of the conditional density, $\widehat{p}(\mathbf{y}|\bm{\theta })$, for arbitrary $\mathbf{y}$
and {$\bm{\theta }$ within the range of $\left\{ \bm{\theta }_{j}\right\} _{j=1,\ldots ,M}$}.

\subsection{\label{subsec:rhks_repre}Review of {Nonparametric} RKWHS Representation of
Conditional Density Functions}

We first review the RKWHS representation of conditional density functions deduced in \cite{Berry2017MWR}. Let $\psi _{k}\left( \mathbf{y}\right)$ be the orthonormal basis functions of $L^{2}\left( \mathcal{N},q\right) $, where $\mathcal{N}$ {contains} the domain of the training data $\mathbf{y}_{i,j}$, and the weight function
$q\left( \mathbf{y}\right) $ is defined with respect to the volume form
inherited by $\mathcal{N}$ from the ambient space $\mathbb{R}%
^{n}$. Let $\varphi _{l}\left( \bm{\theta }\right) \in L^{2}\left( \mathcal{M%
},\widetilde{q}\right) $\ be the orthonormal basis functions in {\ the}
parameter $\bm{\theta }$\ space,
where the training parameters are $%
\bm{\theta }_{j}\in \mathcal{M}$, with weight
function $\widetilde{q}\left( \bm{\theta }\right) $.
For finite modes, $k=1,\ldots ,K_{1}$, and $l=1,\ldots
,K_{2}$, a {nonparametric} RKWHS representation of the conditional density can be written as
follows \cite{Berry2017MWR}:%
\begin{equation}
\widehat{p}\left( \mathbf{y}|\bm{\theta }\right) =\sum_{k=1}^{K_{1}}\widehat{%
c}_{\mathbf{Y}|\bm{\theta },k}\psi _{k}\left( \mathbf{y}\right) q\left(
\mathbf{y}\right) , \label{Eqn:pyb_repre}
\end{equation}%
where $\widehat{p}\left( \mathbf{y}|\bm{\theta }\right) $ denotes an
estimate of the conditional density $p\left( \mathbf{y}|\bm{\theta }\right)\in \mathcal{H}_{q^{-1}}(\mathcal{N})$, and the expansion coefficients are defined as:
\begin{equation}
\widehat{c}_{\mathbf{Y}|\bm{\theta },k}=\sum_{l=1}^{K_{2}}\left[ \mathbf{C}_{%
\mathbf{Y}\bm{\Theta }}\mathbf{C}_{\bm{\Theta \Theta }}^{-1}\right]
_{kl}\varphi _{l}\left( \bm{\theta }\right) . \label{Eqn:ck_repre}
\end{equation}%
Here, the matrix $\mathbf{C}_{\mathbf{Y}\bm{\Theta }}$ is $K_{1}\times K_{2}$, and the
matrix $\mathbf{C}_{\bm{\Theta}\bm{\Theta }}$ is $K_{2}\times K_{2}$, whose
components can be approximated by Monte Carlo averages \cite{Berry2017MWR}:%
\begin{eqnarray}
\left[ \mathbf{C}_{\mathbf{Y}\bm{\Theta }}\right] _{ks} &=&\mathbb{E}_{\mathbf{%
Y}\bm{\Theta }}\left[ \psi _{k}\varphi _{s}\right] \approx \frac{1}{MN}%
\sum_{j=1}^{M}\sum_{i=1}^{N}\psi _{k}\left( \mathbf{y}_{i,j}\right) \varphi
_{s}\left( \bm{\theta }_{j}\right) , \label{Eqn:Cyb} \\
\left[ \mathbf{C}_{\bm{\Theta}\bm{\Theta }}\right] _{sl} &=&\mathbb{E}_{%
\bm{\Theta \Theta }}\left[ \varphi _{s}\varphi _{l}\right] \approx \frac{%
1}{M}\sum_{j=1}^{M}\varphi _{s}\left( \bm{\theta }_{j}\right) \varphi
_{l}\left( \bm{\theta }_{j}\right) , \label{Eqn:Cbb}
\end{eqnarray}%
where the expectations $\mathbb{E}$\ are taken with respect to the sampling
densities of the training dataset $\left\{ \mathbf{y}_{i,j}\right\}
_{j=1,\ldots ,M}^{i=1,\ldots ,N}$ and $\left\{ \bm{\theta }_{j}\right\}
_{j=1,\ldots ,M}$.
The equation for the
expansion coefficients in Eq. (\ref{Eqn:ck_repre})\ is based on the theory
of kernel embedding of the conditional distribution \cite%
{Song2009hilbert,Song2013IEEE,Berry2017MWR}. See \cite{Berry2017MWR}
for the detailed proof of Equations (\ref{Eqn:ck_repre})--(\ref{Eqn:Cbb}). Note
that for RKWHS representation, the weight functions $q$ and $\widetilde{q}$
can be different from the sampling densities of the training dataset $%
\left\{ \mathbf{y}_{i,j}\right\} _{j=1,\ldots ,M}^{i=1,\ldots ,N}$ and $%
\left\{ \bm{\theta }_{j}\right\} _{j=1,\ldots ,M}$, respectively. This
generalizes the representation in \cite{Berry2017MWR}, which sets the
weights $q$ and $\widetilde{q}$ to be the sampling densities of the training
dataset $\left\{ \mathbf{y}_{i,j}\right\} $ and $\left\{ \bm{\theta }%
_{j}\right\} $, respectively. {\color{black}If the assumption of $p\left( \mathbf{y}|\bm{\theta }\right)\in \mathcal{H}_{q^{-1}}(\mathcal{N})$ is not satisfied, then $\mathbf{C}_{\bm{\Theta}\bm{\Theta }}$ can be singular. In such a case, one can follow the suggestion in \cite{Song2009hilbert,Song2013IEEE} to regularize the linear regression in \eqref{Eqn:ck_repre} by replacing $\mathbf{C}_{\bm{\Theta \Theta }}^{-1}$ with $(\mathbf{C}_{\bm{\Theta \Theta }}+\lambda\bm{I}_{K_2})^{-1}$, where $\lambda\in\mathbb{R}$ is an empirically-chosen parameter and $\bm{I}_{K_2}$ denotes an identity matrix of size $K_2\times K_2$.}

Incidentally, it is worth mentioning that the conditional density in (\ref%
{Eqn:pyb_repre})\ and (\ref{Eqn:ck_repre})\ is represented as a regression
in infinite-dimensional spaces with basis functions $\psi _{k}\left( \mathbf{%
y}\right) $ and $\varphi _{l}\left( \bm{\theta }\right) $. The expression (%
\ref{Eqn:pyb_repre}) is a nonparametric representation in the sense that we
do not assume any particular distribution for the density function $p\left(
\mathbf{y}|\bm{\theta }\right) $. In this representation, only training
dataset $\left\{ \mathbf{y}_{i,j}\right\} _{j=1,\ldots ,M\text{ }%
}^{i=1,\ldots ,N}$ and $\left\{ \bm{\theta }_{j}\right\} _{j=1,\ldots ,M}$ {%
\ with appropriate basis functions are used to specify the coefficients $%
\widehat{c}_{\mathbf{Y}|\bm{\theta },k}$ and {the densities $\widehat{p}\left(
\mathbf{y}|\bm{\theta }\right) $}. In Section~\ref{sec:basis_func}, we
will demonstrate how to construct the appropriate basis completely from the training data, motivated by the theoretical result in Section~\ref{subsec:err_est} below.%
}

\subsection{\label{subsec:rhks_simple}Simplification of {the} Expansion
Coefficients (\protect\ref{Eqn:ck_repre})}

If the weight function $\widetilde{q}\left( \bm{\theta }\right) $ is the
sampling density of the training parameters $\left\{ \bm{\theta }%
_{j}\right\} _{j=1,\ldots ,M}$, the matrix $\mathbf{C}_{\bm{\Theta
\Theta }}$ in (\ref{Eqn:Cbb})\ can be simplified to a $K_{2}\times K_{2}$
identity matrix,%
\begin{equation}
\left[ \mathbf{C}_{\bm{\Theta \Theta }}\right] _{sl}=\mathbb{E}_{\bm{%
\Theta \Theta }}\left[ \varphi _{s}\varphi _{l}\right] =\int_{\mathcal{M}%
}\varphi _{s}(\bm{\theta }) \varphi _{l}(\bm{\theta }) \widetilde{q}(%
\bm{\theta }) d \bm{\theta } =\delta _{sl}.
\label{Eqn:Ctheta_delta}
\end{equation}%
where $\delta _{sl}$\ is the Kronecker delta
function. Here, the second equality follows from the weight $\widetilde{q}%
\left( \bm{\theta }\right) $ being the sampling density, and the third
equality follows from the orthonormality of $\varphi _{l}\left( \bm{\theta }%
\right) \in L^{2}\left( \mathcal{M},\widetilde{q}\right) $ with respect to
the weight function $\widetilde{q}$. Then, the expansion coefficients $%
\widehat{c}_{\mathbf{Y}|\bm{\theta },k}$\ in (\ref{Eqn:ck_repre}) can be
simplified to,
\begin{equation}
\widehat{c}_{\mathbf{Y}|\bm{\theta },k}=\sum_{l=1}^{K_{2}}\left[ \mathbf{C}_{%
\mathbf{Y}\bm{\Theta }}\right] _{kl}\varphi _{l}\left( \bm{\theta }\right) ,
\label{Eqn:coeff_simp}
\end{equation}%
with the $K_{1}\times K_{2}$\ matrix $\mathbf{C}_{\mathbf{Y}\bm{\Theta }}$
still given by (\ref{Eqn:Cyb}). In this work, we always take the weight
function $\widetilde{q}\left( \bm{\theta }\right) $ to be the sampling
density of the training parameters $\left\{ \bm{\theta }_{j}\right\}
_{j=1,\ldots ,M}$ for the simplification of the expansion coefficients $%
\widehat{c}_{\mathbf{Y}|\bm{\theta },k}$\ in (\ref{Eqn:coeff_simp}).
This assumption is not too restrictive since the training parameters are specified by the users.

Finally, the formula in (\ref{Eqn:pyb_repre})\ combined with the expansion
coefficients $\widehat{c}_{\mathbf{Y}|\bm{\theta },k}$\ in (\ref%
{Eqn:coeff_simp}) and the matrix $\mathbf{C}_{\mathbf{Y}\bm{\Theta }}$\ in (\ref%
{Eqn:Cyb}) forms an RKWHS representation of the conditional density $p\left(
\mathbf{y}|\bm{\theta }\right) $ for arbitrary $\mathbf{y}$ and $\bm{\theta }$.
Numerically, the training outcome is the matrix $\mathbf{C}_{\mathbf{%
Y}\bm{\Theta }}$ in (\ref{Eqn:Cyb}), and then, the conditional density $\widehat{p}%
\left( \mathbf{y}|\bm{\theta }\right) $ can be represented by (\ref%
{Eqn:pyb_repre}) with coefficients (\ref{Eqn:coeff_simp}) using the basis functions $%
\left\{ \psi _{k}\left( \mathbf{y}\right) \right\} _{k=1}^{K_{1}}$ and $%
\left\{ \varphi _{l}\left( \bm{\theta }\right) \right\} _{l=1}^{K_{2}}$.
From above, one can see that two important questions naturally arise as a
consequence of the usage of RKWHS representation: first, whether the
representation $\widehat{p}%
\left( \mathbf{y}|\bm{\theta }\right) $ in (\ref%
{Eqn:pyb_repre})\ is valid in estimating the conditional density $p\left(
\mathbf{y}|\bm{\theta }\right) $; second, how to construct the orthonormal
basis functions $\psi _{k}\left( \mathbf{y}\right) \in L^{2}\left( \mathcal{N%
},q\right) $ and $\varphi _{l}\left( \bm{\theta }\right) \in L^{2}\left(
\mathcal{M},\widetilde{q}\right) $. We will address these two important
questions {in the next two sections}. 

\section{\label{subsec:err_est}Error estimation}

In this section, we focus on the error estimation of the expansion
coefficient $\widehat{c}_{\mathbf{Y}|\bm{\theta }_{j},k}$ {and, later,} the
conditional density $\widehat{p}\left( \mathbf{y}|\bm{\theta }_{j}\right) $
at the training parameter $\bm{\theta }_{j}$. {The notation
$\widehat{c}_{\mathbf{Y}|\bm{\theta }_{j},k}$ is defined as the expansion
coefficient $\widehat{c}_{\mathbf{Y}|\bm{\theta },k}$\ in (\ref%
{Eqn:coeff_simp}), evaluated at the training parameter $\bm{\theta }_{j} $.} {\
Let the total number of basis functions in parameter space, $K_{2}$, be
equal to the total number of training parameters, $M$, that is,} $K_{2}=M$. 
{\color{black} Denoting $\bm{\Phi} = [\vec{\varphi}_1,\ldots,\vec{\varphi}_M]\in\mathbb{R}^{M\times M}$, where the 
$j^{\text{th}}$ component of $\vec{\varphi}_l$ approximates the basis function evaluated at the training data $\varphi_l(\bm{\theta}_j)$,
we can write the last equality in \eqref{Eqn:Ctheta_delta} in a compact form as ${M}^{-1}\bm{\Phi}^\top\bm{\Phi}=\bm{I}_{M}$. This also means that,
$M^{-1}\bm{\Phi}\bm{\Phi}^\top=\bm{I}_{M}$, the components of which are,}
\begin{equation}
\frac{1}{M}\sum_{l=1}^{M}\varphi _{l}\left( \bm{\theta }_{s}\right) \varphi
_{l}\left( \bm{\theta }_{j}\right) =\delta _{sj}. \label{Eqn:trans_orth}
\end{equation}%
For the training parameter $\bm{\theta }_{j}$, we can {simplify} the expansion
coefficient $\widehat{c}_{\mathbf{Y}|\bm{\theta }_{j},k}$ by substituting
Equation (\ref{Eqn:Cyb})\ into Equation (\ref{Eqn:coeff_simp}),
\begin{eqnarray}
\widehat{c}_{\mathbf{Y}|\bm{\theta }_{j},k} =\sum_{l=1}^{M}\left[ \mathbf{C%
}_{\mathbf{Y}\bm{\Theta }}\right] _{kl}\varphi _{l}\left( \bm{\theta }_{j}\right) 
\approx  \sum_{l=1}^{M}\left[ \frac{1}{MN}\sum_{s=1}^{M}\sum_{i=1}^{N}\psi
_{k}\left( \mathbf{y}_{i,s}\right) \varphi _{l}\left( \bm{\theta }%
_{s}\right) \right] \varphi _{l}\left( \bm{\theta }_{j}\right)  =\frac{1}{N}%
\sum_{i=1}^{N}\psi _{k}\left( \mathbf{y}_{i,j}\right) ,
\label{Eqn:coeffk_train}
\end{eqnarray}%
where the last equality follows from (\ref{Eqn:trans_orth}%
).

\subsection{Error estimation using arbitrary bases}

We first study the error estimation for the expansion coefficient
$\widehat{c}_{\mathbf{Y}|\bm{\theta }_{j},k}$. For each training parameter $%
\bm{\theta }_{j}$, the conditional density function $p(\mathbf{y}|%
\bm{\theta
}_{j})\in \mathcal{H}_{q^{-1}}\left( \mathcal{N}\right)$\ can be analytically
represented in the form,%
\begin{equation}
p(\mathbf{y}|\bm{\theta }_{j})=\sum_{k=1}^{\infty }c_{\mathbf{Y}|\bm{\theta }%
_{j},k}\psi _{k}(\mathbf{y})q(\mathbf{y}), \label{Eqn:com_repres}
\end{equation}%
due to the completeness of $L^{2}\left( \mathcal{N},q\right) $. Here, the
analytic expansion coefficient $c_{\mathbf{Y}|\bm{\theta }_{j},k}$\ is given
by,
\begin{equation}
c_{\mathbf{Y}|\bm{\theta }_{j},k}=\left\langle p\left( \cdot |\bm{\theta }%
_{j}\right) ,\psi _{k} \right\rangle . \label{Eqn:ck_analy}
\end{equation}%
Note that the estimator $\widehat{c}_{\mathbf{Y}|%
\bm{\theta }_{j},k}$ in (\ref{Eqn:coeffk_train}) is a Monte Carlo
approximation of the expansion coefficient $c_{\mathbf{Y}|\bm{\theta }%
_{j},k} $ in (\ref{Eqn:ck_analy}), i.e.,
\begin{equation}
c_{\mathbf{Y}|\bm{\theta }_{j},k}=\left\langle p\left( \cdot|\bm{\theta
}_{j}\right) ,\psi _{k} \right\rangle =\mathbb{E}_{%
\mathbf{Y}|\bm{\theta }_{j}}[\psi _{k}\left( \mathbf{Y}\right) ]\approx
\frac{1}{N}\sum_{i=1}^{N}\psi _{k}\left( \mathbf{y}_{i,j}\right) ,
\label{Eqn:understand}
\end{equation}%
where the last equality follows from the training dataset $\left\{ \mathbf{y}%
_{i,j}\right\} _{i=1,\ldots ,N}$, which admits a conditional density $p\left(
\mathbf{y}|\bm{\theta }_{j}\right) $.
{Note also that in the following theorems and propositions, the condition $p(\mathbf{y}|\bm{\theta
}_{j})\in \mathcal{H}_{q^{-1}}\left( \mathcal{N}\right)$ is required.
In Section \ref{sec:torus3d1p} and Appendix~\ref{sec:dsmalN_KDE}, we will provide an example to discuss this condition in
detail.}
Next, we provide the
unbiasedness and consistency of the estimator $\widehat{c}_{\mathbf{Y}|%
\bm{\theta }_{j},k} $.

\begin{proposition}
\label{thm:cons} Let $\left\{ \mathbf{y}_{i,j}\right\} _{i=1,\ldots ,N}$ be
i.i.d.~samples of $\mathbf{Y}|\bm{\theta }_{j}$ with density $p(\mathbf{y}|%
\bm{\theta }_{j})$. Let $p(\mathbf{y}|\bm{\theta }_{j})\in \mathcal{H}_{q^{-1}}\left( \mathcal{N}\right)$ and $\left\{ \psi _{k}(\mathbf{y})\right\} $
form a complete orthonormal basis of $L^{2}\left( \mathcal{N},q\right) $.
Assume that Var$_{\mathbf{Y}|\bm{\theta }_{j}}\left[ \psi _{k}\left( \mathbf{%
Y}\right) \right] $ is finite, then $\widehat{c}_{\mathbf{Y}|\bm{\theta }%
_{j},k}$ defined in (\ref{Eqn:coeffk_train}) is an unbiased and consistent
estimator for $c_{\mathbf{Y}|\bm{\theta }_{j},k}$ in (\ref{Eqn:ck_analy}).
\end{proposition}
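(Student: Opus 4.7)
The plan is to recognize the estimator $\widehat{c}_{\mathbf{Y}|\bm{\theta}_j,k}=\frac{1}{N}\sum_{i=1}^N\psi_k(\mathbf{y}_{i,j})$ as a standard sample mean of the i.i.d.\ random variables $Z_i:=\psi_k(\mathbf{y}_{i,j})$, and then apply elementary Monte Carlo theory. The assumption $p(\cdot|\bm{\theta}_j)\in\mathcal{H}_{q^{-1}}(\mathcal{N})$ together with $\{\psi_k\}$ being a complete orthonormal basis of $L^2(\mathcal{N},q)$ is what makes the inner product $\langle p(\cdot|\bm{\theta}_j),\psi_k\rangle$ well-defined and identifies it with $\mathbb{E}_{\mathbf{Y}|\bm{\theta}_j}[\psi_k(\mathbf{Y})]$; this is the chain of equalities already displayed in \eqref{Eqn:understand}.

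First, I would establish unbiasedness. Using the i.i.d.\ assumption and linearity of expectation,
\begin{equation*}
\mathbb{E}\left[\widehat{c}_{\mathbf{Y}|\bm{\theta}_j,k}\right]=\frac{1}{N}\sum_{i=1}^{N}\mathbb{E}_{\mathbf{Y}|\bm{\theta}_j}\!\left[\psi_k(\mathbf{y}_{i,j})\right]=\mathbb{E}_{\mathbf{Y}|\bm{\theta}_j}\!\left[\psi_k(\mathbf{Y})\right]=\int_{\mathcal{N}}\psi_k(\mathbf{y})\,p(\mathbf{y}|\bm{\theta}_j)\,dV=c_{\mathbf{Y}|\bm{\theta}_j,k},
\end{equation*}
where the last equality is just \eqref{Eqn:ck_analy} rewritten in integral form. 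The only subtlety here is that $\psi_k$ must be integrable against $p(\cdot|\bm{\theta}_j)$, which follows because finite variance implies finite first moment (Cauchy--Schwarz), and the inclusion $p(\cdot|\bm{\theta}_j)\in\mathcal{H}_{q^{-1}}(\mathcal{N})\subset L^2(\mathcal{N},q^{-1})$ together with $\psi_k\in L^2(\mathcal{N},q)$ ensures the pairing is well-defined.

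Second, I would establish consistency as a direct consequence of the weak law of large numbers. Since the $Z_i$ are i.i.d.\ with common mean $c_{\mathbf{Y}|\bm{\theta}_j,k}$ and, by assumption, finite variance $\sigma_k^2:=\mathrm{Var}_{\mathbf{Y}|\bm{\theta}_j}[\psi_k(\mathbf{Y})]$, Chebyshev's inequality gives
\begin{equation*}
\mathbb{P}\!\left(\left|\widehat{c}_{\mathbf{Y}|\bm{\theta}_j,k}-c_{\mathbf{Y}|\bm{\theta}_j,k}\right|>\varepsilon\right)\leq \frac{\sigma_k^2}{N\varepsilon^2}\xrightarrow[N\to\infty]{}0
\end{equation*}
for every $\varepsilon>0$, which is convergence in probability. (If one wants almost sure convergence, Kolmogorov's strong law applies under the same finite-variance assumption.) Hence $\widehat{c}_{\mathbf{Y}|\bm{\theta}_j,k}$ is a consistent estimator of $c_{\mathbf{Y}|\bm{\theta}_j,k}$.

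There is really no hard part to this proposition; it is a textbook Monte Carlo statement, and the only thing worth flagging is the verification that the two expressions $\langle p(\cdot|\bm{\theta}_j),\psi_k\rangle$ and $\mathbb{E}_{\mathbf{Y}|\bm{\theta}_j}[\psi_k(\mathbf{Y})]$ coincide under the stated function-space hypotheses. The substantive content of the approach, as the authors emphasize in the surrounding discussion, will come later when the \emph{variance} $\sigma_k^2$ (treated here as a black box finiteness assumption) is bounded explicitly in terms of the chosen data-driven basis, since that is what governs the required sample size $N$.
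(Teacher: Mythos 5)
Your proof is correct and follows essentially the same route as the paper's: unbiasedness from linearity of expectation over the i.i.d.\ samples, then the variance bound $\mathrm{Var}(\widehat{c}_{\mathbf{Y}|\bm{\theta}_j,k})=\frac{1}{N}\mathrm{Var}_{\mathbf{Y}|\bm{\theta}_j}[\psi_k(\mathbf{Y})]$ combined with Chebyshev's inequality for consistency. The extra remarks on integrability of $\psi_k$ against $p(\cdot|\bm{\theta}_j)$ are a harmless refinement of what the paper leaves implicit.
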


\begin{proof}
The estimator $\widehat{c}_{\mathbf{Y}|\bm{%
\theta }_{j},k}$ is unbiased,%
\begin{equation}
\mathbb{E}\widehat{c}_{\mathbf{Y}|\bm{\theta }_{j},k}=\frac{1}{N}%
\sum_{i=1}^{N}\mathbb{E}_{\mathbf{Y}|\bm{\theta }_{j}}\psi _{k}\left(
\mathbf{Y}_{i,j}\right) =c_{\mathbf{Y}|\bm{\theta }_{j},k}.
\label{Eqn:unbiased}
\end{equation}%
where the expectation is taken with respect to the conditional density $%
p\left( \mathbf{y}|\bm{\theta }_{j}\right) $. If the variance, Var$_{\mathbf{%
Y}|\bm{\theta }_{j}}\left[ \psi _{k}\left( \mathbf{Y}\right) \right] $, is
finite, then the variance of $\widehat{c}_{\mathbf{Y}|\bm{\theta }_{j},k} $
converges to zero as the number of training data $N\rightarrow \infty $,%
\begin{equation}
\text{Var}\left( \widehat{c}_{\mathbf{Y}|\bm{\theta }_{j},k}\right) =\frac{1%
}{N}\text{Var}_{\mathbf{Y}|\bm{\theta }_{j}}\left[ \psi _{k}\left( \mathbf{Y}%
\right) \right] \rightarrow 0,\text{ \ as }N\rightarrow \infty .
\label{Eqn:var_conv}
\end{equation}%
Then, we can obtain that the estimator $\widehat{c}_{\mathbf{Y}|\bm{%
\theta }_{j},k}$ is consistent,
\begin{equation*}
P_{r}\left( \left\vert \widehat{c}_{\mathbf{Y}|\bm{\theta }_{j},k}-c_{%
\mathbf{Y}|\bm{\theta }_{j},k}\right\vert >\varepsilon \right) \leq \frac{%
\text{Var}\left( \widehat{c}_{\mathbf{Y}|\bm{\theta }_{j},k}\right) }{%
\varepsilon ^{2}}\rightarrow 0\text{, as }N\rightarrow \infty \text{, for }%
\forall \varepsilon >0,
\end{equation*}%
where Chebyshev's inequality has been used.
\end{proof}

If the estimator of $p(\mathbf{y}|\bm{\theta }_{j})$ is given by the
representation with an infinite number of basis functions, $\widetilde{p}(%
\mathbf{y}|\bm{\theta }_{j})=\sum_{k=1}^{\infty }\widehat{c}_{\mathbf{Y}|%
\bm{\theta }_{j},k}\psi _{k}(\mathbf{y})q(\mathbf{y})$, then the estimator $%
\widetilde{p}(\mathbf{y}|\bm{\theta }_{j})$ is pointwise unbiased for every
observation $\mathbf{y}$. However, in the numerical implementation, only a finite
number of basis functions can be used in the representation (\ref{Eqn:pyb_repre}%
). Numerically, the estimator of $p(\mathbf{y}|\bm{\theta }_{j})$ is given
by the representation (\ref{Eqn:pyb_repre}) at the training parameter $%
\bm{\theta }_{j}$,%
\begin{equation*}
\widehat{p}(\mathbf{y}|\bm{\theta }_{j})=\sum_{k=1}^{K_{1}}\widehat{c}_{%
\mathbf{Y}|\bm{\theta }_{j},k}\psi _{k}(\mathbf{y})q(\mathbf{y}).
\end{equation*}%
Then, the pointwise error of the estimator, $\widehat{e}(\mathbf{y}|%
\bm{\theta }_{j})$, can be defined as:
\begin{eqnarray}
\widehat{e}(\mathbf{y}|\bm{\theta }_{j}) &\equiv &p(\mathbf{y}|\bm{\theta }%
_{j})-\widehat{p}(\mathbf{y}|\bm{\theta }_{j}) \notag \\
&=&\sum_{k=K_{1}+1}^{\infty }c_{\mathbf{Y}|\bm{\theta }_{j},k}\psi _{k}(%
\mathbf{y})q(\mathbf{y})+\sum_{k=1}^{K_{1}}\left( c_{\mathbf{Y}|\bm{\theta }%
_{j},k}-\widehat{c}_{\mathbf{Y}|\bm{\theta }_{j},k}\right) \psi _{k}(\mathbf{%
y})q(\mathbf{y}).\quad \label{Eqn:p_err2}
\end{eqnarray}%
It can be seen that the estimator $\widehat{p}(\mathbf{y}|\bm{\theta }_{j})$
is no longer unbiased or consistent due to the first error term in (\ref%
{Eqn:p_err2}) induced by modes $k>K_{1}$. Next, we estimate the expectation and the variance of an $L^{2}$-norm error
of $\widehat{p}(\mathbf{y}|\bm{\theta }_{j})$ for all training parameters $%
\bm{\theta }_{j}$.

\begin{theorem}
\label{thm:error1} Let the condition in Proposition \ref{thm:cons} be
satisfied for all $\left\{ \bm{\theta }_{j}\right\} _{j=1,\ldots ,M}$, and
Var$_{\mathbf{Y}|\bm{\theta }_{j}}\left[ \psi _{k}\left( \mathbf{Y}\right) %
\right] $ be finite for all $k\in \mathbb{N}^{+}$. Define the $L^{2}$-norm
error,%
\begin{equation}
\widehat{e}_{L^{2}}=\left( \sum_{j=1}^{M}\int_{\mathcal{N}}\left\vert
\widehat{e}(\mathbf{y}|\bm{\theta }_{j})\right\vert ^{2}q^{-1}(\mathbf{y}%
)dV \right) ^{1/2}, \label{Eqn:L2_err}
\end{equation}%
where $\widehat{e}(\mathbf{y}|\bm{\theta }_{j})$ is the pointwise error in (%
\ref{Eqn:p_err2}), and $dV$\ is the volume form
inherited by the manifold $\mathcal{N}$ from the ambient space $\mathbb{R}%
^{n}$ \cite{Berry2016ACHA,Berry2017MWR}. Then,
\begin{eqnarray}
\mathbb{E}\left[ \widehat{e}_{L^{2}}\right] &\leq &\left(
\sum_{j=1}^{M}\sum_{k=K_{1}+1}^{\infty }\left[ c_{\mathbf{Y}|\bm{\theta }%
_{j},k}\right] ^{2}+\frac{1}{N}\sum_{j=1}^{M}\sum_{k=1}^{K_{1}}\text{Var}_{%
\mathbf{Y}|\bm{\theta }_{j}}\left[ \psi _{k}(\mathbf{Y})\right] \right) ^{%
\frac{1}{2}}, \label{Eqn:EL2} \\
\text{Var}\left[ \widehat{e}_{L^{2}}\right] &\leq
&\sum_{j=1}^{M}\sum_{k=K_{1}+1}^{\infty }\left[ c_{\mathbf{Y}|\bm{\theta }%
_{j},k}\right] ^{2}+\frac{1}{N}\sum_{j=1}^{M}\sum_{k=1}^{K_{1}}\text{Var}_{%
\mathbf{Y}|\bm{\theta }_{j}}\left[ \psi _{k}(\mathbf{Y})\right],
\label{Eqn:varL2}
\end{eqnarray}%
where $\mathbb{E}$ and Var are defined with respect to the joint distribution of
$p(\mathbf{y}|\bm{\theta }_{j})$ for all $\left\{ \bm{\theta }_{j}\right\}
_{j=1,\ldots ,M}$. Moreover, $\mathbb{E}\left[ \widehat{e}_{L^{2}}\right] $
and Var$\left[ \widehat{e}_{L^{2}}\right] $ converge to zero as $%
K_{1}\rightarrow \infty $ and then $N\rightarrow \infty $, where the
limiting operations of $K_{1}$ and $N$ {\ are not commutative.}
\end{theorem}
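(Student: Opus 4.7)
The single algebraic idea that drives everything is Parseval. Since $\{\psi_k\}$ is orthonormal in $L^2(\mathcal{N},q)$, one has $\int_{\mathcal{N}}(\psi_k q)(\psi_l q)\,q^{-1}\,dV=\int_{\mathcal{N}}\psi_k\psi_l\,q\,dV=\delta_{kl}$, so $\{\psi_k q\}$ is orthonormal in $L^2(\mathcal{N},q^{-1})$. The pointwise error (\ref{Eqn:p_err2}) is already written as a $\psi_k q$-expansion, which will immediately collapse the integral in (\ref{Eqn:L2_err}) to a sum of squared coefficients. This is the whole engine of the proof.

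Concretely, introduce the abbreviation $a_{k,j}:=c_{\mathbf{Y}|\bm{\theta}_j,k}$ for $k>K_1$ and $a_{k,j}:=c_{\mathbf{Y}|\bm{\theta}_j,k}-\widehat{c}_{\mathbf{Y}|\bm{\theta}_j,k}$ for $k\le K_1$. Then (\ref{Eqn:p_err2}) reads $\widehat{e}(\mathbf{y}|\bm{\theta}_j)=\sum_{k=1}^\infty a_{k,j}\psi_k(\mathbf{y})q(\mathbf{y})$, and Parseval yields
$$\int_{\mathcal{N}}|\widehat{e}(\mathbf{y}|\bm{\theta}_j)|^2 q^{-1}(\mathbf{y})\,dV=\sum_{k=1}^\infty a_{k,j}^2, \qquad \widehat{e}_{L^2}^{\,2}=\sum_{j=1}^M\sum_{k=1}^\infty a_{k,j}^2.$$
Next I take expectation termwise. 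For $k>K_1$ the coefficient $a_{k,j}=c_{\mathbf{Y}|\bm{\theta}_j,k}$ is deterministic, contributing $c_{\mathbf{Y}|\bm{\theta}_j,k}^2$. For $k\le K_1$ the estimator $\widehat{c}_{\mathbf{Y}|\bm{\theta}_j,k}$ is, by Proposition \ref{thm:cons}, unbiased with variance $N^{-1}\mathrm{Var}_{\mathbf{Y}|\bm{\theta}_j}[\psi_k(\mathbf{Y})]$, so $\mathbb{E}[a_{k,j}^2]=N^{-1}\mathrm{Var}_{\mathbf{Y}|\bm{\theta}_j}[\psi_k(\mathbf{Y})]$. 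Summing over $j,k$ shows that $\mathbb{E}[\widehat{e}_{L^2}^{\,2}]$ equals the RHS of (\ref{Eqn:varL2}).

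The two stated bounds are then immediate: (\ref{Eqn:EL2}) follows from Jensen's inequality $\mathbb{E}[\widehat{e}_{L^2}]\le\sqrt{\mathbb{E}[\widehat{e}_{L^2}^{\,2}]}$, and (\ref{Eqn:varL2}) follows from the trivial $\mathrm{Var}[X]\le\mathbb{E}[X^2]$. For the convergence, handle the two contributions separately. The truncation tail $\sum_{k>K_1}c_{\mathbf{Y}|\bm{\theta}_j,k}^2$ is the Parseval remainder of $p(\cdot|\bm{\theta}_j)\in\mathcal{H}_{q^{-1}}(\mathcal{N})\subset L^2(\mathcal{N},q^{-1})$, so it vanishes as $K_1\to\infty$. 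The Monte-Carlo term $N^{-1}\sum_{k\le K_1}\mathrm{Var}_{\mathbf{Y}|\bm{\theta}_j}[\psi_k(\mathbf{Y})]$ is, for each \emph{finite} $K_1$, a finite sum of finite variances divided by $N$, hence vanishes as $N\to\infty$.

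The non-commutativity caveat deserves one line: if one sent $K_1\to\infty$ first with $N$ fixed, the variance contribution would become $N^{-1}\sum_{k=1}^\infty\mathrm{Var}_{\mathbf{Y}|\bm{\theta}_j}[\psi_k(\mathbf{Y})]$, for which no finiteness is guaranteed under the stated hypotheses; the iterated limit therefore has to be read as ``choose $K_1$ large to kill the tail, then choose $N=N(K_1)$ large enough to kill the $O(1/N)$ Monte-Carlo piece.'' I do not foresee any real obstacle: the whole argument is Parseval plus the $L^2$ theory of the Monte-Carlo estimator already supplied by Proposition \ref{thm:cons}. The only place to be careful is in identifying the source of randomness (the samples $\mathbf{y}_{i,j}$, with $\bm{\theta}_j$ fixed) so that the $a_{k,j}$ for $k>K_1$ are genuinely constants under $\mathbb{E}$.
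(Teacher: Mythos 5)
Your proposal is correct and follows essentially the same route as the paper: Jensen's inequality, orthonormality of $\{\psi_k\}$ collapsing the $L^2$ integral to a sum of squared coefficients, the unbiasedness and variance formula from Proposition \ref{thm:cons}, and the iterated limit (first $K_1\to\infty$ via the Parseval tail, then $N\to\infty$). The only cosmetic difference is that you make the variance bound explicit via $\mathrm{Var}[X]\le\mathbb{E}[X^2]$ applied to the exact identity $\mathbb{E}\big[\widehat{e}_{L^2}^{\,2}\big]=\text{RHS of (\ref{Eqn:varL2})}$, where the paper only remarks that the argument is ``similar''; this is a faithful filling-in, not a different proof.
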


\begin{proof}
The expectation of $\widehat{e}_{L^{2}}$ can be
estimated as,%
\begin{eqnarray}
\left( \mathbb{E}\left[ \widehat{e}_{L^{2}}\right] \right) ^{2} &\leq &%
\mathbb{E}\left[ \sum_{j=1}^{M}\int_{\mathcal{N}}\left\vert \widehat{e}(%
\mathbf{y}|\bm{\theta }_{j})\right\vert ^{2}q^{-1}(\mathbf{y})dV \right] \notag \\
&=&\mathbb{E}\Big[ \sum_{j=1}^{M}\int_{\mathcal{N}}\Big(
\sum_{k=K_{1}+1}^{\infty }c_{\mathbf{Y}|\bm{\theta }_{j},k}\psi _{k}(\mathbf{%
y})  \nonumber \\ &&+\sum_{k=1}^{K_{1}}\left( c_{\mathbf{Y}|\bm{\theta }_{j},k}-\widehat{c}_{%
\mathbf{Y}|\bm{\theta }_{j},k}\right) \psi _{k}(\mathbf{y})\Big) ^{2}q(%
\mathbf{y})dV \Big],\quad\quad \label{Eqn:err_Jensen}
\end{eqnarray}%
where the first inequality follows from Jensen's inequality. Here, the
randomness comes from the estimators $\widehat{c}_{\mathbf{Y}|\bm{\theta }%
_{j},k}$. Due to the orthonormality of basis functions, $\psi _{k} \in L^{2}\left( \mathcal{N},q\right) $, the error
estimation in (\ref{Eqn:err_Jensen}) can be simplified as,
\begin{eqnarray}
\left( \mathbb{E}\left[ \widehat{e}_{L^{2}}\right] \right) ^{2} &\leq
&\sum_{j=1}^{M}\sum_{k=K_{1}+1}^{\infty }\left[ c_{\mathbf{Y}|\bm{\theta }%
_{j},k}\right] ^{2}+\sum_{j=1}^{M}\sum_{k=1}^{K_{1}}\mathbb{E}_{\mathbf{Y}|%
\bm{\theta }_{j}}\left[ \left( c_{\mathbf{Y}|\bm{\theta }_{j},k}-\widehat{c}%
_{\mathbf{Y}|\bm{\theta }_{j},k}\right) ^{2}\right] , \notag \\
&=&\sum_{j=1}^{M}\sum_{k=K_{1}+1}^{\infty }\left[ c_{\mathbf{Y}|\bm{\theta }%
_{j},k}\right] ^{2}+\frac{1}{N}\sum_{j=1}^{M}\sum_{k=1}^{K_{1}}\text{Var}_{%
\mathbf{Y}|\bm{\theta }_{j}}\left[ \psi _{k}(\mathbf{Y})\right] ,
\label{Eqn:err_var_bound}
\end{eqnarray}%
where the inequality follows from the linearity of expectation, and the
equality follows from $\mathbb{E}\widehat{c}_{\mathbf{Y}|\bm{\theta }%
_{j},k}=c_{\mathbf{Y}|\bm{\theta }_{j},k}$ in (\ref{Eqn:unbiased})\ and Var$%
\left( \widehat{c}_{\mathbf{Y}|\bm{\theta }_{j},k}\right) =\frac{1}{N}$Var$_{%
\mathbf{Y}|\bm{\theta }_{j}}\left[ \psi _{k}\left( \mathbf{Y}\right) \right]
$ in (\ref{Eqn:var_conv}). In error estimation (\ref%
{Eqn:err_var_bound}), the first term is deterministic, and the second term is
{random}. We have so far proven that the expectation $\mathbb{E}\left[
\widehat{e}_{L^{2}}\right] $ is bounded by (\ref{Eqn:EL2}). Similarly, we
can prove that the variance Var$\left[ \widehat{e}_{L^{2}}\right] $ is
bounded by (\ref{Eqn:varL2}).

Next, we prove that the expectation $\mathbb{E}\left[ \widehat{e}_{L^{2}} \right] $ converges to zero as $K_{1}\rightarrow \infty $ and then $%
N\rightarrow \infty $. Parseval's theorem states that:%
\begin{equation}
\sum_{k=1}^{\infty }\left[ c_{\mathbf{Y}|\bm{\theta }_{j},k}\right]
^{2}=\int_{\mathcal{N}}p(\mathbf{y}|\bm{\theta }_{j})^{2}q^{-1}\left(
\mathbf{y}\right) dV <+\infty ,\text{ \ for all }%
\bm{\theta }_{j}, \label{Eqn:Parseval}
\end{equation}%
where the inequality follows from $p(\mathbf{y}|\bm{\theta }_{j})\in \mathcal{H}_{q^{-1}}(\mathcal{N}) \subset
L^{2}\left( \mathcal{N},q^{-1}\right) $ for all $\bm{\theta }_{j}$. For $%
\forall \varepsilon >0$, there exists an integer $\widetilde{K}_{1}\left( %
\bm{\theta }_{j}\right) $ for $\bm{\theta }_{j}$\ such that:%
\begin{equation}
\sum_{k=\widetilde{K}_{1}\left( \bm{\theta }_{j}\right) }^{\infty }\left[ c_{%
\mathbf{Y}|\bm{\theta }_{j},k}\right] ^{2}<\frac{\varepsilon }{2M}.
\label{Eqn:ck_high_bound}
\end{equation}%
Let:
\begin{equation}
K_{1}=\max \left\{ \widetilde{K}_{1}\left( \bm{\theta }_{1}\right) ,\ldots ,%
\widetilde{K}_{1}\left( \bm{\theta }_{M}\right) \right\} , \label{Eqn:K1_op}
\end{equation}%
then the first term in (\ref{Eqn:err_var_bound}) can be bounded by: $%
\varepsilon /2$,%
\begin{equation}
\sum_{j=1}^{M}\sum_{k=K_{1}+1}^{\infty }\left[ c_{\mathbf{Y}|\bm{\theta }%
_{j},k}\right] ^{2}<\frac{\varepsilon }{2}. \label{Eqn:bound_first}
\end{equation}%
Since the variance Var$_{\mathbf{Y}|\bm{\theta }_{j}}\left[ \psi _{k}(%
\mathbf{Y})\right] $ is assumed to be finite for all $k$ and $j$, there exists a constant $D>0$
such that Var$_{\mathbf{Y}|\bm{\theta }%
_{j}}\left[ \psi _{k}(\mathbf{Y})\right] $ can be bounded above by this constant $%
D $,
\begin{equation}
\text{Var}_{\mathbf{Y}|\bm{\theta }_{j}}\left[ \psi _{k}(\mathbf{Y})\right]
\leq D,\text{ \ \ for all }k=1,\ldots ,K_{1}\text{ and }j=1,\ldots ,M.
\label{Eqn:vara_bound}
\end{equation}%
Then, for $\forall \varepsilon >0$, there exists a sufficiently large number
of training data:%
\begin{equation}
N_{\min }=\frac{2MK_{1}D}{\varepsilon }. \label{Eqn:Cond_N}
\end{equation}%
such that whenever $N>N_{\min }$, then:

\begin{equation}
\frac{1}{N}\sum_{j=1}^{M}\sum_{k=1}^{K_{1}}\text{Var}_{\mathbf{Y}|\bm{%
\theta }_{j}}\left[ \psi _{k}(\mathbf{Y})\right] <\frac{\varepsilon }{2}.
\label{Eqn:err_second_bound}
\end{equation}%
Since $\varepsilon>0$ is arbitrary, by substituting Equation (\ref%
{Eqn:bound_first}) and Equation (\ref{Eqn:err_second_bound}) into the error estimation (\ref%
{Eqn:err_var_bound}), we obtain that $\mathbb{E}\left[
\widehat{e}_{L^{2}}\right] $ converges to zero as $K_{1}\rightarrow \infty $
and then $N\rightarrow \infty $.
{Note that, we first take $K_{1}\rightarrow \infty $ to ensure the first error
term in (\ref%
{Eqn:err_var_bound}) vanishes and then take $N\rightarrow \infty $ to ensure
the second error term in (\ref%
{Eqn:err_var_bound}) vanishes. Thus, the limiting operations of $K_{1}$ and $N$
are not commutative.}
Similarly, we can prove that the variance
Var$\left[ \widehat{e}_{L^{2}}\right] $ converges to zero as $%
K_{1}\rightarrow \infty $ and then $N\rightarrow \infty $.
\end{proof}

Theorem \ref{thm:error1} provides the intuition for specifying the number
of training observation data $N$ to achieve any desired accuracy $\varepsilon>0$
given fixed $M$-parameters and sufficiently large $K_1$. It can be seen
from Theorem \ref{thm:error1} that numerically, the expectation $\mathbb{E}%
\left[ \widehat{e}_{L^{2}}\right] $ in (\ref{Eqn:EL2}) and the variance Var$%
\left[ \widehat{e}_{L^{2}}\right] $\ in (\ref{Eqn:varL2}) can be bounded
within arbitrarily small $\varepsilon $ by choosing sufficiently large $%
K_{1} $ and $N$. Specifically, there are two error terms in Equations (\ref%
{Eqn:EL2}) and (\ref{Eqn:varL2}), the first being deterministic, induced by
modes $k>K_{1}$, and the second {random}, induced by modes $k\leq K_{1}$.
For the deterministic term ($k>K_{1}$), the error can be bounded by $%
\varepsilon /2$ by choosing sufficiently large $K_{1}$ satisfying (\ref%
{Eqn:K1_op}).
In our implementation, the number of basis functions $K_{1}$
is empirically chosen to be large enough in order to make the first error
term in Equations (\ref{Eqn:EL2}) and (\ref{Eqn:varL2}) for $k>K_{1}$ as small as
possible.

For the random term ($k\leq K_{1}$), the error\ can be bounded by $%
\varepsilon /2$ by\ choosing sufficiently large $N$ satisfying $N>N_{\min
}=2MK_{1}D/\varepsilon $ (Equation (\ref{Eqn:Cond_N})). The minimum number of
training data, $N_{\min }$, depends on the upper bound of Var$_{\mathbf{Y}|%
\bm{\theta }_{j}}\left[ \psi _{k}(\mathbf{Y})\right] $, $D$. However, the
upper bound $D$ may not exist for some problems. This means that for some
problems, the assumption for finite Var$_{\mathbf{Y}|\bm{\theta }_{j}}\left[
\psi _{k}(\mathbf{Y})\right] $ in Theorem \ref{thm:error1} may not be
satisfied. Even if the upper bound $D$ exists, it is typically not easy to
evaluate its value given an arbitrary basis $\psi _{k}\in L^{2}\left(
\mathcal{N},q\right) $ since one needs to evaluate Var$_{\mathbf{Y}|%
\bm{\theta }_{j}}\left[ \psi _{k}(\mathbf{Y})\right] $ for all $k=1,\ldots
,K_{1}$ and $j=1,\ldots ,M$. Note that Theorem \ref{thm:error1} holds true
for {\ representing} $\widehat{p}(\mathbf{y}|\bm{\theta }_{j})$ with an
arbitrary basis $\left\{\psi _{k}\right\}\in L^{2}\left( \mathcal{N},q\right) $ as long as
$p(\mathbf{y}|\bm{\theta }_{j})\in \mathcal{H}_{q^{-1}}\left( \mathcal{N}\right) $ for all $\theta _{j}$ and Var$_{\mathbf{Y}|\bm{\theta }_{j}}%
\left[ \psi _{k}(\mathbf{Y})\right] $ is finite for $k\leq K_{1}$ and $%
j=1,\ldots ,M$. {\color{black}Next, we provide several cases in which Var$_{%
\mathbf{Y}|\bm{\theta }_{j}}\left[ \psi _{k}(\mathbf{Y})\right] $ is finite
for $k$ and $j$. }

\begin{remark}
\label{rmk:fin1} If the weighted Hilbert space $L^{2}\left( \mathcal{N}%
,q\right) $\ is defined on a compact manifold $\mathcal{N}$ and has smooth basis functions $\psi_k$, then
Var$_{\mathbf{Y}|\bm{\theta }_{j}}\left[ \psi _{k}\left( \mathbf{Y}\right) %
\right] $ is finite for a fixed $k\in \mathbb{N}^{+}$ and $j=1,\ldots ,M$.
{\normalfont This assertion follows from the fact that continuous functions on a compact manifold are bounded.
The smoothness assumption is not unreasonable in many applications since the orthonormal basis functions are
obtained as solutions of an eigenvalue problem of a self-adjoint second-order elliptic differential operator.
Note that the bound here is not necessarily a uniform bound of $%
\psi _{k}\left( \mathbf{Y}\right) $ for all $k\in \mathbb{N}^{+}$ and $%
j=1,\ldots ,M$. As long as Var$_{\mathbf{Y}|\bm{\theta }_{j}}\left[ \psi
_{k}\left( \mathbf{Y}\right) \right] $ is finite for $k\leq K_{1}$ and $%
j=1,\ldots ,M$, the upper bound $D$ is finite, and then, Theorem \ref%
{thm:error1} holds. }
\end{remark}

\begin{remark}
\label{rmk:finite2} If the manifold $\mathcal{N}$ is a {hyperrectangle} in $\mathbb{%
R}^{n}$\ and the weight $q$ is a uniform distribution on $\mathcal{N}$, then
Var$_{\mathbf{Y}|\bm{\theta }_{j}}\left[ \psi _{k}\left( \mathbf{Y}\right) %
\right] $ is finite for a fixed $k\in \mathbb{N}^{+}$ and $j=1,\ldots ,M$.
{\normalfont This assertion is an immediate consequence of Remark \ref%
{rmk:fin1}.}
\end{remark}

In Theorem \ref{thm:error1}, $N_{\min }$ depends on the upper bound of Var$_{%
\mathbf{Y}|\bm{\theta }_{j}}\left[ \psi _{k}(\mathbf{Y})\right] $, $D$, as
shown in (\ref{Eqn:Cond_N}). In the following, we will specify a Hilbert
space, referred to as a data-driven Hilbert space, so that $N_{\min }$ is
independent of $D$ and is only dependent of $M$, $K_{1}$, and $\varepsilon$.
As a consequence, we can easily determine
how many training data $N$ for bounding the second error
term in Equations (\ref{Eqn:EL2}) and (\ref{Eqn:varL2}).

\subsection{Error estimation using a data-driven Hilbert space}

We now turn to the discussion {of} a specific data-driven Hilbert space $%
L^{2}\left( \mathcal{N},\overline{q}\right) $\ {\ with orthonormal basis
functions $\overline{\psi }_{k}$. Our goal is to specify the weight function
$\overline{q}$ such that the minimum number of training} data, $\overline{N}%
_{\min }$, only depends on $M$, $K_{1}$, and $\varepsilon $. Here, the
overline $\overline{\cdot }$\ corresponds to the specific data-driven
Hilbert space. The second error term in (\ref{Eqn:err_var_bound}) can be
further estimated as,
\begin{eqnarray}
\frac{1}{N}\sum_{j=1}^{M}\sum_{k=1}^{K_{1}}\text{Var}_{\mathbf{Y}|\bm{\theta
}_{j}}\left[ \overline{\psi }_{k}(\mathbf{Y})\right] &\leq& \frac{1}{N}%
\sum_{k=1}^{K_{1}}\sum_{j=1}^{M}\mathbb{E}_{\mathbf{Y}|\bm{\theta }_{j}}%
\left[ \overline{\psi }_{k}^{2}(\mathbf{Y})\right] \nonumber \\ &=&\frac{M}{N}%
\sum_{k=1}^{K_{1}}\int_{\mathcal{N}}\overline{\psi }_{k}^{2}(\mathbf{y}%
)\left( \frac{1}{M}\sum_{j=1}^{M}p(\mathbf{y}|\bm{\theta }_{j})\right)
dV, \quad\quad \label{Eqn:improv_sec}
\end{eqnarray}%
where the basis functions are substituted with the specific $\overline{\psi }%
_{k}$. Notice that $\overline{\psi }_{k}(\mathbf{y})$ are orthonormal basis
functions with respect to the weight $\overline{q}$\ in $L^{2}\left( \mathcal{N},\overline{q}\right) $. One specific choice of
the weight function $\overline{q}\left( \mathbf{y}\right) $ is:
\begin{equation}
\overline{q}\left( \mathbf{y}\right) =\frac{1}{M}\sum_{j=1}^{M}p(\mathbf{y}|%
\bm{\theta }_{j}), \label{Eqn:q_bar_y}
\end{equation}%
where $\overline{q}\left( \mathbf{y}\right) $ has been normalized, i.e.,%
\begin{equation}
\int_{\mathcal{N}}\overline{q}\left( \mathbf{y}\right) dV =\int_{\mathcal{N}}\frac{1}{M}\sum_{j=1}^{M}p(\mathbf{y}|\bm{\theta }%
_{j})dV =1. \label{Eqn:norm_q}
\end{equation}%
{For the data-driven Hilbert space, we always use a normalized weight function $\overline{q}\left( \mathbf{y}\right) $.}
{\ Note that the weight function $\overline{q}\left( \mathbf{y}\right) $
in (\ref{Eqn:q_bar_y}) is a discretization of the marginal density
function of $\mathbf{Y}$ with $\bm{\Theta }$ marginalized out,%
\begin{equation}
\overline{q}\left( \mathbf{y}\right) =\frac{1}{M}\sum_{j=1}^{M}p(\mathbf{y}|%
\bm{\theta }_{j})\approx \int_{\mathcal{M}}p(\mathbf{y}|\bm{\theta })%
\widetilde{q}\left( \bm{\theta }\right) d \bm{\theta } =\int_{%
\mathcal{M}}p(\mathbf{y},\bm{\theta })d \bm{\theta } ,
\label{Eqn:margi}
\end{equation}%
where $p(\mathbf{y},\bm{\theta })$ denotes the joint density of $(\mathbf{Y},%
\bm{\Theta })$. Essentially, the weight function} $\overline{q}\left(
\mathbf{y}\right) $ in \eqref{Eqn:q_bar_y} is the sampling density of all
the training data $\left\{ \mathbf{y}_{i,j}\right\} _{j=1,\ldots
,M}^{i=1,\ldots ,N}$, which motivates us to refer to $L^{2}\left(
\mathcal{N},\overline{q}\right) $ as a data-driven Hilbert space.



Next, we prove that by specifying the data-driven basis functions $\overline{%
\psi }_{k}\in L^{2}\left( \mathcal{N},\overline{q}\right) $, the variance Var%
$_{\mathbf{Y}|\bm{\theta }_{j}}\left[ \overline{\psi} _{k}(\mathbf{Y})\right] $ is finite
for all $k\in \mathbb{N}^{+}$ and $j=1,\ldots ,M$. Subsequently, we
can obtain the minimum number of training data, $\overline{N}_{\min }$,
to only depend on $M$, $K_{1}$, and $\varepsilon $, {\ such that} the
expectation $\mathbb{E}\left[ \widehat{e}_{L^{2}}\right] $ in (\ref{Eqn:EL2}%
) and the variance Var$\left[ \widehat{e}_{L^{2}}\right] $\ in (\ref%
{Eqn:varL2}) are bounded above by any $\varepsilon >0$.

\begin{proposition}
\label{lem:var} Let $\left\{ \mathbf{y}_{i,j}\right\} _{i=1,\ldots ,N}$ be
i.i.d.~samples of $\mathbf{Y}|\bm{\theta }_{j}$ with density $p(\mathbf{y}|%
\bm{\theta }_{j})$. Let $p(\mathbf{y}|\bm{\theta }_{j})\in \mathcal{H}_{q^{-1}}(
\mathcal{N})$ for all $\left\{ \bm{\theta }%
_{j}\right\} _{j=1,\ldots ,M}$ {\ with weight $\overline{q}$\ specified in (%
\ref{Eqn:q_bar_y}), and let }$\left\{ {\overline{\psi }_{k}}\right\} ${\ be
the complete orthonormal basis of $L^{2}\left( \mathcal{N},\overline{q}%
\right) $. }Then, Var$_{\mathbf{Y}|\bm{\theta }_{j}}\left[ \overline{\psi }%
_{k}\left( \mathbf{Y}\right) \right] $ is finite for all $k\in \mathbb{N}^{+}
$ and $j=1,\ldots ,M$. \
\end{proposition}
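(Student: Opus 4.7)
The plan is to exploit the fact that the weight $\overline{q}$ is built as a convex combination of the conditional densities $p(\cdot|\bm{\theta}_j)$, so that each individual conditional density is pointwise dominated (up to the factor $M$) by $\overline{q}$ itself. This pointwise domination will let me transfer the unit normalization of $\overline{\psi}_k$ in $L^2(\mathcal{N},\overline{q})$ into a uniform $L^2$ bound under every $p(\cdot|\bm{\theta}_j)$.

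First I would write out what orthonormality in $L^2(\mathcal{N},\overline{q})$ says: $\int_{\mathcal{N}} \overline{\psi}_k^2(\mathbf{y})\,\overline{q}(\mathbf{y})\,dV = 1$. Substituting the definition \eqref{Eqn:q_bar_y} yields
\begin{equation*}
\frac{1}{M}\sum_{j=1}^M \int_{\mathcal{N}} \overline{\psi}_k^2(\mathbf{y})\, p(\mathbf{y}|\bm{\theta}_j)\,dV = 1,
\end{equation*}
so $\sum_{j=1}^M \mathbb{E}_{\mathbf{Y}|\bm{\theta}_j}[\overline{\psi}_k^2(\mathbf{Y})] = M$. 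Since every summand is nonnegative (the integrand $\overline{\psi}_k^2\, p(\cdot|\bm{\theta}_j)$ is nonnegative), each term is bounded by the full sum, giving $\mathbb{E}_{\mathbf{Y}|\bm{\theta}_j}[\overline{\psi}_k^2(\mathbf{Y})] \leq M$ for every $j=1,\ldots,M$ and every $k\in\mathbb{N}^+$.

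Finally, I would bound the variance by the second moment:
\begin{equation*}
\text{Var}_{\mathbf{Y}|\bm{\theta}_j}\left[\overline{\psi}_k(\mathbf{Y})\right] \leq \mathbb{E}_{\mathbf{Y}|\bm{\theta}_j}\left[\overline{\psi}_k^2(\mathbf{Y})\right] \leq M,
\end{equation*}
which is finite and, crucially, independent of both $k$ and $j$. This uniform bound is exactly the ingredient the subsequent argument needs in order to replace the abstract constant $D$ in \eqref{Eqn:vara_bound} with the explicit constant $M$, so that $\overline{N}_{\min}$ depends only on $M$, $K_1$, and $\varepsilon$.

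There is essentially no main obstacle: the proof hinges on a single observation, namely that $p(\cdot|\bm{\theta}_j) \leq M\,\overline{q}$ pointwise by construction of $\overline{q}$, together with the nonnegativity of $\overline{\psi}_k^2$. The only thing one must be careful about is to invoke orthonormality in the $\overline{q}$-weighted inner product (not in any other weighted space) and to ensure that the decomposition of the integral into a sum over $j$ is justified, which is immediate from the finiteness of the sum in \eqref{Eqn:q_bar_y}.
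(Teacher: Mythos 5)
Your proof is correct and follows essentially the same route as the paper's: bound the variance by the second moment, observe that averaging $\mathbb{E}_{\mathbf{Y}|\bm{\theta}_j}[\overline{\psi}_k^2]$ over $j$ reproduces the integral of $\overline{\psi}_k^2$ against $\overline{q}$, and invoke orthonormality in $L^2(\mathcal{N},\overline{q})$ to get the value $1$. The only cosmetic difference is that you run the chain starting from the normalization and extract the explicit per-term bound $M$, whereas the paper states the averaged bound $\frac{1}{M}\sum_j \mathrm{Var}_{\mathbf{Y}|\bm{\theta}_j}[\overline{\psi}_k(\mathbf{Y})] \leq 1$ and concludes finiteness from nonnegativity of the summands.
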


\begin{proof}
Notice that for all $k\in \mathbb{N}^{+}$, we have:%
\begin{eqnarray}
\frac{1}{M}\sum_{j=1}^{M}\text{Var}_{\mathbf{Y}|\bm{\theta }_{j}}\left[
\overline{\psi }_{k}(\mathbf{Y})\right] &\leq& \frac{1}{M}\sum_{j=1}^{M}%
\mathbb{E}_{\mathbf{Y}|\bm{\theta }_{j}}\left[ \overline{\psi }_{k}^{2}(%
\mathbf{Y})\right] =\int_{\mathcal{N}}\overline{\psi }_{k}^{2}(\mathbf{y})%
\frac{1}{M}\sum_{j=1}^{M}p(\mathbf{y}|\bm{\theta }_{j})dV \nonumber \\ &=&\int_{\mathcal{N}}\overline{\psi }_{k}^{2}(\mathbf{y})\overline{q}\left(
\mathbf{y}\right) dV =1, \label{Eqn:each_bound}
\end{eqnarray}%
where the last equality follows directly from the orthonormality of basis functions $%
\overline{\psi }_{k}(\mathbf{y})\in L^{2}\left( \mathcal{N},\overline{q}%
\right) $. From Equation (\ref{Eqn:each_bound}), we can obtain that for all
$k\in \mathbb{N}^{+}$ and $j=1,\ldots ,M$, the variance Var$_{\mathbf{Y}|%
\bm{\theta }_{j}}\left[ \overline{\psi }_{k}\left( \mathbf{Y}\right) \right]
$ is finite.
\end{proof}

\begin{theorem}
\label{thm:var2} Given the same hypothesis as in Proposition \ref{lem:var},
then:
\begin{eqnarray}
\mathbb{E}\left[ \widehat{e}_{L^{2}}\right] &\leq &\left(
\sum_{j=1}^{M}\sum_{k=K_{1}+1}^{\infty }\left[ c_{\mathbf{Y}|\bm{\theta }%
_{j},k}\right] ^{2}+\frac{MK_{1}}{N}\right) ^{\frac{1}{2}},
\label{Eqn:modi_EL2} \\
\text{Var}\left[ \widehat{e}_{L^{2}}\right] &\leq
&\sum_{j=1}^{M}\sum_{k=K_{1}+1}^{\infty }\left[ c_{\mathbf{Y}|\bm{\theta }%
_{j},k}\right] ^{2}+\frac{MK_{1}}{N}. \label{Eqn:var_opt_bound}
\end{eqnarray}%
where $\widehat{e}_{L^{2}}$ is defined by (\ref{Eqn:L2_err}) and $c_{\mathbf{%
Y}|\bm{\theta }_{j},k}$ is given by (\ref{Eqn:ck_analy}). Moreover, $\mathbb{%
E}\left[ \widehat{e}_{L^{2}}\right] $ and Var$\left[ \widehat{e}_{L^{2}}%
\right] $ converge to zero as $K_{1}\rightarrow \infty $ and then $%
N\rightarrow \infty $, where the {\ limiting} operations of $K_{1}$ and $N$ {%
\ are not commutative.}
\end{theorem}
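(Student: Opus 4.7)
The plan is to observe that this theorem is essentially a specialization of Theorem \ref{thm:error1} to the data-driven weight $\overline{q}$ introduced in \eqref{Eqn:q_bar_y}, so the work is to replace the generic variance bound $D$ with the sharper bound furnished by Proposition \ref{lem:var}. I would start from the generic inequalities \eqref{Eqn:EL2}--\eqref{Eqn:varL2} (which hold for any orthonormal basis in a weighted $L^{2}$ space, including $\{\overline{\psi}_k\}$), and then focus on controlling the random term $\frac{1}{N}\sum_{j=1}^M\sum_{k=1}^{K_1}\text{Var}_{\mathbf{Y}|\bm{\theta}_j}[\overline{\psi}_k(\mathbf{Y})]$.

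For that term I would reuse the calculation in \eqref{Eqn:improv_sec}: dropping the squared mean gives $\text{Var}_{\mathbf{Y}|\bm{\theta}_j}[\overline{\psi}_k(\mathbf{Y})]\le \mathbb{E}_{\mathbf{Y}|\bm{\theta}_j}[\overline{\psi}_k^{\,2}(\mathbf{Y})]$, and then interchanging sums over $j$ and $k$ and pulling out the factor $M$ turns $\frac{1}{M}\sum_{j=1}^{M} p(\mathbf{y}|\bm{\theta}_j)$ into exactly the weight $\overline{q}(\mathbf{y})$ of \eqref{Eqn:q_bar_y}. The orthonormality of $\overline{\psi}_k$ in $L^{2}(\mathcal{N},\overline{q})$ then contributes a factor $1$ for each $k$, yielding
\begin{equation*}
\sum_{j=1}^{M}\sum_{k=1}^{K_{1}}\text{Var}_{\mathbf{Y}|\bm{\theta}_j}[\overline{\psi}_k(\mathbf{Y})] \;\le\; MK_{1},
\end{equation*}
which is precisely the content of Proposition \ref{lem:var} summed over $k\le K_1$. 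Substituting into \eqref{Eqn:EL2} and \eqref{Eqn:varL2} yields \eqref{Eqn:modi_EL2} and \eqref{Eqn:var_opt_bound}.

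For the convergence statement I would mimic the two-stage argument from Theorem \ref{thm:error1}. Fix $\varepsilon>0$. First, Parseval's identity \eqref{Eqn:Parseval} (which uses $p(\cdot|\bm{\theta}_j)\in\mathcal{H}_{q^{-1}}(\mathcal{N})\subset L^{2}(\mathcal{N},\overline{q}^{-1})$) lets me pick $K_1$ large enough, as in \eqref{Eqn:ck_high_bound}--\eqref{Eqn:K1_op}, so that the deterministic tail sum is less than $\varepsilon/2$. Then, with that $K_1$ fixed, I can pick $N_{\min}=2MK_1/\varepsilon$ so that the random term $MK_1/N<\varepsilon/2$ whenever $N>N_{\min}$. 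The non-commutativity of the limits is the same as in Theorem \ref{thm:error1}: $K_1$ must be chosen before $N$ since $N_{\min}$ is an explicit increasing function of $K_1$. The variance bound follows identically.

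The only place that could trip one up is making sure $\overline{q}$ as defined in \eqref{Eqn:q_bar_y} is genuinely the weight with respect to which $\overline{\psi}_k$ are orthonormal, and that $p(\cdot|\bm{\theta}_j)$ still lies in $\mathcal{H}_{\overline{q}^{-1}}(\mathcal{N})$ (or at least in $L^{2}(\mathcal{N},\overline{q}^{-1})$) so that Parseval applies for the $c_{\mathbf{Y}|\bm{\theta}_j,k}$ computed against the new basis; both are built into the hypothesis of Proposition \ref{lem:var}. Beyond that the argument is a direct plug-in into Theorem \ref{thm:error1}, and the crucial improvement is that $D$ has been eliminated, so $N_{\min}$ now depends only on $M$, $K_1$, and $\varepsilon$ as advertised.
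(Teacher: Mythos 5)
Your proposal is correct and follows essentially the same route as the paper: the bounds are obtained by substituting the variance estimate of Proposition \ref{lem:var} (equivalently, the computation in \eqref{Eqn:improv_sec} together with orthonormality in $L^{2}(\mathcal{N},\overline{q})$) into the generic bounds \eqref{Eqn:EL2}--\eqref{Eqn:varL2} of Theorem \ref{thm:error1}, and the convergence argument with $K_1$ chosen first and then $N>2MK_1/\varepsilon$ is exactly the paper's two-stage argument. Your closing remark about verifying that $\overline{q}$ is the weight for which $\{\overline{\psi}_k\}$ is orthonormal and that $p(\cdot|\bm{\theta}_j)$ lies in the corresponding RKWHS is a fair reading of the hypothesis of Proposition \ref{lem:var}, so no gap remains.
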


\begin{proof}
According to Proposition \ref{lem:var}, we have
that the variance Var$_{\mathbf{Y}|\bm{\theta }_{j}}\left[ \overline{\psi }%
_{k}\left( \mathbf{Y}\right) \right] $ is finite for all $k\in \mathbb{N}%
^{+} $ and $j=1,\ldots ,M$.
According to Proposition \ref{thm:cons}, since Var$_{\mathbf{Y}|\bm{\theta }_{j}}\left[ \overline{\psi }%
_{k}\left( \mathbf{Y}\right) \right] $ is finite,
we have that the estimator $\widehat{c}_{\mathbf{Y}|\bm{\theta }_{j},k}$\ is
both unbiased and consistent for $c_{\mathbf{Y}|\bm{\theta }_{j},k}$. All conditions in Theorem \ref%
{thm:error1} are satisfied, so that we can obtain the error estimation of the expectation $\mathbb{%
E}\left[ \widehat{e}_{L^{2}}\right] $ in (\ref{Eqn:EL2}) and the error
estimation of the variance Var$\left[ \widehat{e}_{L^{2}}\right] $\ in (\ref%
{Eqn:varL2}). Moreover, the second error term in $\mathbb{%
E}\left[ \widehat{e}_{L^{2}}\right] $ (\ref{Eqn:EL2}) and Var$\left[ \widehat{e}_{L^{2}}\right]$ (%
\ref{Eqn:varL2}) can be both bounded by Equation (\ref{Eqn:each_bound}), so that we can obtain
our error estimations (\ref{Eqn:modi_EL2}) and (\ref{Eqn:var_opt_bound}).

{\ Choose $K_1$ as in (\ref{Eqn:K1_op}) such that the first term in (\ref%
{Eqn:modi_EL2}) and (\ref{Eqn:var_opt_bound}) is bounded by $\varepsilon/2$.}
The second term $MK_{1}/N$\ in (\ref{Eqn:modi_EL2}) and (\ref%
{Eqn:var_opt_bound}) can be bounded by an arbitrarily small $\varepsilon /2$%
\ if the number of training data $N$ satisfies:%
\begin{equation}
N>\overline{N}_{\min }\equiv 2MK_{1}/\varepsilon . \label{Eqn:best_N}
\end{equation}%
Then, both the expectation $\mathbb{E}\left[ \widehat{e}_{L^{2}}\right] $\
and the variance Var$\left[ \widehat{e}_{L^{2}}\right] $\ can be bounded by $%
\varepsilon$. {\ Since $\varepsilon>0$ is arbitrary, the proof is complete.}
\end{proof}

Recall that by applying arbitrary basis functions to represent $\widehat{p%
}\left( \mathbf{y}|\bm{\theta }\right) $ in (\ref{Eqn:pyb_repre}), it is
typically not easy to evaluate the upper bound $D$ in (\ref{Eqn:vara_bound}),
{which implies that it is} not easy to determine how many observation
data, $N_{\min }$ (Equation (\ref{Eqn:Cond_N})), should be {used} for training.
However, by applying the data-driven basis functions $\overline{\psi }_{k}$ %
to represent $\widehat{p}\left( \mathbf{y}|\bm{\theta }\right) $ in (\ref%
{Eqn:pyb_repre}), the minimum number of training data, $\overline{N}_{\min }$
(Equation (\ref{Eqn:best_N})), becomes independent of $D$, and is only dependent
of $M$, $K_{1}$, and $\varepsilon $, as can be seen from Theorem \ref%
{thm:var2}. To let the error induced by modes $k$ $\leq K_{1}$ be smaller
than a desired $\varepsilon /2$, we can easily determine how many
observation data, $\overline{N}_{\min }$ (Equation (\ref{Eqn:best_N})), should be
used for training. In this sense, the specific data-driven Hilbert
space $L^{2}\left( \mathcal{N},\overline{q}\right) $ with the corresponding
basis functions $\overline{\psi }_{k}$ is a good choice for representing (\ref%
{Eqn:pyb_repre}).

{We have so far theoretically verified the validity of the representation (%
\ref{Eqn:pyb_repre})\ in estimating the conditional density $p(\mathbf{y}|\bm{\theta }_{j})$
(Theorem \ref%
{thm:error1}). In particular, using the data-driven basis $%
\overline{\psi }_{k}\in L^{2}\left( \mathcal{N},\overline{q}%
\right) $, we can easily control the error of conditional density estimation by specifying
the number of training data $N$ (Theorem \ref{thm:var2}).
To summarize,} the training procedures can be outlined as follows:

\hangafter1 \hangindent2.85em \setlength{\parindent}{0.0em} \hypertarget{1A}{%
(\textit{1-A})} Generate the training dataset, including training parameters $%
\left\{ \bm{\theta }_{j}\right\} _{j=1,\ldots ,M}$ and observations $\left\{
\mathbf{y}_{i,j}\right\} _{j=1,\ldots ,M}^{i=1,\ldots ,N}$. The length of
training data $N$ is empirically determined based on the criteria (\ref%
{Eqn:Cond_N}) or (\ref{Eqn:best_N}).

\hangafter1 \hangindent2.85em \setlength{\parindent}{0em} \hypertarget{1B}{(%
\textit{1-B})} Construct the basis functions for parameter $\bm{\theta }$
space and for observation $\mathbf{y}$ space by using the training dataset.
For $\mathbf{y}$ space, we need to
empirically choose the number of basis functions $K_{1}$ to let the error
induced by modes $k>K_{1}$ \ be as small as possible.
{In particular, for the data-driven Hilbert space, we will provide a
detailed discussion on how to estimate the data-driven basis functions of
$L^2(\mathcal{N},\overline{q})$ with the sampling density $\overline{q}$ from the training data in the following Section \ref{sec:basis_func}.
Note that this basis estimation will introduce additional errors beyond the results
in this section, which assumed the data-driven basis functions to be given.}

\hangafter1 \hangindent2.85em \setlength{\parindent}{0em} \hypertarget{1C}{(%
\textit{1-C})} Train the matrix $\mathbf{C}_{\mathbf{Y}\bm{\Theta }} $ in (%
\ref{Eqn:Cyb}) and then estimate the conditional density $\widehat{p}\left(
\mathbf{y}|\bm{\theta }\right) $ by using the nonparametric RKWHS
representation (\ref{Eqn:pyb_repre}) with the expansion coefficients $%
\widehat{c}_{\mathbf{Y}|\bm{\theta },k}$ (\ref{Eqn:coeff_simp}).

\hangafter1 \hangindent2.85em \setlength{\parindent}{0em} \hypertarget{1D}{(%
\textit{1-D})} Finally, for new observations $\mathbf{y}^{\dag }=\{\mathbf{y}%
_{1}^{\dag },\ldots ,\mathbf{y}_{T}^{\dag }\}$, define the likelihood function
as a product of the conditional densities of new observations $\mathbf{y}^{\dag
} $ given any $\bm{\theta }$,%
\begin{equation}
p(\mathbf{y}^{\dag }|\bm{\theta })\equiv \prod_{t=1}^{T}\widehat{p}(\mathbf{y%
}_{t}^{\dag }|\bm{\theta }). \label{Eqn:new_like}
\end{equation}

\noindent Next, we address the second important question for the RKWHS
representation (Procedure \hyperlink{1B}{(\textit{1-B})}): how to construct
basis functions for $\bm{\theta }$ and $\mathbf{y}$. {Especially, we focus on
how to construct the data-driven basis functions for $\mathbf{y}$.}

\section{\label{sec:basis_func}Basis functions}

\setlength{\parindent}{1em}

This section will be organized as follows. In Section \ref{subsec:param}, we discuss
how to employ analytical basis functions for parameter $\bm{\theta }$ and
for observation $\mathbf{y}$ as in the usual polynomial chaos expansion.
In Section \ref{subsec:VBDM}, we discuss how to construct the data-driven basis
functions $\overline{\psi }_{k}\in L^{2}\left( \mathcal{N},%
\overline{q}\right) $\ with $\mathcal{N}$ being the manifold of the training
dataset $\left\{ \mathbf{y}_{i,j}\right\} _{j=1,\ldots ,M}^{i=1,\ldots ,N}$
and the weight $\overline{q}$ by (\ref{Eqn:q_bar_y}) being the sampling
density of $\left\{ \mathbf{y}_{i,j}\right\} _{j=1,\ldots ,M}^{i=1,\ldots
,N} $.

\subsection{\label{subsec:param}Analytic basis functions}

If no prior information about the parameter space other than its domain is known, we
can assume that the training parameters\ are uniformly distributed on the
parameter $\bm{\theta}$ space. In particular, we choose $M $ number of well-sampled
training parameters $\left\{ \bm{\theta }_{j}\right\}
_{j=1,\ldots ,M}=\{\theta _{j}^{1},\ldots ,\theta _{j}^{m}\}_{j=1,\ldots ,M}$
in an $m$-dimensional box $\mathcal{M}\subset \mathbb{R}^{m}$,
\begin{equation}
\mathcal{M}\equiv [\theta _{\min }^{1},\theta _{\max }^{1}]\times \cdots
\times \lbrack \theta _{\min }^{m},\theta _{\max }^{m}],
\label{Eqn:theta_unif}
\end{equation}%
where $\times $ denotes a Cartesian product and the two parameters $\theta _{\min
}^{s}$\ and $\theta _{\max }^{s}$\ are the minimum and maximum values of\
the uniform distribution for the $s^{\text{th}}$ coordinate of $\bm{\theta }$
space. Here, the well-sampled uniform distribution corresponds to a regular
grid, which is a tessellation of $m$-dimensional Euclidean space $\mathbb{R}%
^{m}$\ by congruent parallelotopes. Two parameters $\theta _{\min }^{s}$\
and $\theta _{\max }^{s}$\ are determined by: {\
\begin{eqnarray}
\theta _{\min }^{s} &=&\min_{j=1,\ldots ,M}\left\{ \theta _{j}^{s}\right\}
-\gamma \big(\max_{j=1,\ldots ,M}\left\{ \theta _{j}^{s}\right\} -
\min_{j=1,\ldots ,M}\left\{ \theta _{j}^{s}\right\}\big),
\label{Eqn:theta_minmax} \\
\theta _{\max }^{s} &=&\max_{j=1,\ldots ,M}\left\{ \theta _{j}^{s}\right\}
+\gamma \big(\max_{j=1,\ldots ,M}\left\{ \theta _{j}^{s}\right\} -
\min_{j=1,\ldots ,M}\left\{ \theta _{j}^{s}\right\}\big). \notag
\end{eqnarray}%
For $M$ regularly-spaced grid points $\theta_{j}^{s}$, we set $\gamma = .5{%
(M^{s}-1)}^{-1}$ in all of our numerical examples below, where {$M^{s}$} is the number of training parameters in the $s^{\text{th}}$ coordinate.} For example, see
Figure \ref{Fig0wellsample} for the $2$D well-sampled uniformly-distributed
data $\{(5,5),(6,5),\ldots,(12,12)\}$ (blue circles). In this case, the two-dimensional box $\mathcal{M}$\ is $%
[4.5,12.5]^{2}$ (red square).

\begin{figure*}[ptbh]
\centering \includegraphics[scale=0.6]{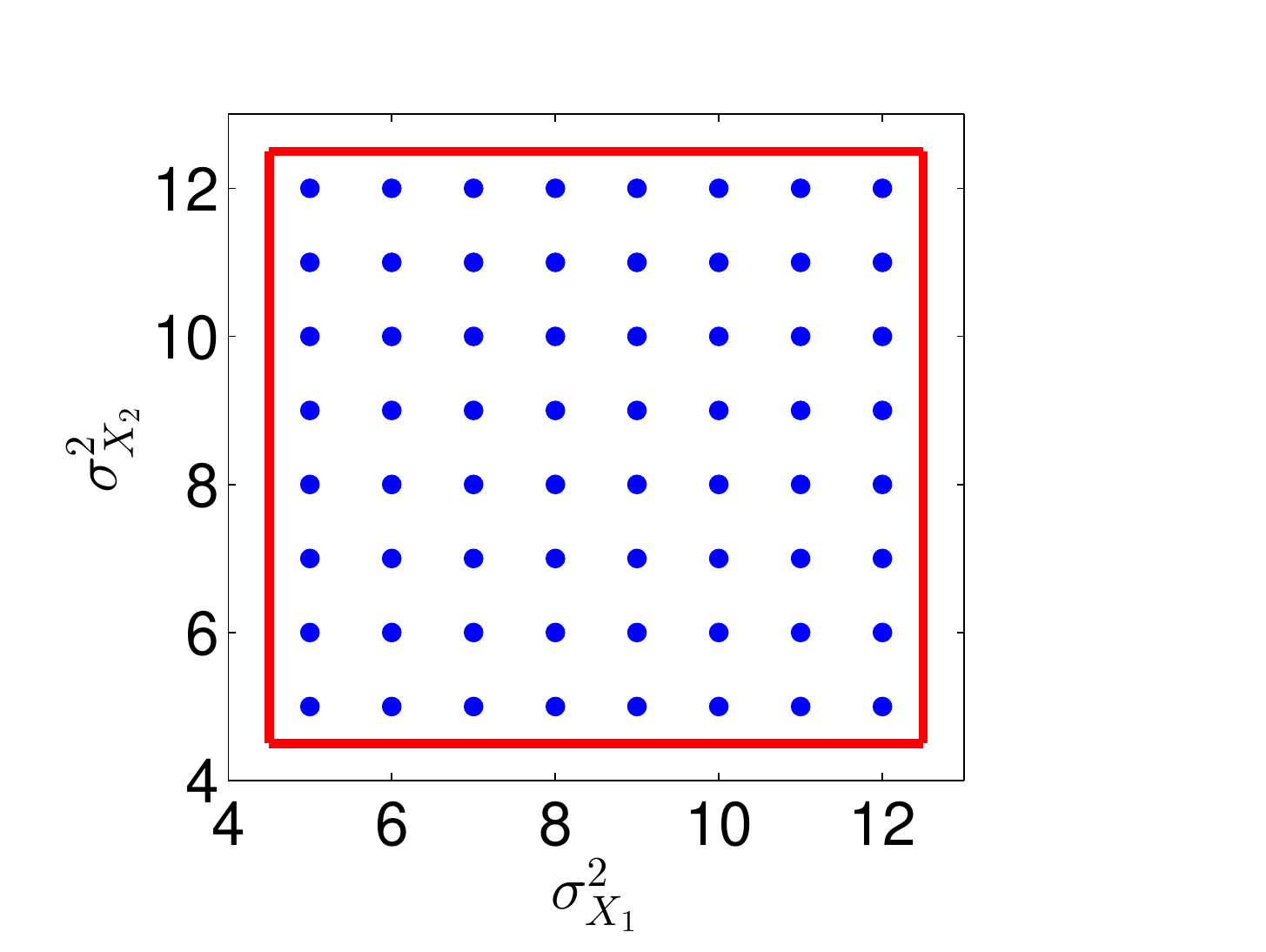}
\caption{(Color online) An example of well-sampled 2D uniformly-distributed
data points (blue circles). The boundary of the uniform distribution is
depicted with a red square. Furthermore, these well-sampled data points correspond
to the training parameters in Example {\hyperref[Ex1_ou]{I} } in Section
\protect\ref{sec:numerics}. In this example, the well-sampled uniformly-distributed training parameters are $(\protect\sigma _{X_{1}}^{2},\protect%
\sigma _{X_{2}}^{2})\in \left\{ (i,j) \right\}_{i=5,\ldots ,12}^{j=5,\ldots ,12}$
(blue circles). The equal
spacing distances of both coordinates are one. The two-dimensional box $%
\mathcal{M}$\ is $[4.5,12.5]^{2}$ (red square). }
\label{Fig0wellsample}
\end{figure*}

{On this simple geometry, we will choose $\varphi _{k}$ to be the tensor
product of the basis functions on each coordinate.} Notice that we have
taken the weight function $\widetilde{q}$ to be the sampling density of the
training parameters in order to simplify the expansion coefficient $\widehat{%
c}_{\mathbf{Y}|\bm{\theta },k}$ in (\ref{Eqn:coeff_simp}). In this case, the
weight $\widetilde{q}$ is a uniform distribution on $\mathcal{M}$. Then, for
the $s^{\text{th}}$ coordinate of the parameter, $\theta ^{s}$, the weight function $%
\widetilde{q}^{s}\left( \theta ^{s}\right) $\ is a uniform distribution on
the interval $[\theta _{\min }^{s},\theta _{\max }^{s}]$, and one can choose
the following cosine basis functions,%
\begin{equation}
\Phi _{k^{s}}(\theta ^{s})=\left\{
\begin{array}{l}
1,\text{ \,\ \ \ \ \ \ \ \ \ \ \ \ \ \ \ \ \ \ \ \ \ \ \ \ \ \ \ \ \ \ \ \ if }%
k^{s}=0\text{,} \\
\sqrt{2}\cos \left( k^{s}\pi \frac{\theta ^{s}-\theta _{\min }^{s}}{\theta
_{\max }^{s}-\theta _{\min }^{s}}\right) ,\text{ else,}%
\end{array}%
\right. \label{Eqn:phi_k}
\end{equation}%
where $\Phi _{k^{s}}(\theta ^{s})$ form a complete orthonormal basis of $%
L^{2}\left( [\theta _{\min }^{s},\theta _{\max }^{s}],\widetilde{q}%
^{s}\right) $. {This choice of basis functions corresponds to exactly the
data-driven basis functions produced by the diffusion maps algorithm on
the uniformly-distributed dataset on a compact interval, which will be
discussed in Section~\ref{subsec:VBDM}.} {Although other choices such as the
Legendre polynomials can be used, this choice will lead to a larger value of
constant $D$ in \eqref{Eqn:Cond_N} that controls the minimum number of training data for accurate estimation. }

Subsequently, we set $L^{2}\left( \mathcal{M},\widetilde{q}\right)
=\bigotimes_{s=1}^{m}L^{2}\left( [\theta _{\min }^{s},\theta _{\max }^{s}],%
\widetilde{q}^{s}\right) $, where $\otimes $ denotes the Hilbert tensor
product, and $\widetilde{q}\left( \bm{\theta }\right) =\bigotimes_{s=1}^{m}%
\widetilde{q}^{s}(\theta ^{s})$ is the uniform distribution on the $m$%
-dimensional box $\mathcal{M}$. Correspondingly, the basis functions $%
\varphi _{k}\left( \bm{\theta }\right) $ are a tensor product of $\Phi
_{k^{s}}(\theta ^{s})$ for $s=1,\ldots ,m$,
\begin{equation}
\varphi _{k}\left( \bm{\theta }\right) =\bigotimes_{s=1}^{m}\Phi
_{k^{s}}(\theta ^{s})=\Phi _{k^{1}}(\theta ^{1})\otimes \ldots \otimes \Phi
_{k^{m}}(\theta ^{m}), \label{Eqn:PHI_K}
\end{equation}%
where $k=\left( k^{1},\ldots ,k^{m}\right) $ and $\bm{\theta }=\left(
\theta ^{1},\ldots ,\theta ^{m}\right) $. Based on the property of the
tensor product of Hilbert spaces, $\left\{ \varphi _{k}\left( \bm{\theta
}\right) \right\} $ forms a complete orthonormal basis of $L^{2}\left(
\mathcal{M},\widetilde{q}\right) $.

We now turn to the discussion of how to construct analytic {\ basis
functions for $\mathbf{y}$. The approach is similar to the one for parameter
$\bm{\theta}$, except that the domain of the data is specified empirically
and the weight function is chosen to correspond to some well-known
analytical basis functions, independent of the sampling distribution of the
data $\mathbf{y}$. That is, we assume the geometry of the data has the
following tensor structure, $\mathcal{N} = \mathcal{N}^1\times \ldots\times
\mathcal{N}^n$, where $\mathcal{N}^s$ will be specified empirically based on
the ambient space coordinate of $\mathbf{y}$. Let $y^{s}$ be the $s^{\text{th%
}}$ ambient component of $\mathbf{y}$; we can choose a weighted Hilbert
space $L^{2}\left( \mathcal{N}^s,q^{s}(y^{s};\alpha ^{s})\right) $ with the
weight $q^{s}$ depending on the parameters $\alpha ^{s}$ and being
normalized to satisfy $\int_{\mathbb{R}}q^{s}(y^{s};\alpha ^{s})dy^{s}=1$.
For each coordinate, let $\Psi _{k^{s}}(y^{s};\alpha ^{s}) $ be the
corresponding orthonormal basis functions, which possess analytic
expressions. Subsequently, we can obtain a set of complete orthonormal basis
functions $\psi _{k}\in L^{2}\left( \mathcal{N},q\right) $\ for $\mathbf{y}$
by taking the tensor product of these $\Psi _{k^{s}}$ as in \eqref{Eqn:PHI_K}%
. }

For example, if the weight $q^{s}$ is uniform, $\mathcal{N}^s \subset
\mathbb{R}$ is simply a one-dimensional interval. In this case, we can
choose the cosine basis functions $\Psi _{k^{s}}$ for $\mathbf{y}$ as in %
\eqref{Eqn:phi_k} such that the parameters $\alpha^{s}$ correspond to the
boundaries of the domain $\mathcal{N}^s$, which can be estimated as in %
\eqref{Eqn:theta_minmax}. In our numerical experiments below, we will set $%
\gamma = 0.1$. Another choice is to set the weight $q^{s}(y^{s};\alpha ^{s})$
to be Gaussian. In this case, the domain is assumed to be the real line, $%
\mathcal{N}^s = \mathbb{R}$. For this choice, the corresponding orthonormal
basis functions $\Psi _{k^{s}}$ are Hermite polynomials, and the parameters $%
\alpha^{s}$, corresponding to the mean and variance of the Gaussian distribution,
can be empirically estimated from the training data.
\comment{,
\begin{equation}
\widehat{\mu }^{s}=\frac{1}{MN}\sum_{j=1}^{M}\sum_{i=1}^{N}y_{i,j}^{s}\text{%
, \ \ }(\widehat{\sigma}^{s})^2=\frac{1}{MN-1}\sum_{j=1}^{M}\sum_{i=1}^{N}%
\left( y_{i,j}^{s}-\widehat{\mu }^{s}\right) ^{2}\text{.}
\label{Eqn:hermite_estimator}
\end{equation}}

{In the remainder of this paper, we will always use the cosine basis
functions for $\bm{\theta}$. The application of (\ref{Eqn:pyb_repre}%
) using cosine basis functions for $\mathbf{y}$ is referred to as the cosine
representation. The application of (\ref{Eqn:pyb_repre}) using Hermite basis
functions for $\mathbf{y}$ is referred to as the Hermite representation.

\subsection{\label{subsec:VBDM}Data-Driven Basis Functions}

In this section, we discuss how to construct a set of data-driven basis
functions $\overline{\psi }_{k}$\ $\in $ $L^{2}\left( \mathcal{N},\overline{q%
}\right) $\ with $\mathcal{N}$ being the manifold of the training dataset $%
\left\{ \mathbf{y}_{i,j}\right\} _{j=1,\ldots ,M}^{i=1,\ldots ,N}$\ and
weight $\overline{q}$ in (\ref{Eqn:q_bar_y}) being the sampling density of $%
\left\{ \mathbf{y}_{i,j}\right\} $ for all $i=1,\ldots ,N,$ and $j=1,\ldots
,M$. The issues here are that the analytical expression of the sampling
density $\overline{q}$ is unknown and the Riemannian metric inherited by the
data manifold $\mathcal{N}$ from the ambient space $\mathbb{R}^n$ is also
unknown. Fortunately, these issues can be overcome by the diffusion maps
algorithm \cite{Coifman2006ACHA,Berry2016ACHA,Berry2017MWR}.

\subsubsection{\label{subsec:vbdm_basis}Learning the Data-Driven Basis
Functions}

Given a dataset $\mathbf{y}_{i,j}\in \mathcal{N}\subseteq \mathbb{R}%
^{n}$ with the sampling density $\overline{q}(\mathbf{y})$ (\ref{Eqn:q_bar_y}%
), defined with respect to the volume form inherited by the manifold $%
\mathcal{N}$ from the ambient space $\mathbb{R}^{n}$, one can use the
kernel-based diffusion maps algorithm to construct an $MN\times MN$ matrix $%
\mathbf{L}$ that approximates a weighted Laplacian operator, $\mathcal{L}%
=\nabla \log \left( \overline{q}\right) \cdot \nabla +\bigtriangleup$, {that takes functions} with Neumann boundary conditions for the compact
manifold $\mathcal{N}$ with the boundary if the manifold has a boundary.
The eigenvectors $\overline{\bm{\psi
}}_{k}$ of the matrix $\mathbf{L}$ are discrete approximations of the
eigenfunctions $\overline{\psi }_{k}\left( \mathbf{y}\right) $ of the
operator $\mathcal{L}$, which form an orthonormal basis of
the weighted Hilbert space $L^{2}\left( \mathcal{N},\overline{q}\right) $.
Connecting to the discussion on the RKWHS in Section~\ref{sec:rkhs3}, the eigenfunctions of $\mathcal{L}^*= -\mbox{div}(\nabla\log\overline{q}\,\, )+\Delta$, that is $\{\Psi_k:= \overline{\psi }_{k}\overline{q}\}$, can be approximated using an integral operator in \eqref{HSintegral} with the appropriate kernel constructed by the diffusion maps algorithm, up to a diagonal conjugation. Basically, $\mathcal{H}_{\bar{q}^{-1}}(\mathcal{N})$ is the data-driven reproducing kernel Hilbert space defined with the feature map in \eqref{feature}, induced by eigenfunctions of $\mathcal{L}^*$.

Each component of the eigenvector $\overline{\bm{\psi }}_{k}\in \mathbb{R}%
^{MN}$ is a discrete estimate of the eigenfunction $\overline{\psi }%
_{k}\left( \mathbf{y}_{i,j}\right) $, evaluated at the training data point $%
\mathbf{y}_{i,j}$. {The sampling density $\overline{q}$ {defined in %
\eqref{Eqn:q_bar_y}} is estimated using a kernel density estimation method \cite{Berry2016forecasting}.}
In contrast to the analytic continuous basis functions in the above Section \ref%
{subsec:param}, the data-driven basis functions $\overline{\psi }_{k} \in L^{2}\left( \mathcal{N},\overline{q}\right) $ are
represented nonparametrically by the discrete eigenvectors $\overline{%
\bm{\psi }}_{k}\in \mathbb{R}^{MN}$ using the diffusion maps algorithm. The
outcome of the training is a discrete estimate of the conditional density, $%
\widehat{p}\left( \mathbf{y}_{i,j}|\bm{\theta }\right) $, which estimates
the representation $\widehat{p}\left( \mathbf{y}|\bm{\theta }\right) $ (\ref%
{Eqn:pyb_repre}) on each training data point $\mathbf{y}_{i,j}$.

In our implementation, we use the Variable-Bandwidth Diffusion Maps (VBDM)
algorithm introduced in \cite{Berry2016ACHA}, which {extends the diffusion maps to non-compact
manifolds without a boundary. See the supplementary material of \cite{harlim2018} for the MATLAB code of this algorithm.
We should point out that this discrete approximation induces errors in the basis function,
which are estimated {\color{black}in detail in \cite{bs:2019}.} These errors are in addition to the error estimations in Section~\ref{subsec:err_est}.}

{We note that if the data are uniformly distributed on a
one-dimensional bounded interval, then the VBDM solutions are the cosine basis
functions, which are eigenfunctions of the Laplacian operator on bounded interval with Neumann
boundary conditions. This means that the cosine functions in %
\eqref{Eqn:phi_k} that are used to represent each component of $\bm{\theta}$
are analogous to the data-driven basis functions. The difference is that with the
parametric choice in \eqref{Eqn:phi_k}, one avoids VBDM at the expense
of specifying the boundaries of the domain, $[\theta^s_{\min},\theta^s_{%
\max}]$. In the remainder of this paper, we refer to an application of (\ref%
{Eqn:pyb_repre}) with cosine basis functions for $\bm{\theta}$ and VBDM
basis functions for $\mathbf{y}$ as the VBDM representation. }

However, a direct application of the VBDM algorithm suffers from the expensive
computational cost for large training data. Basically, we need an algorithm that allows us to
subsample from the training dataset while preserving the sampling distribution of the full dataset.
In Appendix~\ref{subsec:data_clus}, we provide a simple box-averaging method to achieve this goal. In the remainder of this paper, we will denote the reduced data obtained via the box-averaging method in Appendix \ref{subsec:data_clus} by $\{\overline{\mathbf{y}}_b\}_{b=1,\ldots, B}$, where $B \ll MN$. We refer to them as the box-averaged data points. When the number of training data is too large, we apply the VBDM algorithm on these box-averaged data to obtain the discrete estimate of the eigenfunctions $\overline{\psi }%
_{k}\left( \overline{\mathbf{y}}_{b}\right)$.

The second issue arises from the discrete representation of the
conditional density in the observation $\mathbf{y}$ space using the VBDM
algorithm. Notice that the VBDM representation, $%
\widehat{p}\left( \mathbf{y}_{i,j}|\bm{\theta }\right) $, is only estimated
at each training data point $\mathbf{y}_{i,j}$. A natural problem is to
extend the representation onto new observations $\mathbf{y}_{t}\notin
\left\{ \mathbf{y}_{i,j}\right\} _{j=1,\ldots ,M}^{i=1,\ldots ,N}$ that are
not part of the training dataset (Procedure \hyperlink{1D}{(\textit{1-D})}). Next, we address this issue.


\subsubsection{\label{subsec:vbdm_nystrom}Nystr\"{o}m Extension}

We now discuss an extension method to evaluate basis functions $\overline{\psi }_{k}$ on a new data point that does not belong to the training dataset. Given such an extension method, we can proceed with Procedure \hyperlink{1D}{(\textit{1-D})}
by evaluating $\overline{\psi }_{k}\left( \mathbf{y}_{t}\right)$ on new observations $\mathbf{y}_{t}\notin \left\{ \mathbf{y}_{i,j}\right\} _{j=1,\ldots ,M}^{i=1,\ldots ,N}$, which in turn give $\widehat{p}\left( \mathbf{y}_{t}|\bm{\theta }\right)$.
Second, this extension is also needed in the training Procedure \hyperlink{1C}{(\textit{1-C})} when $MN$ is large.
More specifically, for training the matrix $%
\mathbf{C}_{\mathbf{Y}\bm{\Theta }}$ in (\ref{Eqn:Cyb}), we need to know the
estimate of the eigenfunction $\overline{\psi }_{k}\left( \mathbf{y}%
_{i,j}\right) $ for all the original training data
$\mathbf{y}_{i,j}$.
Computationally, however, we can only construct the discrete estimate of the eigenfunction $\overline{\psi }%
_{k}\left( \overline{\mathbf{y}}_{b}\right) $ at the reduced box-averaged data
points $\overline{\mathbf{y}}_{b}$. This suggests that we need to extend the
eigenfunctions $\overline{\psi }_{k}\left( \overline{\mathbf{y}}_{b}\right) $
onto all the original training data $\left\{ \mathbf{y}_{i,j}\right\}
_{j=1,\ldots ,M}^{i=1,\ldots ,N}$. 

For the convenience of discussion, the training data that are used to
construct the eigenfunctions are denoted by $\left\{ \mathbf{y}_{r}^{\text{%
old}}\right\} _{r=1,\ldots ,R}$, and all the data that are not part of
$\left\{ \mathbf{y}_{r}^{\text{old}}\right\} _{r=1,\ldots ,R}$ are denoted
by $\mathbf{y}^{\text{new}}$. To extend the eigenfunctions $\overline{\psi }%
_{k}\left( \mathbf{y}_{r}^{\text{old}}\right) $ onto the data point $\mathbf{%
y}^{\text{new}}\notin \left\{ \mathbf{y}_{r}^{\text{old}}\right\}
_{r=1,\ldots ,R}$, one approach would be to use the Nystr\"{o}m extension
\cite{Nystrom1930Acta} that is based on the basic theory of RKHS \cite%
{Aronszajn1950theory}. Let $\mathcal{H}_{\bar{q}}\left( \mathcal{N}\right)$ be
the RKWHS with a symmetric positive kernel $\widehat{\mathcal{T}}:\mathcal{N}\times \mathcal{N}\mathbb{\rightarrow R}$ defined as,
\BEA
\widehat{\mathcal{T}}(\mathbf{y},\mathbf{y}') = \sum_{k=1}^\infty \lambda_k \overline{\psi }_{k}(\mathbf{y})\overline{\psi }_{k}(\mathbf{y}'), \nonumber
\EEA
where $\lambda_k$ is the corresponding eigenvalue of $\mathcal{L}$ associated with eigenfunction $\overline{\psi }_{k}$. Then, for any function $f\in \mathcal{H}_{\overline{q}}\left(
\mathcal{N}\right) $, the Moore--Aronszajn theorem states that
one can evaluate $f$ at $\mathbf{a}\in \mathcal{N}$ with the following inner
product, $f(\mathbf{a})=\left\langle f,\widehat{\mathcal{T}}\left( \mathbf{a}
,\cdot \right) \right\rangle _{\mathcal{H}_{\overline{q}}}$. In our application, this
amounts to evaluating,%
\begin{equation}
\overline{\psi }_{k}\left( \mathbf{y}^{\text{new}}\right) =\frac{1}{R}%
\sum_{r=1}^{R}\mathcal{T}\left( \mathbf{y}^{\text{new}},\mathbf{y}_{r}^{%
\text{old}}\right) \overline{\psi }_{k}\left( \mathbf{y}_{r}^{\text{old}%
}\right) , \label{Eqn:nystrom}
\end{equation}%
where the non-symmetric kernel function $\mathcal{T}:\mathcal{N\times N}%
\mathbb{\rightarrow R}$ (constructed by the diffusion maps algorithm) is related to the symmetric kernel $\widehat{
\mathcal{T}}$ by,
\BEA
\mathcal{T}\left( \mathbf{y}_{i},\mathbf{y}_{j}\right) =%
\overline{q}^{-1/2}\left( \mathbf{y}_{i}\right) \widehat{\mathcal{T}}\left(
\mathbf{y}_{i},\mathbf{y}_{j}\right) \overline{q}^{1/2}\left( \mathbf{y}%
_{j}\right)\nonumber
\EEA
with $\overline{q}\left( \mathbf{y}_{i}\right) $\ being the
sampling density of $\left\{ \mathbf{y}_{r}^{\text{old}}\right\}
_{r=1,\ldots ,R}$ at $\mathbf{y}_{i}$. See the detailed evaluation of the
kernels $\widehat{\mathcal{T}}$ and $\mathcal{T}$\ for\ the Nystr\"{o}m
extension in \cite{Harlim2017diffusion}. After obtaining the estimate
of the eigenfunction $\overline{\psi }_{k}\left( \mathbf{y}^{\text{new}%
}\right) $ using the Nystr\"{o}m extension, we can train the matrix $\mathbf{%
C}_{\mathbf{Y}\bm{\Theta }}$ in (\ref{Eqn:Cyb}) for large $MN$ and then obtain
the representation of the conditional density on arbitrary new observation $%
\mathbf{y}_{t}$, $\widehat{p}\left( \mathbf{y}_{t}|\bm{\theta }\right) $.

{To summarize this section, we have constructed two different sets of basis
functions for $\mathbf{y}$, the analytic basis functions of $%
L^{2}\left( \mathcal{N},q\right) $\, such as the Hermite and cosine basis
functions, {which assume that the manifold is $\mathbb{R}^n$ or hyperrectangle, respectively,
and} the data-driven basis functions of $L^{2}\left( \mathcal{N},%
\overline{q}\right), $\ with $\mathcal{N}$ being the data manifold and $%
\overline{q}$ being the sampling density that are computed using the VBDM
algorithm.

\section{\label{sec:numerics}Parameter Estimation Using the Metropolis Scheme}

First, we briefly review the Metropolis scheme for estimating the posterior
density $p(\bm{\theta }|\mathbf{y}^{\dag })$ given new observations $\mathbf{%
y}^{\dag }=\{\mathbf{y}_{1}^{\dag },\ldots ,\mathbf{y}_{T}^{\dag }\}$ for a
specific parameter $\bm{\theta }^{\dag }$. The key idea of the Metropolis
scheme is to construct a Markov chain such that it converges to samples of
conditional density $p(\bm{\theta }|\mathbf{y}^{\dag })$ as the target
density. In our application, the parameter estimation procedures can be
outlined as follows:

\hangafter1 \hangindent2.85em \setlength{\parindent}{0em} \hypertarget{2A}{(%
\textit{2-A})} Suppose we have $\bm{\theta }_{0}\sim $ $p(\bm{\theta
}_{0}|\mathbf{y}^{\dag })>0$, then for $i\geq 1$, we can sample $\bm{\theta }%
^{\ast }\sim \kappa \left( \bm{\theta }_{i-1},\bm{\theta }^{\ast }\right) $.
Here, $\kappa $ is the proposal kernel density. For example, use the random
walk Metropolis algorithm to generate proposals, $\kappa \left( \bm{\theta }_{i-1},\bm{\theta }^{\ast }\right) = \mathcal{N}\left(\bm{\theta }_{i-1} ,\mathbf{C}\right)$,
where $\mathbf{C}$, the proposal covariance, and is a tunable nuisance
parameter.

\hangafter1 \hangindent2.85em \setlength{\parindent}{0em} \hypertarget{2B}{(%
\textit{2-B})} Accept the proposal, $\bm{\theta }_{i}=\bm{\theta }^{\ast }$
with probability $\min (\frac{p(\bm{\theta }^{\ast
}|\mathbf{y}^{\dag })}{p(\bm{\theta }_{i-1}|\mathbf{y}^{\dag })},1)$%
, otherwise set $\bm{\theta }_{i}=\bm{\theta }_{i-1}$.
\comment{
Here,
\begin{equation*}
\alpha (\bm{\theta }_{i-1},\bm{\theta }^{\ast })=\frac{p(\bm{\theta }^{\ast
}|\mathbf{y}^{\dag })}{p(\bm{\theta }_{i-1}|\mathbf{y}^{\dag })}.
\end{equation*}%
}
Repeat Procedures \hyperlink{2A}{(\textit{2-A})} and \hyperlink{2B}{(\textit{%
2-B})} above.
Notice that the
posterior $p(\bm{\theta }|\mathbf{y}^{\dag })$ can be determined from the
prior $p_{0}(\bm{\theta })$ and the likelihood $p(\mathbf{y}^{\dag }|%
\bm{\theta })$\ based on Bayes' theorem (\ref{Eqn:bayes}). The likelihood
function $p(\mathbf{y}^{\dag }|\bm{\theta })$ is defined as a product of
conditional densities of new observations $\mathbf{y}^{\dag }=\{\mathbf{y}%
_{1}^{\dag },\ldots ,\mathbf{y}_{T}^{\dag }\}$\ in (\ref{Eqn:new_like})
(Procedure \hyperlink{1D}{(\textit{1-D})}). The conditional densities\ of
new observations $\mathbf{y}^{\dag }$ given $\bm{\theta }$\ are obtained
from the training Procedure \hyperlink{1C}{(\textit{1-C})}.

\hangafter1 \hangindent2.85em \setlength{\parindent}{0em} \hypertarget{2C}{(%
\textit{2-C})} Generate a sufficiently long chain and use the chain's statistic
as an estimator of the true parameter $\bm{\theta} ^{\dag }$. Take multiple runs
of the chain started at different initial $\bm{\theta }_{0}$, and examine
whether all these runs converge to the same distribution. {The convergence of
all the examples below has been validated using 10 randomly-chosen different initial conditions.}

\setlength{\parindent}{1em}

In the remainder of this section, we present numerical results of the Metropolis scheme using the proposed data-driven likelihood function on various instructive examples, where the likelihood function is either explicitly known, or can be approximated as in \eqref{Eqn:trad_like}, or is intractable. In an example where the explicit likelihood is known, our goal is to show that the approach numerically converges to the true posterior estimate. In the second example, where the dimension of the data manifold is strictly less than the ambient dimension, we will show that the RKHS framework with the knowledge of the intrinsic geometry is superior. When the intrinsic geometrical information is unknown, the proposed data-driven likelihood function is competitive. In the third example with a low-dimensional dynamic and observation model of the form \eqref{additiveobsmodel}, we compare the proposed approach with standard methods, including the direct MCMC and nonintrusive spectral projection (both use the likelihood function of the form \eqref{Eqn:trad_like}). In our last example, we consider an observation model where the likelihood function is intractable and the cost of evaluating the observation model in \eqref{generalobsmodel} is numerically expensive.

\subsection{\label{sec:ou2d2p} Example I: Two-Dimensional Ornstein--Uhlenbeck Process}
\label{Ex1_ou}

Consider an Ornstein--Uhlenbeck (OU) process as follows:
\begin{equation}
d\mathbf{X} = -\frac{1}{2}\mathbf{X}dt + \bm{\Sigma }^{1/2}d\mathbf{W}_t,\label{Eqn:ou2d}
\end{equation}
where $\mathbf{X}\equiv \left( X_{1},X_{2}\right)^\top$ denotes the state variable, $%
\mathbf{W}_t = (W_{1},W_{2})^\top$\ denotes two-dimensional Wiener processes, and $\bm{\Sigma }\in\mathbb{R}^{2\times 2}$ is a diagonal matrix with {main diagonal} components $\sigma _{X_{1}}^{2}$ and $\sigma _{X_{s}}^{2}$ to be estimated. In the stationary process
, the solution of Equation (\ref{Eqn:ou2d}) $\mathbf{X}=\left(
X_{1},X_{2}\right)^\top$ admits a Gaussian distribution $\mathbf{X}\sim
\mathcal{N}\left( \mathbf{0},\bm{\Sigma }\right) $,
\begin{equation}
p(\mathbf{X}|\bm{\Sigma })=\det \left( 2\pi \bm{\Sigma }\right) ^{-%
\frac{1}{2}}\exp \left[ -\frac{1}{2}\mathbf{X}^{\top}\bm{\Sigma }^{-1}%
\mathbf{X}\right] . \label{Eqn:station_f2}
\end{equation}
Our goal here is to estimate the posterior density and the posterior mean of
the parameters $(\sigma _{X_{1}}^{2},\sigma _{X_{2}}^{2})$, given a finite number, $T$, of observations, $\mathbf{X}^{\dag
}\equiv (\mathbf{X}^{1\dag },\ldots ,\mathbf{X}^{T\dag })$, for hidden true
parameters $((\sigma _{X_{1}}^{2})^\dag,(\sigma _{X_{2}}^{2})^\dag)=(6.5,6.3)$, {\color{black}where each $\mathbf{X}^{t\dag }$ is an i.i.d. sample of \eqref{Eqn:station_f2} for $\bm{\Sigma }=\bm{\Sigma }^\dagger$.} This example is shown here to verify the validity of the framework of our RKWHS representations for parameter estimation application.

One can show that the likelihood
function for this problem is the inverse matrix gamma distribution, $\bm{\Sigma }\sim IMG\left( \frac{T}{2}-\frac{3}{2},2,\bm{\Psi}
\right),$ where $\bm{\Psi } = \mathbf{X}^{\dag}(\mathbf{X}^{\dag})^\top
\in\mathbb{R}^{2\times 2}$.
If a prior is defined to be also the inverse matrix gamma distribution, $%
\bm{\Sigma }\sim IMG\left( \alpha _{0},2,\mathbf{0}\right) $,%
\comment{
\begin{equation}
p_{0}(\bm{\Sigma })\propto \left\vert \bm{\Sigma }\right\vert
^{-\alpha _{0}-3/2}, \label{Eqn:prior}
\end{equation}%
}
for some value of $\alpha _{0}$, then the posterior density $p(\bm{\Sigma }|\mathbf{X}^{\dag })$ can be obtained
by applying Bayes' theorem,%
\begin{equation}
p(\bm{\Sigma }|\mathbf{X}^{\dag }) \sim IMG\left( \alpha _{0}+\frac{T%
}{2},2,\bm{\Psi }\right) . \label{Eqn:posterior}
\end{equation}
The posterior mean can thereafter be obtained as,
\begin{equation}
\left( \bm{\Sigma }\right) _{PM}=\left(
\begin{array}{cc}
\left( \sigma _{X_{1}}^{2}\right) _{PM} & 0 \\
0 & \left( \sigma _{X_{2}}^{2}\right) _{PM}%
\end{array}%
\right) =\frac{\bm{\Psi }}{T+2\alpha _{0}-3}. \label{Eqn:postmean2}
\end{equation}

To compare with the analytic conditional density $p(\mathbf{X}|\bm{%
\Sigma })$ (\ref{Eqn:station_f2}), we trained three RKWHS representations of
the conditional density function, $\widehat{p}\left( \mathbf{X}|\bm{%
\Sigma }\right) $, by using the same training dataset. For
training, we used $M=64$ well-sampled uniformly-distributed training
parameters (shown in Figure \ref{Fig0wellsample}), $(\sigma
_{X_{1}}^{2},\sigma _{X_{2}}^{2})$, where $\sigma _{X_{j}}^{2} \in\{ 5,6,\ldots,12 \}$, which are denoted
by $\left\{ \bm{\Sigma }_{j}\right\} _{j=1}^{M}$. For each training
parameter $\bm{\Sigma }_{j}$, we generated $N=$ 640,000 well-sampled normally distributed observation data {\color{black}of density in \eqref{Eqn:station_f2} with $\bm{\Sigma}=\bm{\Sigma}_j$}.
For Hermite and cosine representations, we used $20$ basis functions for each coordinate, and then, we
could construct $K_{1}=400$ basis functions of two-dimensional observation, $%
\mathbf{X}$, by taking the tensor product. For the VBDM representation, we
first {reduced the data from $MN=8\times$ 640,000 to
$B=B^{1}\times B^{2}=100\times 100$ by the box-averaging method (Appendix~\ref{subsec:data_clus}). Subsequently, we }trained $K_{1}=400$ data-driven basis functions from the $B$ box-averaged data using the VBDM algorithm \cite%
{Berry2016ACHA}.

Figure \ref{Fig2OULike}a displays the analytic conditional density
(\ref{Eqn:station_f2}), and Figures \ref{Fig2OULike}b--d display the pointwise
errors of the conditional densities $%
\widehat{e}\left( \mathbf{X}|\bm{\Sigma }\right) \equiv p\left( \mathbf{X%
}|\bm{\Sigma }\right) -\widehat{p}\left( \mathbf{X}|\bm{\Sigma }%
\right) $ for the training
parameter $(\sigma _{X_{1}}^{2},\sigma _{X_{2}}^{2})=\left( 5,5\right) $.
It can be seen from Figures \ref{Fig2OULike}b--d that all the pointwise errors are small
compared to the analytic $p(\mathbf{X}|\bm{\Sigma })$ in Figure \ref{Fig2OULike}a, so that all
representations of conditional densities $\widehat{p}\left( \mathbf{X}|%
\bm{\Sigma }\right) $ are in excellent
agreement with the analytic $p(\mathbf{X}|\bm{\Sigma })$ (Figure \ref{Fig2OULike}a). This suggests that for the Hermite
representation, the upper bound $D$ (\ref{Eqn:vara_bound}) in Theorem \ref%
{thm:error1} is finite so that the representation is valid in estimating the
conditional density, as can be seen from Figure \ref{Fig2OULike}b. On the other hand, the upper bounds $D$ (\ref{Eqn:vara_bound}) for the cosine and the VBDM representations are always finite, as mentioned in Remark \ref{rmk:finite2} and Proposition \ref{lem:var}, respectively. We should also point out that for this example, the VBDM
representation performed the worst with errors of order $10^{-4}$ compared to the Hermite and cosine representations whose errors were on the order of $10^{-6}$. This larger error in the VBDM representation was because the data-driven basis functions were estimated by discrete eigenvectors $\overline{\bm{\psi }}_{k}\in \mathbb{R}^{B}$, so additional errors \cite{Berry2016ACHA} were introduced through this discrete approximation (especially on the high modes) on the box-averaged data, $\{\overline{\mathbf{y}}_b\}_{b=1,\ldots, B}$, $B=$ 10,000. On the other hand, for Hermite and cosine representations, their analytic basis functions are known, so that the errors could be approximated by (\ref{Eqn:EL2}) in Theorem~\ref{thm:error1}.

\begin{figure*}[ptbh]
\centering \includegraphics[scale=0.8]{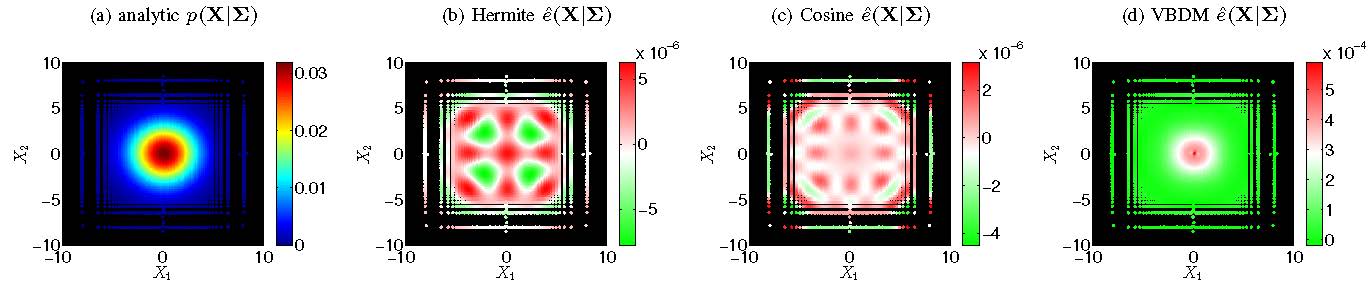}
\caption{(Color online) ({\bf a}) The analytic conditional density $p(\mathbf{X}|\bm{\Sigma})$ (\protect\ref{Eqn:station_f2}).
For comparison, plotted are the pointwise errors of conditional density
functions $\widehat{e}\left( \mathbf{X}|\bm{\Sigma} \right) \equiv
p\left( \mathbf{X}|\bm{\Sigma} \right) -\widehat{p}\left( \mathbf{X}|%
\bm{\Sigma} \right) $ for ({\bf b}) Hermite, ({\bf c}) cosine, and ({\bf d}) VBDM representations.
The density and all the
error functions are plotted on the $B=$ 10,000 box-averaged data
points. The training parameter $\bm{\Sigma} \equiv (\sigma _{X_{1}}^{2},\sigma _{X_{2}}^{2}) = (5,5)$.
}\label{Fig2OULike}
\end{figure*}

We now estimate the posterior density (\ref{Eqn:posterior}) and mean (\ref{Eqn:postmean2}) by using the MCMC method (Procedures
\hyperlink{2A}{(\textit{2-A})}--\hyperlink{2C}{(\textit{2-C})}). We
generated $T=400$ well-sampled normally-distributed data as the
observations from the true values of variance $\bm{\Sigma }^{\dag
}=({(\sigma _{X_{1}}^{2})}^{\dag
},{(\sigma _{X_{2}}^{2})}^{\dag
})=\left( 6.5,6.3\right) $. From the analytical
Formula (\ref{Eqn:postmean2}), we obtained the posterior mean as $(\sigma
_{X_{1}}^{2},\sigma _{X_{2}}^{2})_{PM}=\left( 6.03,5.84\right) $. Here, the
posterior mean deviated greatly from the true value since we only used $T=400$
normally-distributed observation data as new observations. If using much
more new observation data, the analytical posterior mean (\ref{Eqn:postmean2}) will get closer to the true value, $((\sigma
_{X_{1}}^{2})^{\dag
},(\sigma _{X_{2}}^{2})^{\dag
})$. In our simulation, we set the parameter in the prior,
$\alpha _{0}=1$, and the proposal covariance, $\mathbf{C}=0.01\mathbf{I}$.
For each chain, the initial condition $\left( \sigma
_{X_{1},0}^{2},\sigma _{X_{2},0}^{2}\right) $ was drawn randomly from $U[5,12]^{2}$, and
800,000 iterations are generated for the chain.

Figures \ref{Fig3OUMCMC}b,c,d display the densities of the chain
by using Hermite, cosine, and VBDM representation, respectively. The
densities are plotted using the kernel density estimate on the chain ignoring
the first 10,000 iterations. For comparison, Figure \ref{Fig3OUMCMC}a
displays the analytic posterior density (\ref{Eqn:posterior}). It can be
seen from Figure~\ref{Fig3OUMCMC} that the posterior densities by the three
representations were in excellent agreement with each other and with the analytic
posterior density (\ref{Eqn:posterior}). Figure \ref{Fig3OUMCMC} also shows
the comparison between the posterior mean (\ref{Eqn:postmean2}) and the MCMC
mean estimates. From our numerical results, MCMC mean estimates by all
representations and the analytic posterior mean (\ref{Eqn:postmean2}) were
identical within numerical accuracy. Therefore, for this 2D OU-process
example, all representations were valid in estimating the posterior density
and posterior mean of parameter $\bm{\Sigma }$.

\begin{figure}
\centering \includegraphics[scale=1.1]{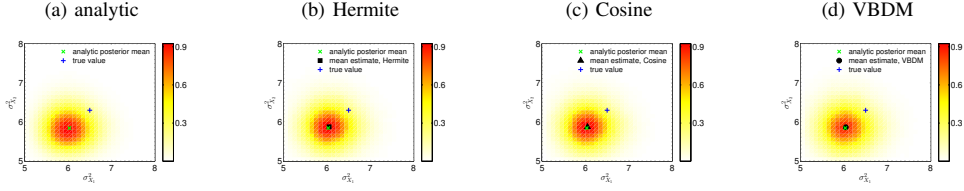}
\caption{(Color online) Comparison of the posterior density functions $p(\bm{\Sigma} |\mathbf{X}^{\dag })$. (a)
Analytical posterior density $p(\bm{\Sigma }|\mathbf{X}^{\dag })$ (\ref{Eqn:posterior}). (b) Hermite
representation. (c) Cosine
representation. (d) VBDM representation. The true value $\bm{\Sigma} ^{\dag } \equiv (\protect\sigma %
_{X_{1}}^{2},\protect\sigma _{X_{2}}^{2})=(6.5,6.3)$ (blue plus). The
analytic posterior mean is $(\protect\sigma %
_{X_{1}}^{2},\protect\sigma _{X_{2}}^{2})_{PM}=(6.03,5.84)$ (green cross).
The MCMC mean estimate using Hermite representation is $(\protect\sigma %
_{X_{1}}^{2},\protect\sigma _{X_{2}}^{2})=(6.05,5.87)$ (black square). The
MCMC mean estimate using Cosine representation is $(\protect\sigma %
_{X_{1}}^{2},\protect\sigma _{X_{2}}^{2})=(6.05,5.87)$ (black triangle). The
MCMC mean estimate using VBDM representation is $(\protect\sigma %
_{X_{1}}^{2},\protect\sigma _{X_{2}}^{2})=(6.04,5.86)$ (black circle).}
\label{Fig3OUMCMC}
\end{figure}

Next, we will investigate a system for
which the intrinsic dimension $d$ of the data manifold where the observations lie is smaller than the dimension of ambient space $n$.

\subsection{\label{sec:torus3d1p} Example II: Three-Dimensional
System of SDE's on a Torus}

\label{Ex2torus} \hypertarget{Ex2}{Consider} a system of SDE's on a torus defined in the
intrinsic coordinates $\left( \theta ,\phi \right) \in \lbrack 0,2\pi )^{2}$:
\begin{equation}
d\left(
\begin{array}{c}
\theta \\
\phi%
\end{array}%
\right) =a\left( \theta ,\phi \right) dt+b\left( \theta ,\phi \right) \left(
\begin{array}{c}
dW_{1} \\
dW_{2}%
\end{array}%
\right) , \label{Eqn:sde_torus}
\end{equation}%
where $W_{1}$ and $W_{2}$\ are two independent Wiener processes,\ and the
drift and diffusion coefficients are:
\begin{eqnarray*}
a\left( \theta ,\phi \right) &=&\left(
\begin{array}{c}
\frac{1}{2}+\frac{1}{8}\cos \left( \theta \right) \cos \left( 2\phi \right) +%
\frac{1}{2}\cos \left( \theta +\pi /2\right) \\
10+\frac{1}{2}\cos \left( \theta +\phi /2\right) +\cos \left( \theta +\pi
/2\right)%
\end{array}%
\right) , \\
b\left( \theta ,\phi \right) &=&\left(
\begin{array}{cc}
D+D\sin \left( \theta \right) & \frac{1}{4}\cos \left( \theta +\phi \right)
\\
\frac{1}{4}\cos \left( \theta +\phi \right) & \frac{1}{40}+\frac{1}{40}\sin
\left( \phi \right) \cos \left( \theta \right)%
\end{array}%
\right) .
\end{eqnarray*}%
The initial condition is $\left( \theta ,\phi \right) =\left( \pi ,\pi
\right) $. Here, $D$ is a parameter to be estimated. This example exhibits
non-gradient drift, anisotropic diffusion, and multiple time scales. {\color{black}Both the observations and the
training dataset were generated by numerically solving the SDE on appropriate parameters $D$} in (\ref%
{Eqn:sde_torus}) with a time step $\Delta t=0.1$ and then mapping this data
into the ambient space, $\mathbb{R}^{3}$, via the standard embedding of the
torus given by:
\begin{equation}
\mathbf{x}\equiv \left( x,y,z\right) =\left( \left( 2+\sin \left( \theta
\right) \right) \cos \left( \phi \right) ,\left( 2+\sin \left( \theta
\right) \right) \sin \left( \phi \right) ,\cos \left( \theta \right) \right)
. \label{Eqn:xyz}
\end{equation}%
Here, $\mathbf{x}\equiv \left( x,y,z\right) $ are observations. This
system on a torus satisfies $d<n$, where $d=2$ is the intrinsic dimension of
$\mathbf{x}$ and $n=3$ is the dimension of ambient space $\mathbb{R}^{n}$.
Our goal is to estimate the posterior density and the posterior mean of
parameter $D$ given discrete-time observations of $\mathbf{x}^{\dag}$, {\color{black}which are the solutions of \eqref{Eqn:sde_torus} for a
specific parameter $D^\dag$.}

For training, we used $M=8$ well-sampled uniformly-distributed training
parameters, $\left\{ D_{j} = j/4\right\} _{j=1}^{8}$.
For each training parameter $D_{j}$, we generated $N=$ 54,000 observations of $%
\mathbf{x}$ {\color{black}by solving the SDE's in \eqref{Eqn:sde_torus} for parameter $D_j$.} For Hermite and cosine representation, we constructed $10$
basis functions for each $x,y,z$ coordinate in Euclidean
space. After taking tensor product of these basis functions, we could obtain $%
K_{1}=1000$ basis functions on the ambient space $\mathbb{R}^{3}$. For VBDM
representation, we first computed $B=B_{1}\times B_{2}\times B_{3}=30^{3}$
box-averaged data points by the data reduction method in Appendix \ref{subsec:data_clus}. However, we found that some
of the $B$ box-averaged data points were far away from the torus. After
ignoring these points, we eventually chose $\widetilde{B}=$ 26,020 of the
box-averaged data points that were close enough to the torus for training.
Then, we trained $K_{1}=1000$ data-driven basis functions on $\mathcal{N}$%
\ from these 26,020\ box-averaged data points using the VBDM algorithm.

Unlike the previous example, the derivation of the analytical expression for the likelihood function $p(\mathbf{x}|D_{j})$ was not trivial. This difficulty is due to the fact that the
diffusion coefficient, $b(\theta,\phi)$, is state dependent. While direct MCMC with an approximate likelihood function constructed using the Bayesian imputation \cite{golightly2010markov} can be done in principle, we neglected this numerical computation since the cost in generating the path $\{\bf{x}_i\}$ for $i=1,\ldots, T$ on each sampling step was too costly in our setup below (where $T=$ 10,000, and we would generate a chain of length 400,000 samples). For diagnostic comparisons, we constructed
another representation $\widehat{p}(\mathbf{x}|D_{j})$, named the intrinsic
Fourier representation, which can be regarded as an accurate approximation
of $p(\mathbf{x}|D_{j})$, as it used the basis functions defined on the intrinsic coordinates $(\theta,\phi)$ instead of $\mathbf{x}\in\mathbb{R}^3$. See Appendix \ref{sec:dsmalN_KDE} for the construction and the convergence of the intrinsic Fourier representation in detail. We should point out that this intrinsic representation is not available in general since one may not know the embedding of the data manifold.

Figure \ref{Fig4TorusLike} displays the comparison of the density estimates.
It can be observed from Figure \ref{Fig4TorusLike} that the VBDM representation
was in good agreement with the intrinsic Fourier representation, whereas Hermite and cosine
representations of $\widehat{p}(\mathbf{x}|D_{j})$\ deviated significantly
from the intrinsic Fourier representation. The reason
in short was that {if the density $p(\theta,\phi|D)$ in $(\theta,\phi)$ coordinate were in $\mathcal{H}([0,2\pi)^2) \subset L^{2}\left( [0,2 \pi)^{2}\right)$, then the corresponding VBDM representation with respect to $dV(\mathbf{x})$
would be in $\mathcal{H}_{\overline{q}^{-1}}(\mathcal{N})$. However, the representation (Hermite and cosine)
with respect to $d\mathbf{x}$, $\mathbf{x} \in \mathbb{R}^{3}$ is not in $\mathcal{H}_{\overline{q}^{-1}}(\mathbb{R}^3)$.
A more detailed explanation of this assertion is presented in Appendix~\ref{sec:dsmalN_KDE}.
\begin{figure}
\centering
\includegraphics[scale=0.87]{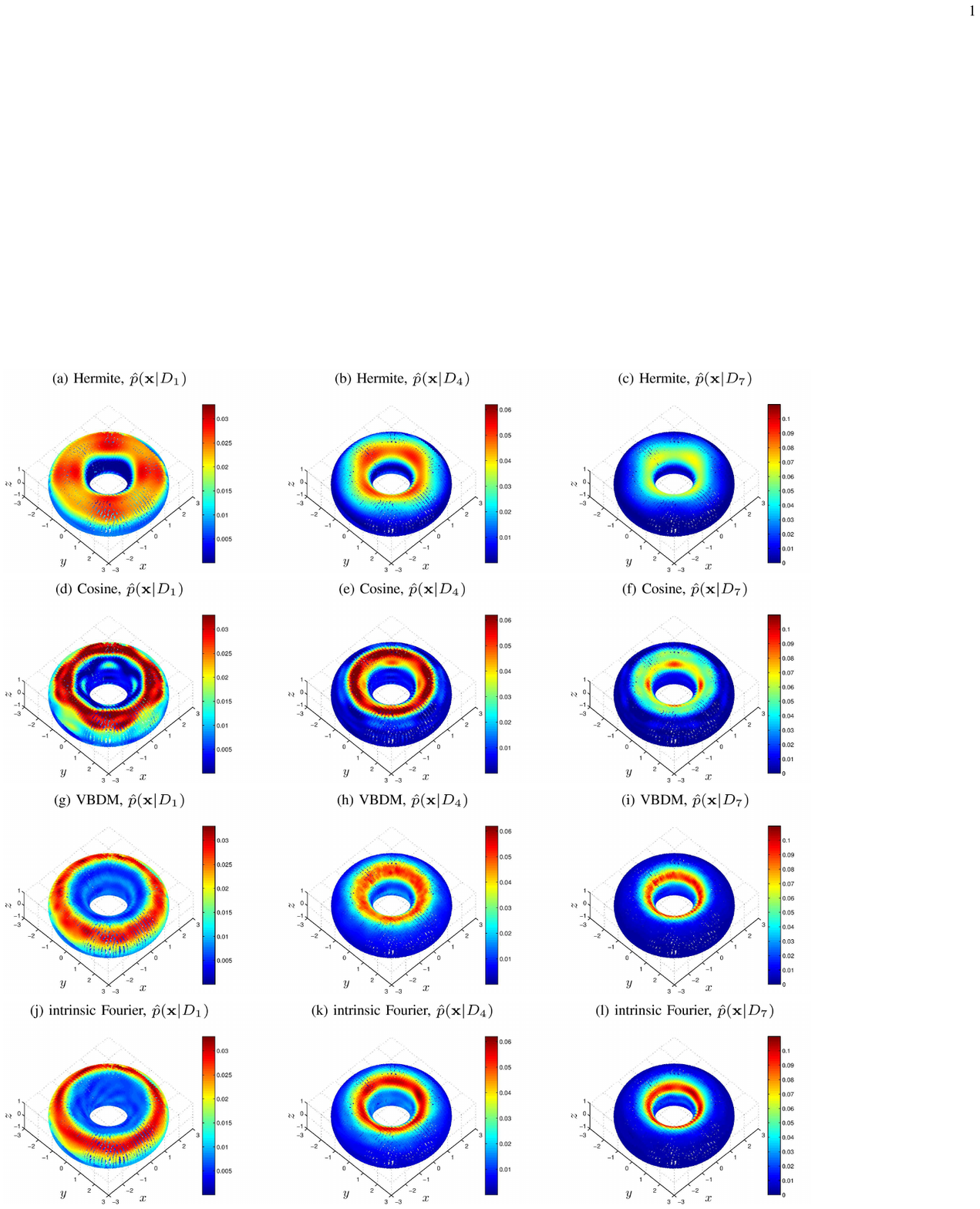}
\caption{(Color online) Comparison of the conditional densities $\hat{p}(%
\mathbf{x}|D_{j})$ estimated by using Hermite representation (first row),
cosine representation (second row), VBDM representation (third row), and
intrinsic Fourier representation (fourth row). The left (a,d,g,j), middle
(b,e,h,k), and right (c,f,i,l) columns correspond to the densities on
the training parameters $D_{1}=0.25$, $D_{4}=1.00$, and $D_{7}=1.75$,
respectively. $K_{1}=1000$ basis functions are used for all representations.
For fair visual comparison,
all conditional densities are plotted on the same box-averaged data points
and normalized to satisfy $\frac{1}{\widetilde{B}}\sum_{b=1}^{%
\widetilde{B}}\widehat{p}(\mathbf{x}_{b}|D_{j})/\overline{q}\left( \mathbf{x}%
_{b}\right) =1$ with $\overline{q}$ being the estimated sampling density of
the box-averaged data $\left\{ \mathbf{x}_{b}\right\} _{b=1}^{\widetilde{B}}$.
}
\label{Fig4TorusLike}
\end{figure}

We now compare the MCMC estimates with the true value, $D^{\dag }=0.9$, from $T=$ 10,000 observations.
For this simulation, we set the prior to be uniformly distributed and empirically chose {${C}=0.01$} for the proposal.
Figure \ref{Fig5TorusMCMC} displays the posterior
densities of the chains for all representations (each plot of the density estimate was constructed using KDE on a chain of length 400,000). Displayed also is the
comparison between the true value $D^{\dag }$\ and the MCMC mean estimates
by all representations. Here, the mean estimate by the intrinsic Fourier
representation nearly overlaps with the true value $D^{\dag }=0.9$, as shown in Figure \ref{Fig5TorusMCMC}. The mean estimate by the VBDM representation is closer to the true value $D^{\dag }$\ compared to the estimates by Hermite and cosine representations.
Moreover, it can be seen from Figure \ref{Fig5TorusMCMC} that the posterior by the
VBDM representation is close to the posterior by intrinsic Fourier
representation, whereas the posterior densities by Hermite and cosine representation are
not. We should point out that this result is encouraging considering that the training parameter domain is rather wide, $D_j\in[1/4,2]$. This result suggests that when the intrinsic dimension is less than the ambient space dimension, $d<n$, the VBDM representation (which does not require the knowledge of the embedding function in \eqref{Eqn:xyz}) with data-driven basis functions in $L^{2}\left( \mathcal{N},\overline{q}\right)$ is superior compared to the representations with analytic basis functions defined on the ambient coordinates $\mathbb{R}^3$.

\begin{figure*}[tbph]
\centering \includegraphics[scale=0.4]{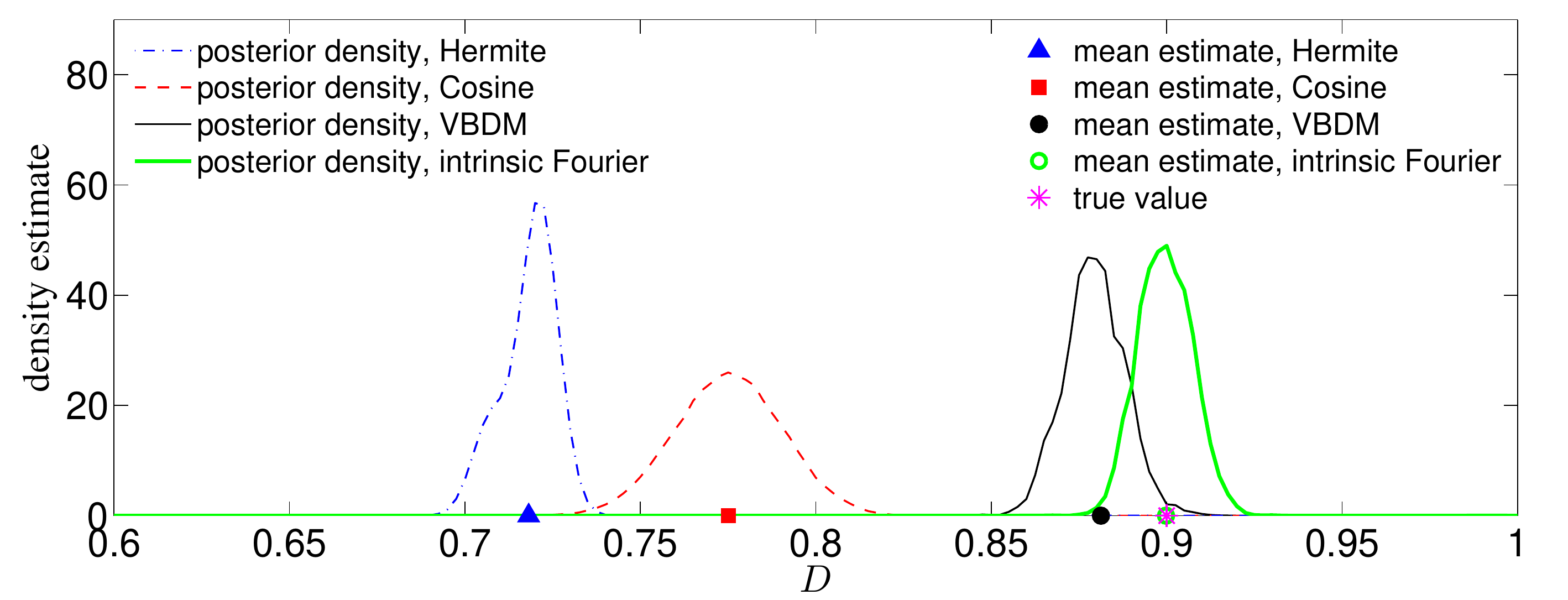}
\caption{(Color online) Comparison of the posterior density functions by all
representations. Plotted also are mean estimates by Hermite representation $%
\hat{D}=0.78$ (blue triangle), cosine representation $\hat{D%
}=0.79$ (red square), VBDM representation $\hat{D}=0.88$
(black circle), the intrinsic Fourier representation $\hat{D}%
=0.90$ (green circle), and the true parameter value $D^{\dag }=0.9 $
(magenta asterisk).}
\end{figure*}

\subsection{\label{sec:l96} Example III: Five-dimensional Lorenz-96
model}

\hypertarget{Ex3}{Consider} the Lorenz-$96$ model
\cite{Lorenz1996}:
\begin{equation}
\frac{dx_{j}}{dt}=x_{j-1}\left( x_{j+1}-x_{j-2}\right) -x_{j}+F,\text{ \ \ }%
j=1,\ldots ,J, \label{Eqn:L96_model2}
\end{equation}%
with periodic boundary, $x_{j+J}=x_{j}$. For the example in this section, we set $J=5$. The initial condition was $x_{j}(0)=\sin (2\pi j/5)$. Our goal here was to estimate the posterior density and posterior
mean of the hidden parameter $F$ given a time series of noisy observations $%
\mathbf{y}^\dagger=\left( y_{1}^\dagger,y_{2}^\dagger,y_{3}^\dagger,y_{4}^\dagger,y_{5}^\dagger\right) $, where:
\begin{equation*}
y_{j}^\dagger\left( t_{m}\right) =x_{j}^\dagger\left( t_{m}\right) +\epsilon _{m,j},\text{ \
\ }\epsilon _{m,j}\sim \mathcal{N}\left( 0,\sigma ^{2}\right) ,\text{ \ }%
m=1,\ldots ,T,
\end{equation*}%
with noise variance $\sigma ^{2}=0.01$. Here, $x_{j}^\dagger\left( t_{m}\right) $ denotes the
approximate solution (with the Runge--Kutta method) with a specific parameter value $F^\dagger$ at discrete times $t_{m}=ms\triangle t$, where $\triangle t=0.05$ is the integration time step and $s$ is the observation interval. Since the embedding function of the observation data is unknown, we do not have a parametric analog to the intrinsic Fourier representation as in the previous example.

In this low-dimensional setting, we can compare the proposed method with basic techniques, including the direct MCMC and the Non-Intrusive Spectral Projection (NISP) method \cite{mx:09}. By direct MCMC, we refer to employing the random walk Metropolis scheme directly on the following likelihood function,
\begin{equation}
p\left( \mathbf{y}^\dagger|F\right) \propto \exp \left( -\frac{\sum_{m=1}^{T}%
\sum_{j=1}^{5}\left( y_{j}^\dagger\left( t_{m}\right) -x_{j}\left( t_{m};F\right) \right) ^{2}}{2\sigma ^{2}}\right) , \label{Eqn:tramcmc_like}
\end{equation}%
where $\sigma ^{2}$ is the noise variance and $x_{j}\left( t_{m};F\right) $ is the solution of the initial value problem in Equation (\ref%
{Eqn:L96_model2}) with the parameter $F$ at time $t_{m}$.
Note that evaluating $x_{j}\left( t_{m};F\right) $ is time consuming if the model time $Ts \Delta t$ is long or the MCMC chain has many iterations. In our implementation, we generated the chain
for 4000 iterations. {This amounts to 4000 sequential evaluations of the likelihood function in \eqref{Eqn:tramcmc_like}, where each evaluation requires integrating the model in \eqref{Eqn:L96_model2} with the proposal parameter value $F^*$ until model unit time $Ts\Delta t$.} We used a uniform prior distribution and {$C=0.1$} for the proposal.

For the NISP method \cite{mx:09}, we used the same Gaussian likelihood function (\ref%
{Eqn:tramcmc_like}) {with approximated $x_j$. In particular, we approximated the solutions $x_j$ with $\tilde{x}_{j}\left( t,F\right) $ for $%
j=1,\ldots ,5$ in the form of:}
\begin{equation}
\tilde{x}_{j}\left( t,F\right) =\sum_{k=1}^{K}\hat{x}_{j,k}\left( t\right)
\varphi _{k}\left( F\right) , \label{Eqn:nisp_expand}
\end{equation}%
where $\varphi _{k}\left( F\right) $ are chosen to be the orthonormal cosine
basis functions, $\hat{x}_{j,k}\left( t\right) $ are the expansion
coefficients, and $K$ is the number of basis functions. Subsequently, we
prescribe a fixed set of nodes $\left\{ F_{j}=7.55+0.1j\right\} _{j=1}^{8}$ {to be used for training $\hat{x}_{j,k}(t)$. Practically, this training procedure only requires
eight model evaluations that can be done in parallel, where each evaluation involves integrating the model with the specified $F_j$ until model unit time $Ts\Delta t$.}
The number of basis functions is $K=8$. After specifying the coefficients $\hat{x}_{j,k}\left( t\right) $ such that $%
\tilde{x}_{j}\left( t,F\right) =x_{j}\left( t;F\right) $, we obtain the
approximation of the solutions $\tilde{x}_{j}\left( t,F\right) $ for all
parameters $F$. {Using these approximated $\tilde{x}_{j}\left( t,F\right) $,
in place of $x_j(t_m,F)$ in \eqref{Eqn:tramcmc_like}, we can generate the Markov chain using the Metropolis scheme.} Again, we used
a uniform prior distribution and {$C=0.1$} for the proposal. In our MCMC implementation, we generated the chain
for 40,000 iterations; {this involved only evaluating \eqref{Eqn:nisp_expand} instead of integrating the true dynamical model in \eqref{Eqn:L96_model2} on the proposal parameter value $F^*$. }

{For RKWHS representations, we also used $M=8$ uniformly-distributed training parameters, $\left\{ F_{j} = 7.55+0.1j \right\} _{j=1}^{8}$. As in the NISP, this training procedure required only eight model integrations with parameter value $F_j$ until the model unit time $Ts\Delta t$, resulting in a total of $MN=8Ts$ training data. In this example, we did not reduce the data using the box-averaging method in Appendix \ref{subsec:data_clus}. In fact, for some cases, such as $s=1$ and $T=50$, the total of training data were only $MN=400$, which was too few for estimation of the eigenfunctions. Of course, one can consider more training parameters to increase this training dataset, but for a fair comparison with NISP,
we chose to just add 10 i.i.d. Gaussian noises to each dataset, resulting in a total of $MN=4000$ for training dataset. This configuration (with a small dataset) is a tough setting for the VBDM since the nonparametric method is advantageous in the limit of a large dataset. When $8Ts$ is sufficiently large, we do not need to increase the dataset by adding multiple i.i.d Gaussian noises. }

For Hermite and Cosine representation, we constructed five Hermite and
cosine basis functions for each coordinate, {which yielding a total of} $%
K_{1}=5^5 = 3125$ basis functions in $\mathbb{R}^{5}$. For the VBDM representation, we
directly applied the VBDM algorithm to train $K_{1}=3125$ data-driven basis
functions on manifold $\mathcal{N}$\ from the $MN=4000$ training dataset. From
the VBDM algorithm, the estimated intrinsic dimension was $d \approx 2$, which
was smaller than the dimension of the ambient space $n=5$.
Then, we applied a uniform prior distribution and {$C=0.01$} for the proposal. {As in NISP, we generated the chain for 40,000 iterations, which amounted to evaluating \eqref{Eqn:new_like} instead of integrating the true dynamical model in \eqref{Eqn:L96_model2} on each iteration.}

We now compare the posterior densities and mean estimates for
the case of $s=1$ and $T=50$ noisy observations $\mathbf{y}^\dagger\left( t_{m}\right) $
corresponding to the true parameter value $F^{\dag }=8$.
Figure~\ref{Fig7_L96_Like} displays the posterior densities of the chains and mean estimates
for the direct MCMC method,
NISP method, and all
representations.
It can be
seen from Figure \ref{Fig7_L96_Like} that the mean estimate by VBDM
representation was in good agreement with the true value $F^{\dag }$.
In contrast, the mean estimates
by Hermite and cosine representations deviated substantially from the true value.
Based on this numerical result,
where the estimated intrinsic dimension $d\approx 2$ of the observations was lower than the ambient space dimension $n=5$,
the data-driven VBDM representation was superior compared to the Hermite and cosine representations.
It can be further observed that direct MCMC, NISP, and VBDM
representation can provide good mean estimates to the true value.
However, notice that we only ran the model $M=8$ times for the NISP method and VBDM representation, whereas
we ran the model 4000 times for the direct MCMC method.

\begin{figure*}[tbph]
\centering  \includegraphics[scale=0.4]{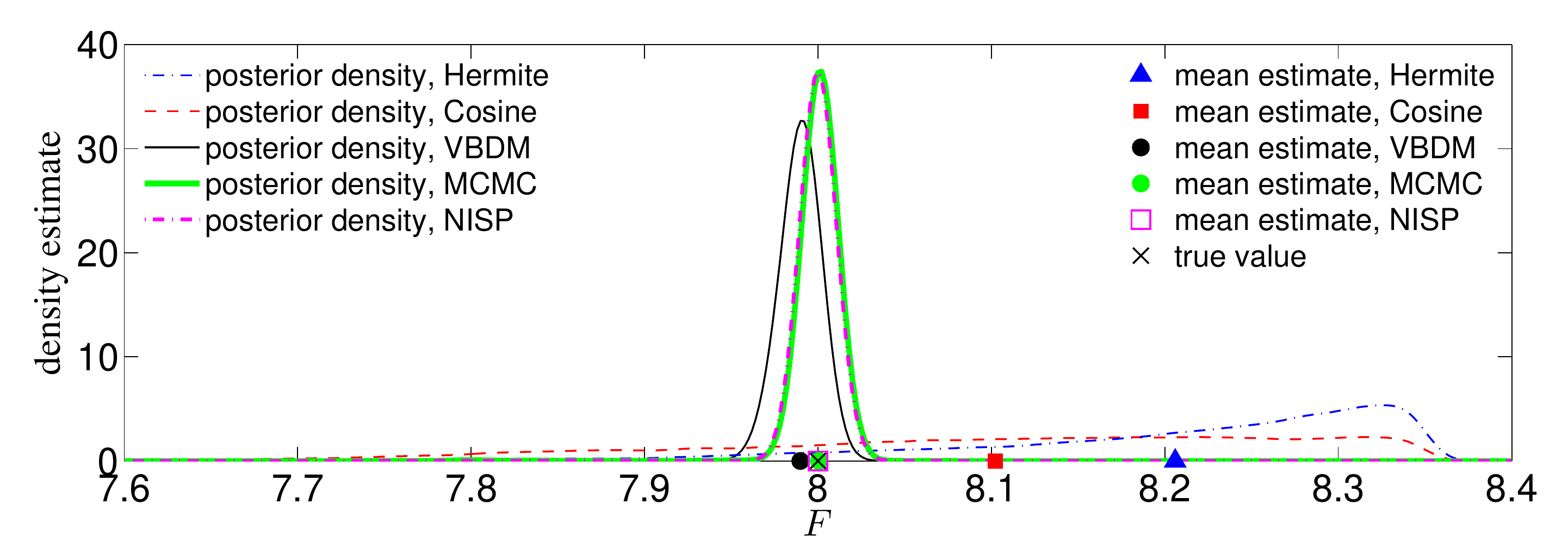}
\caption{(Color online) Comparison of the posterior density functions among the direct MCMC method,
NISP method, and all
RKWHS representations. Plotted also are the true parameter value $F^{\dag }=8 $ (black
cross), mean the estimate by the direct MCMC method $\hat{F}=8.00$ (green circle), the mean estimate by the
NISP method $\hat{F}=8.00$ (magenta square), and the mean estimates by Hermite representation $%
\hat{F}=8.21$ (blue triangle), cosine representation $\hat{F%
}=8.10$ (red square), and VBDM representation $\hat{F}=7.99$
(black circle). The noisy observations are $y_{j^\dagger}(t_{m})$ for $s=1$, $T=50$.
}\label{Fig7_L96_Like}
\end{figure*}

{ In real applications where the observations are not simulated by the model, we expect the observation configuration to be pre-determined. Therefore it is important to have an algorithm that is robust under various observation configurations. In our next numerical experiment, we checked such robustness by comparing the direct MCMC method,
NISP method, and VBDM representation for different cases of $s$ and $T$ (Figure \ref{Fig8L96MCMC}a). }
It can be observed from Figure \ref{Fig8L96MCMC}a that both the direct MCMC method and VBDM representation
can provide reasonably accurate mean estimates for all cases of $s$ and $T$.
However, again notice that we need to run the model much more times for the direct MCMC method
than for VBDM representation.
It can be further observed that the NISP method can only provide a good mean estimate for observation time up to
$Ts\Delta t = 200\Delta t$ when eight uniform nodes $\left\{ F_{j} = 7.55+0.1j \right\} _{j=1}^{8}$ are used. The reason was that
the approximated solution by NISP method was only accurate for observation time up to $200\Delta t$ (see the green and red curves in Figure \ref{Fig8L96MCMC}b). {This result suggests that our surrogate modeling approach using the VBDM representation can provide accurate and robust mean estimates under various observation configurations.}

\begin{figure*}[tbph]
\centering \includegraphics[scale=1.2]{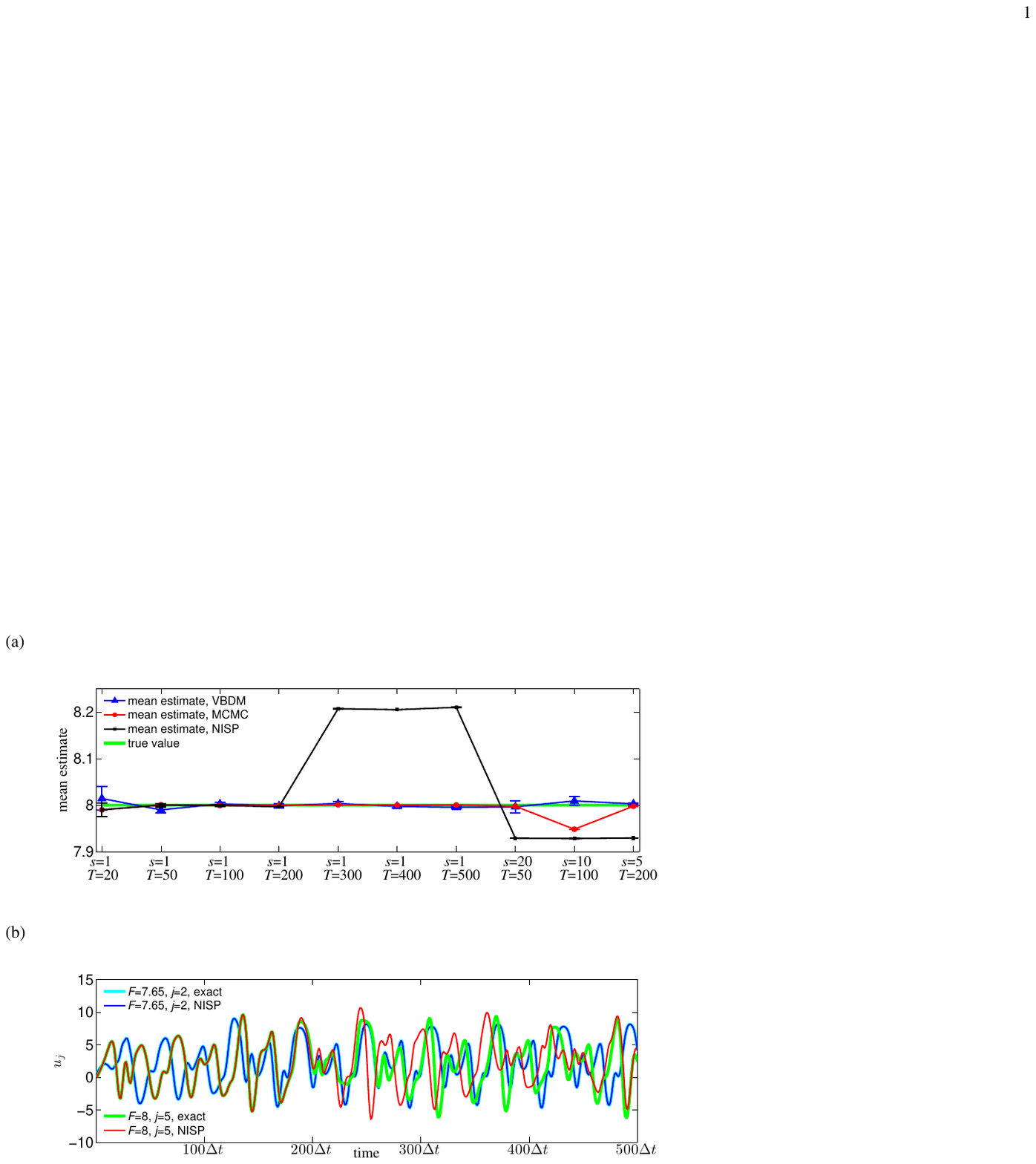}
\caption{(Color online) ({\bf a}) Comparison of the mean estimates among the direct MCMC method,
NISP method, and VBDM
representation for different cases of $s$ and $T$. Plotted also is the true parameter value $F^{\dag }=8 $ (green
curve). ({\bf b}) Comparison of the exact solution by numerical integration and the approximated solution by the NISP method
at the training parameter $F=7.65$ and at the parameter value $F=8$, which is not in the training parameter.}
\label{Fig8L96MCMC}
\end{figure*}

\subsection{Example IV: The 40-dimensional Lorenz-96 model}

In this section, we consider estimating the parameter $F$ in the Lorenz-96 model in \eqref{Eqn:L96_model2}, but of a $J=40$ dimensional system. We now consider observing the autocorrelation function of several energetic Fourier modes
of the system phase-space variables. In particular, let $\{\hat{x}_k(t_m;F)\}_{k=-J/2+1,\ldots,J/2}$ be the $k^{\text{th}}$ discrete Fourier mode of $\{x_j(t_m;F)\}_{k=1\ldots,J}$, where $t_m=m\Delta t$ with $\Delta t=0.05$. Let the observation function be defined as in \eqref{generalobsmodel} with four-dimensional $\{\mathbf{y}_{m}(F)\}_{m=0,\ldots,T}$, whose components are the autocorrelation function of Fourier mode $k_j$,
\BEA
y_{m,j}(F) = \mathbb{E}[\hat{x}_{k_j}(t_m;F)\hat{x}_{k_j}(t_0;F)], \quad m= 0,\ldots, T, \quad j=1,\ldots,4, \nonumber
\EEA
of the energetic Fourier modes, $k_j \in \{7,8,9,14\}$. See \cite{mag:05} for the detailed discussion of the statistical equilibrium behavior of this model for various values of $F$. Such observations arise naturally since some of the model parameters can be identified from non-equilibrium statistical information via the linear response statistics \cite{hlz:17,zlh:19}. In our numerics, we will approximate the correlation function by averaging over a long trajectory,
\BEA
 \mathbb{E}[\hat{x}_{k_j}(t_m;F)\hat{x}_{k_j}(t_0;F)] \approx \frac{1}{L} \sum_{\ell=1}^L \hat{x}_{k_j}(t_{m+\ell};F)\hat{x}_{k_j}(t_\ell;F),\label{autocor}
\EEA
with $L=10^6$. Here, each of these Fourier modes is assumed to have zero empirical mean. We will consider observing the autocorrelation function up to time $T=50$ (corresponding to 2.5 unit time).

With this setup, the corresponding likelihood function for $p(\mathbf{y}_m|F)$ is not easily approximated (since it is not in the form of \eqref{Eqn:trad_like}), and it is computationally demanding to generate $y_{m,j}(F)$ since each evaluation requires integration of the 40-dimensional Lorenz-96 model up to time index $L=10^6$. This expensive computational cost makes either the direct MCMC or approximate Bayesian computation infeasible. We should also point out the fact that a long trajectory is needed in the evaluation of \eqref{autocor}, making this problem intractable with NISP even if a parametric likelihood function becomes available. This issue is because the approximated trajectory by polynomial chaos expansion in NISP is only accurate for short times, as shown in the previous example. We will consider constructing the likelihood function from a wide range of training parameter values, $F_i = 6+0.1(i-1), i=1,\dots, M=31$. This parameter domain is rather wide and includes the weakly chaotic regime ($F=6$) and strongly chaotic regime ($F=8$). See \cite{am:07} for a complete list of chaotic measures in these regimes including the largest Lyapunov exponent and the Kolmogorov--Sinai entropy.

In this setup, we had a total of $MN=M(T+1) = 31\times 51 = 1581$ of $\mathbf{y}_m(F_i)\in\mathbb{R}^4$ for training. We will consider an RKHS representation with $K_1=500$ basis functions. We will demonstrate the performance on 30 sets of observations $\mathbf{y}_m(F^\dagger_s)$, where in each case, $F^\dagger_s$ does not belong to the training parameter set, namely $F^\dagger_s = 6.05+0.1(s-1), s=1,\ldots,30$. In each simulation, the MCMC initial chain will be set to be random, $F \sim \mathcal{U}(6.5,8.5)$; the prior is uniform; and $C=0.01$ for the proposal. In Figure~\ref{Fig9L96MCMC}, we show the mean estimates and an error bar (based on one standard deviation) computed from averaging the MCMC chain of length 40,000 in each case. Notice the robustness of these estimates on a wide range of true parameter values $F^\dagger$ using a likelihood function constructed using a single set of training parameter values on $[6,9]$.

\begin{figure*}[tbph]
\centering \includegraphics[width=0.7\textwidth]{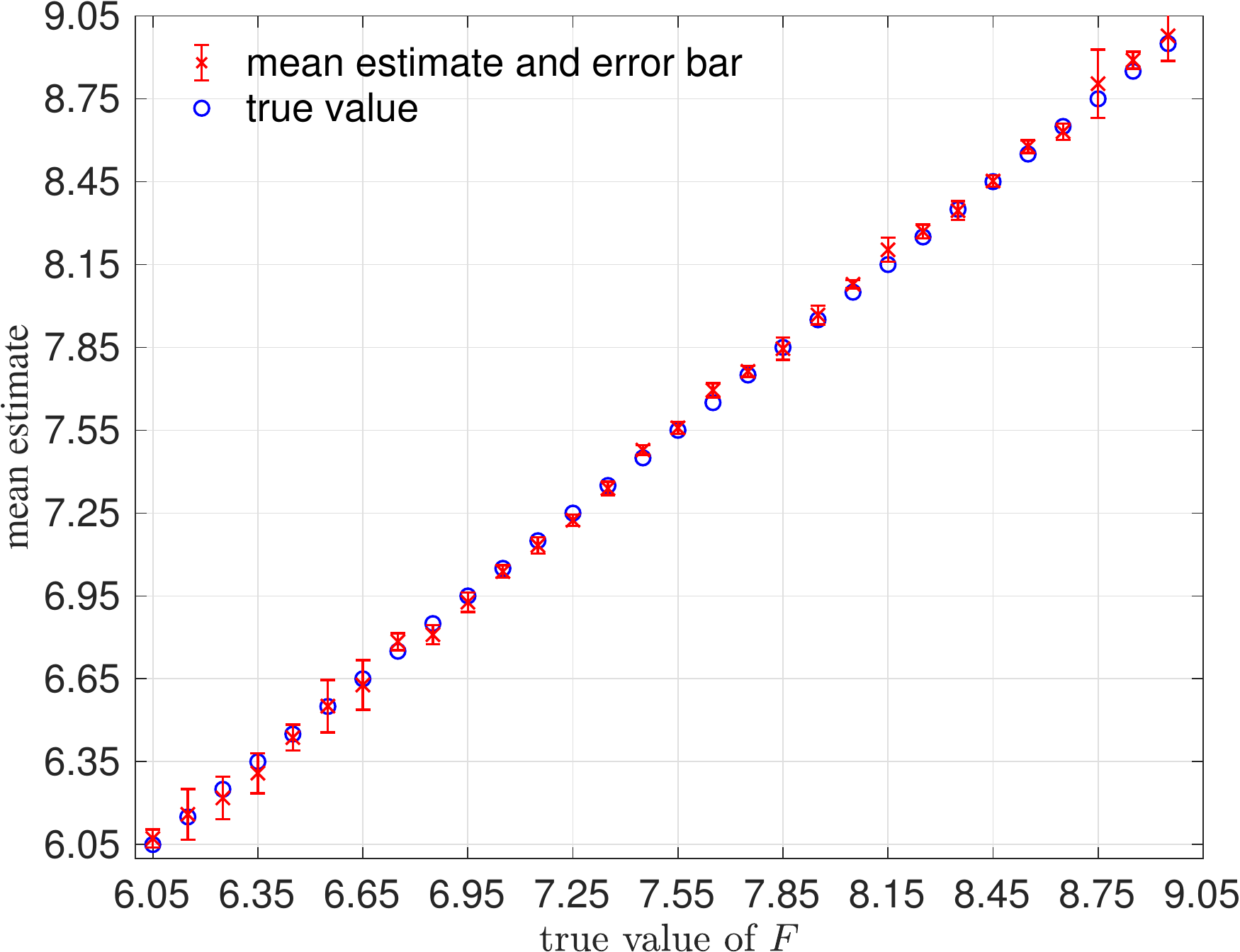}
\caption{(Color online) Mean error estimates and error bars for various true values of $F$ that are not in the training parameters.}
\label{Fig9L96MCMC}
\end{figure*}

\section{\label{sec:conclus}Conclusion}

We have developed a framework of a parameter estimation approach where MCMC was
employed with a nonparametric likelihood function. Our approach approximated the
likelihood function using the kernel embedding of conditional distribution
formulation based on RKWHS. By analyzing the error estimation in Theorem \ref%
{thm:error1}, we have verified the validity of our RKWHS representation of
the conditional density as long as $p(\mathbf{y}|\bm{\theta
}_{j})\in \mathcal{H}_{q^{-1}} \left( \mathcal{N}\right) $ induced by the basis in ${L}^2(\mathcal{N},q)$
and Var$_{\mathbf{Y}|%
\bm{\theta }_{j}}\left[ \psi _{k}(\mathbf{Y})\right] $ is finite. Furthermore,
the analysis suggests that if the weight $q$ is chosen to be the sampling density of
the data, the Var$_{\mathbf{Y}|%
\bm{\theta }_{j}}\left[ \psi _{k}(\mathbf{Y})\right] $ is always finite. This justifies the
use of Variable Bandwidth Diffusion Maps (VBDM) for estimating the data-driven basis functions of
the Hilbert space weighted by the sampling density on the data manifold.

We have demonstrated the proposed approach with four numerical examples. In the first example, where the dimension of the data manifold was exactly the dimension of the ambient space, $d=n$, the RKHS representation with VBDM basis yielded a parameter estimate as accurate as using other analytic basis representation. However, in the examples where the dimension of the data manifold was strictly less than the dimension of the ambient space, $d<n$, only VBDM representation could provide more accurate estimation of the true parameter value. We also found that VBDM representation produced mean estimates that were robustly accurate (with accuracies that were comparable to the direct MCMC) on various observation configurations where the NISP was not accurate. This numerical comparison was based on using only eight model evaluations, which can be done in parallel for both VBDM and NISP, whereas the direct MCMC involved 4000 sequential model evaluations. Finally, we demonstrated robust accurate parameter estimation on an example where the analytic likelihood function was intractable and computationally demanding, even if it became available. Most importantly, this result was based on training on a wide parameter domain that included different chaotic dynamical behaviors.

From our numerical experiments, we conclude that the proposed nonparametric representation was advantageous in any of these configurations: (1) when the parametric likelihood function was not known, such as in Example IV; (2) when the observation time stamp was long (such as in Example II or for large $sT$ in Example III and Example IV). Ultimately, the only real advantage of this method (as a surrogate model) was when the direct MCMC or ABC, which require sequential model evaluation, was computationally not feasible.

While the theoretical and numerical results were encouraging as a proof the concept for using the VBDM representation in many other parameter estimation applications, there were still practical limitations that need to be overcome. As in the other surrogate modeling approaches, one needs to have knowledge of the feasible domain for the parameters. Even when the parameter domain is given and wide, it is practically not feasible to generate training dataset by evaluating the model on the specified training grid points on this domain when the dimension of the parameter space is large (e.g., order 10), even if the Smolyak sparse grid is used. One possible way to simultaneously overcome these two issues is to use ``crude'' methods, such as ensemble Kalman filtering or smoothing, to obtain the training parameters. We refer to such a method as ``crude'' since the parameter estimation with ensemble Kalman filtering is sensitive to the initial conditions, especially when the persistent model is used as the dynamical model for the parameters \cite{harlim2018}. However, with such crude methods, we can at least obtain a set of parameters that reflect the observational data, instead of specifying training parameters uniformly or in a random fashion, which can lead to unphysical training parameters. Another issue that arises in the VBDM representation is the expensive computational cost when the amount of data $MN$ is large. When the dimension of the {observations} is low (as in the examples in this paper), the data reduction technique described in Appendix \ref{subsec:data_clus} is sufficient. For larger dimensional problems, a more sophisticated data reduction is needed.
Alternatively, one can explore representations using other orthonormal data-driven basis, such as the QR factorized basis functions as a less expensive alternative to the eigenbasis \cite{Harlim2017diffusion}.

\section*{Acknowledgments}
This research of is partially supported by the ONR Grant N00014-16-1-2888. J.H would also acknowledge supports from the NSF Grant DMS-1619661. 

\appendix

\section{\label{subsec:data_clus}Data {reduction}}

When $MN$ is very large, the VBDM algorithm
becomes numerically expensive since it involves solving an eigenvalue problem of matrix size $MN\times MN$.
Notice that the number of training parameters $M$ grows {exponentially as a function of}
 the dimension of parameter, $m$, if well-sampled uniformly distributed training parameters are used. To overcome this
large training data problem, we {employ an empirical} data reduction method
to reduce the original $MN$ training data points $\left\{ \mathbf{y}%
_{i,j}\right\} _{j=1,\ldots ,M}^{i=1,\ldots ,N}$\ to a small $B\left( \ll
MN\right) $ number of training data points {yet preserving}
the sampling density $\overline{q}\left( \mathbf{y}\right) $ in (\ref%
{Eqn:q_bar_y}). Subsequently, we apply the VBDM algorithm on these reduced
training data points. {It is worthwhile to mention that this data reduction method
is numerically applicable for low-dimensional dataset although in the following we will introduce this reduction method
for any $n$-dimensional dataset.}

The basic idea of our method is that we first cluster the training dataset $%
\left\{ \mathbf{y}_{i,j}\right\} _{j=1,\ldots ,M}^{i=1,\ldots ,N}$\ into $B$
number of boxes and then take the average of data points in each box as a
reduced training data point. First, we cluster the training data $\left\{
\mathbf{y}_{i,j}\right\} $,\ based on the ascending order of the $1^{\text{st%
}}$ coordinate of the training data, $\left\{ y_{i,j}^{1}\right\} $, into $%
B^{1}$ number of groups such that each group has the same number $\left(
=MN/B^{1}\right) $ of data points. After the first clustering, we obtain $%
B^{1}$ groups with each group denoted by $G_{k_{1}}^{1}$ for $k_{1}=1,\ldots
,B^{1}$. Here, the super-index $1$ denotes the first clustering and the
sub-index $k_{1}$ denotes the $k_{1}^{\text{th}}$ group. Second, for each group $%
G_{k_{1}}^{1}$, we cluster the training data $\left\{ \mathbf{y}%
_{i,j}\right\} $ inside $G_{k_{1}}^{1}$,\ based on the ascending order of
the second coordinate of the training data, $\left\{
y_{i,j}^{2}\right\} $, into $B^{2}$ number of groups such that each group
has the same number $\left( =MN/B^{1}B^{2}\right) $\ of data points. After
the second clustering, we obtain totally $B^{1}B^{2}$ groups with each group
denoted by $G_{k_{1}k_{2}}^{2}$ for $k_{1}=1,\ldots ,B^{1},$ and $%
k_{2}=1,\ldots ,B^{2}$. We can operate such clustering $n$ times, where $n$
is the dimension of the observation $\mathbf{y}$ ambient space. After $n$ times
clustering, we obtain $B\equiv \prod_{s=1}^{n}B^{s}$ groups with each group
denoted by $G_{k_{1}k_{2}\ldots k_{n}}^{n}$ with $k_{s}=1,\ldots ,B^{s},$
for all $s=1,\ldots ,n$. Each group is a box [see Figure \ref{Fig1datasort}
for example]. After taking the average of the data points in each box $%
G_{k_{1}k_{2}\ldots k_{n}}^{n}$, we obtain $B$ number of reduced
training data points. In the remainder of this paper, we denote these $B$
number of reduced training data points by $\left\{ \overline{\mathbf{y}}%
_{b}\right\} _{b=1,\ldots ,B}$ and refer to them as the box-averaged data
points. Intuitively, this algorithm partitions the domain into hyperrectangle such that $P_{r}(\overline{\mathbf{y}}\in
G_{k_{1}\ldots k_{n}}^{n})\approx 1/B$. {Note that the idea of our data reduction method is analogous to that of multivariate $k$-nearest neighbor density estimates \cite{loftsgaarden1965nonparametric,mack1979multivariate}.} The error estimation can be found in the Refs. \cite{loftsgaarden1965nonparametric,mack1979multivariate}.

To examine whether the distribution of these box-averaged data points is
close to the sampling density $\overline{q}\left( \mathbf{y}\right) $ of the
original dataset, we apply our reduction method to several numerical examples in the
following. Figure \ref{Fig1datasort}a shows the reduction
result for few number ($=64$) of uniform data in $[0,1]\times \lbrack 0,1]$.
Here, $B^{1}=4$ and $B^{2}=4$\ so that there are total $B=16$ boxes and
inside each box there are $4$ uniform data points (blue circles).
It can be
seen that the box-averaged data points (red circles) are far away from the
well-sampled uniform data points (cyan crosses). However, when uniform data
points (blue circles) increase to a large number ($=6400$), box-average data
points (red circles) are very close to well-sampled uniform data points
(cyan crosses) as shown in Figure \ref{Fig1datasort}b. This suggests that
these box-averaged data points nearly admit the uniform distribution when there are a large number of original
uniform data points. Figures \ref{Fig1datasort}c and d show the comparison
of the kernel density estimates applied on the box-averaged data for different $B$ for
standard normal distribution and the distribution proportional to $\exp %
\left[ -\left( -X_{1}^{2}+X_{1}^{3}+X_{1}^{4}\right) \right] $,
respectively. It can be seen that the reduced box-averaged data points
nearly preserve the distribution of the original large dataset, $N=640,000$.

\begin{figure*}[ptbh]
\centering \includegraphics[scale=1.3]{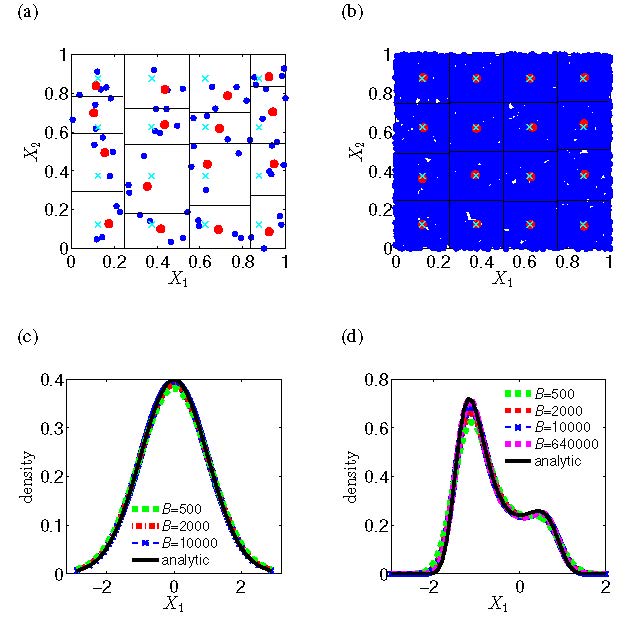} %
\caption{(Color online) Data reduction for ({\bf a}) few number ($=64$) of
uniformly distributed data, and ({\bf b}) many number ($=6400$) of uniformly
distributed data. The $64$ blue circles correspond to uniformly distributed data, $16$
cyan crosses correspond to well-sampled uniformly distributed data, and $16$
red circles correspond to box-averaged data. Boxes are partitioned by
horizontal and vertical black lines. The vertical black lines correspond to the first
clustering and the horizontal lines correspond to the second clustering.
{Panels ({\bf c}) and ({\bf d}) display} the comparison
of kernel density estimates on the box-averaged data for different number $B$
for (c) standard normal distribution, and (d) the distribution proportional
to $\exp[-(-X_{1}^{2}+X_{1}^{3}+X_{1}^{4})]$, respectively. For comparison, also plotted
is the analytic probability density of the distribution. The total number of the points is $640,000$. It can be seen that
the reduced box-averaged data points nearly preserve the distribution of the
original dataset. }
\label{Fig1datasort}
\end{figure*}

When $MN$ is very large, the VBDM algorithm for the construction of
data-driven basis functions [Procedure \hyperlink{1B}{(\textit{1-B})}] can
be outlined as follows. We first use our data reduction method to obtain $%
B\left( \ll MN\right) $ number of box-averaged data $\left\{ \overline{%
\mathbf{y}}_{b}\right\} _{b=1,\ldots ,B}\subseteq \mathcal{N}\subseteq
\mathbb{R}^{n}$ with sampling density $\overline{q}\left( \mathbf{y}\right)
\approx \frac{1}{M}\sum_{j=1}^{M}p(\mathbf{y}|\bm{\theta }_{j})$ in (\ref%
{Eqn:q_bar_y}). The sampling density $\overline{q}$ is estimated at the
box-averaged data $\overline{\mathbf{y}}_{b}$ using all the box-averaged
data points $\left\{ \overline{\mathbf{y}}_{b}\right\} _{b=1,\ldots ,B}$\ by
a kernel density estimation method. Implementing the VBDM algorithm, we can
obtain orthonormal eigenvectors $\overline{\bm{\psi }}_{k}\in \mathbb{R}^{B}$%
, which are discrete estimates of the eigenfunctions $\overline{\psi }%
_{k}\left( \mathbf{y}\right) \in L^{2}\left( \mathcal{N},\overline{q}\right)
$. The $b^{\text{th}}$ component of the eigenvector $\overline{\bm{\psi }}_{k}$ is a
discrete estimate of the eigenfunction $\overline{\psi }_{k}\left( \overline{%
\mathbf{y}}_{b}\right) $, evaluated at the box-averaged data point $%
\overline{\mathbf{y}}_{b}$. Due to the dramatic reduction of the training
data, the computation of these eigenvectors $\overline{\bm{\psi }}_{k}\in
\mathbb{R}^{B}$ becomes much cheaper than the computation of the
eigenvectors $\overline{\bm{\psi }}_{k}\in \mathbb{R}^{MN}$ using the
original training dataset $\left\{ \mathbf{y}_{i,j}\right\} _{j=1,\ldots
,M}^{i=1,\ldots ,N}$. Then we can obtain a discrete representation (\ref%
{Eqn:pyb_repre}) of the conditional density at the box-averaged data points $%
\overline{\mathbf{y}}_{b}$, $\widehat{p}\left( \overline{\mathbf{y}}_{b}|%
\bm{\theta }\right) $.

\section{\label{sec:dsmalN_KDE}Additional Results on Example~{\hyperref[Ex2torus]{II} }}

{In this section, we discuss the intrinsic Fourier representation constructed for numerical comparisons in Example~{\hyperref[Ex2torus]{II} } and provide more {detailed} discussion on the numerical results.}

We first discuss the construction for the intrinsic Fourier representation
of the {true conditional density, $p\left( \mathbf{x}|D\right) $,} defined with respect to the volume form inherited by $\mathcal{N}$ from the ambient space $\mathbb{R}^n$ for
the system on the torus (\ref{Eqn:sde_torus}) in Example \hyperlink{Ex2}{ II}. By noticing the embedding (\ref{Eqn:xyz}) of $\left( \theta ,\phi \right) $
in $\mathbf{x}\equiv \left( x,y,z\right) $, we can obtain the following
equality,

\begin{eqnarray}
1 &=&\int_{\mathcal{N}}p\left( \mathbf{x}|D\right) dV\left( \mathbf{x}\right)
=\int_{[0,2\pi )^{2}}p\left( \mathbf{x}\left( \theta ,\phi \right)
|D\right) \left\vert \mathbf{x}_{\theta }\times \mathbf{x}_{\phi
}\right\vert d\theta d\phi \notag \\
&\equiv& \int_{[0,2\pi )^{2}}p_{\text{IC}}\left( \theta ,\phi
|D\right) d\theta d\phi , \label{Eqn:pic_trans}
\end{eqnarray}%
where $dV\left( \mathbf{x}\right) =\left\vert \mathbf{x}_{\theta }\times
\mathbf{x}_{\phi }\right\vert d\theta d\phi $ is the volume form, $p_{\text{IC}}$
denotes the {true conditional density as a function of the intrinsic coordinates, $\left( \theta ,\phi \right) $. Assuming that $p_{\text{IC}}\left( \theta ,\phi
|D\right)\in \mathcal{H}([0,2\pi)^2) \subset L^2([0,2\pi)^2)$,
and the relation in (\ref{Eqn:pic_trans}), we can construct the
intrinsic Fourier representation as follows,}
\begin{equation}
\widehat{p}\left( \mathbf{x}|D\right) =\frac{\widehat{p}_{\text{IC}}\left(
\theta ,\phi |D\right) }{\left\vert \mathbf{x}_{\theta }\times
\mathbf{x}_{\phi }\right\vert }, \label{Eqn:rel_pic_px}
\end{equation}%
where $\widehat{p}_{\text{IC}}\left( \theta ,\phi |D\right) $ is a
RKWHS representation (\ref{Eqn:pyb_repre}) of the conditional density $%
p_{\text{IC}}\left(\theta ,\phi |D\right) $ with a set of
orthonormal Fourier basis functions $\psi _{k}\left(\theta ,\phi
\right) \in L^{2}\left( [0,2\pi )^{2}\right) $. Here, $\psi
_{k}\left(\theta ,\phi \right) $ are formed by the tensor
product of two sets of orthonormal Fourier basis functions $\Big\{1$, $%
\left\{ \sqrt{2}\cos \left( m\theta \right) \right\}$
, $\left\{ \sqrt{2}\sin \left( m\theta \right) \right\} \Big\}$ and $\Big\{1$, $\left\{ \sqrt{2}\cos \left( m\phi \right)
\right\}$, $\left\{ \sqrt{2}\sin \left( m\phi \right)
\right\}\Big\}$ for $m\in \mathbb{N}^{+}$. Note that for intrinsic Fourier
representation, we need to know the embedding (\ref{Eqn:xyz}) and know the
data of $\left( \theta ,\phi \right) $ in intrinsic coordinates for
training, {which is available for this example.} Nevertheless, for Hermite, Cosine, and VBDM representations, we
only need to know the observation data $\mathbf{x}$ for training.

The convergence of $\widehat{p}\left( \mathbf{x}|D\right) $ to the true density can be explained as follows.
For the system (\ref{Eqn:sde_torus}) in the intrinsic coordinates $\left(
\theta ,\phi \right) $, {where} $p_{\text{IC}}\left( \theta ,\phi
 |D\right) \in \mathcal{H}\left( [0,2\pi )^{2}\right) $ for all parameter $%
D $, the statistics Var$_{\left( \theta ,\phi \right) |D}\left[ \psi _{k}\left( \theta
,\phi \right) \right] $ are bounded for all $D$
and all $k\in \mathbb{N}^{+}$, {by the compactness of $[0,2\pi )^{2}$ and the uniform boundedness of $\psi _{k}\left( \theta
,\phi \right)$ for all $k$}. According to Theorem \ref{thm:error1}, we can obtain the convergence of
the representation $\widehat{p}_{\text{IC}}\left( \theta ,\phi
|D\right) $. Then by noticing the smoothness of $\left\vert \mathbf{x}%
_{\theta }\times \mathbf{x}_{\phi }\right\vert $ on the torus, we can obtain
the convergence of the intrinsic Fourier representation $\widehat{p}\left(
\mathbf{x}|D\right) $\ in (\ref{Eqn:rel_pic_px}).

Next, we give an intuitive explanation for the reason why in the regime $d<n$,
VBDM representation can provide a good approximation whereas Hermite and
Cosine representations cannot. Essentially, the VBDM representation
uses basis functions of the weighted Hilbert space of functions defined with respect to a volume form $\widetilde{V}$ that is conformally equivalent to the volume form $V$ that is inherited by the data manifold $\mathcal{N}$ from the ambient space, $\mathbb{R}^n$. That is, the weighted Hilbert space,
 $L^2(\mathcal{N},\overline{q}^{-1})$, means,
\begin{equation}
L^2(\mathcal{N},\overline{q}^{-1}) = \Big\{ f : \int_{\mathcal{N}} | f(\mathbf{x})|^2 d\widetilde V(\mathbf{x})<\infty\Big\},\nonumber
\end{equation}
where $d\widetilde{V}(\mathbf{x}) = \overline{q}(\mathbf{x})^{-1} dV(\mathbf{x})$ denotes the volume form that is conformally changed by the sampling density $\overline{q}$. We should point out that the key point of the diffusion maps algorithm \cite{Coifman2006ACHA} is to introduce an appropriate normalization to avoid biased in the geometry induced by the sampling density $\overline{q}$ when the data are not sampled according to the Riemannian metric inherited by $\mathcal{N}$ from the ambient space $\mathbb{R}^n$. Furthermore, the orthonormal basis functions of the Hilbert space $L^2(\mathcal{N},\overline{q}^{-1})$ are the eigenfunctions of the adjoint (with respect to $L^2(\mathcal{N})$) of the operator, $\mathcal{L}=\nabla\log(\overline{q})\cdot\nabla + \Delta$, that is constructed by the VBDM algorithm. Incidentally, the adjoint operator $\mathcal{L}^*$ is the Fokker-Planck operator of a gradient system forced by stochastic noises. The point is that this adjoint operator takes density functions of the weighted Hilbert space $L^2(\mathcal{N},\overline{q}^{-1})$. Since the Hilbert space $L^{2}\left( \mathcal{N},\overline{q}^{-1}\right)$ is a function space of some Fokker-Planck operator that acts on densities defined with respect to the geometry of data, then representing the conditional density with basis functions of the weighted Hilbert space $L^{2}\left( \mathcal{N},\overline{q}^{-1}\right)$ is a natural choice. Thus, the error estimation in Theorem \ref{thm:var2}\ is valid in controlling the error of the estimate.

Next, we will show that the representation of the true density {$p_{\text{EX}}$} in the ambient space $\mathbb{R}^3$ is not a function of
$\mathcal{H}_{q^{-1}}(\mathcal{R})\subset L^{2}\left( \mathcal{R},q^{-1}\right)$, where for Hermite representation $\mathcal{R}$ is $\mathbb{R}^{3}$ and $q$ is a normal distribution, and for Cosine representation $\mathcal{R}$ is a hyperrectangle containing the torus and $q$ is a uniform distribution. Recall that the torus is parametrized by:
\begin{equation}
\mathbf{x}\equiv \left( x,y,z\right) =\left( \left( 2+r\sin \left( \theta
\right) \right) \cos \left( \phi \right) ,\left( 2+r\sin \left( \theta
\right) \right) \sin \left( \phi \right) ,r\cos \left( \theta \right)
\right) , \label{Eqn:3dtorus}
\end{equation}%
where $\theta $, $\phi $ are angles which make a full circle, and $r$ is the
radius of the tube known as the minor radius.
All observation
data are located on the torus with $r=1$. Then, the generalized
conditional density $p(r,\theta,\phi|D)$\ in $\left( r,\theta,\phi \right)$ coordinate can be defined using the Dirac delta function as follows,%
\begin{equation}
p\left(r, \theta ,\phi |D\right) =p_{\text{IC}}\left( \theta ,\phi
|D\right) \delta \left( r-1\right) , \label{Eqn:p123_r}
\end{equation}
where $p_{\text{IC}}$, defined in (\ref{Eqn:pic_trans}), denotes the conditional density function in the intrinsic coordinate, $\left( \theta ,\phi \right) $, and $%
\delta $ is the Dirac delta function.
After coordinate transformation, the density $p_{\text{EX}}:\mathbb{R}^3\to R$
can be obtained as
\begin{equation}
p_{\text{EX}}\left( \mathbf{x}|D\right) =\frac{p\left( r,\theta
,\phi |D\right) }{J}= \frac{p_{\text{IC}}\left( \theta ,\phi
|D\right) \delta \left( r-1\right) }{J}, \label{Eqn:phc}
\end{equation}
where $J$ is the Jacobian determinant $\det \left[ \partial
\left( x,y,z\right) /\partial \left( r,\theta ,\phi \right) \right] $.
It can be
examined that $p_{\text{EX}}\left( \mathbf{x}|D\right) $\ is a generalized
conditional density, that is, $\int_{\mathbb{R}^{3}}p_{\text{EX}}(
\mathbf{x}|D) d\mathbf{x}=1$.
Now it can be
clearly seen that due to the Dirac delta function $\delta
\left( r-1\right) $ in (\ref{Eqn:phc}), the density $p_{\text{EX}}(\mathbf{x}|D)$\ is no longer in
the weighted Hilbert space, $p_{\text{EX}}(\mathbf{x}|D)\notin \mathcal{H}_{q^{-1}}\left(
\mathcal{R}\right) $.
Consequently, the error estimation in
Theorem \ref{thm:error1}\ becomes invalid in controlling the error of the
conditional density.

Here, the key point
is that for Cosine
and Hermite representations, the volume integral is with respect to $d\mathbf{x}$. The complete basis functions are obtained from
tensor product of three sets of basis functions in $(x,y,z)$ coordinates. In
order to represent a conditional density function $p_{\text{EX}}(\mathbf{x}|D)$ defined only on an intrinsically 2D torus
domain, theoretically infinite number of basis functions are needed.
However, numerically only finite number of basis functions can be used. Then, the density $p_{\text{EX}}(\mathbf{x}|D)$ in (\ref{Eqn:phc})
cannot be well approximated for Hermite and Cosine representations (\ref%
{Eqn:pyb_repre}). Moreover, if only finite
number of Hermite or Cosine basis functions are used for representations,
typically Gibbs phenomenon can be observed, i.e., the Dirac delta function $\delta \left( r-1\right) $\ in (\ref{Eqn:phc})\
will be approximated by a function having a single tall spike at $%
r=1$ with some oscillations at two sides along the $r$ direction. On the other hand,
the data-driven basis functions obtained via the diffusion maps algorithm are smooth functions defined on the data manifold $\mathcal{N}$.
Therefore, while the Gibbs phenomenon still occurs in this spectral expansion, it is due to finite truncation in representing a positive smooth functions (densities) on the data manifold, and not due to the singularity
that occurs in the ambient direction as in (\ref{Eqn:phc}).


\end{document}